\let\coloneqq\relax
\newcolumntype{x}[1]{>{\centering\arraybackslash}p{#1}}
\def\thmhead@plain#1#2#3{%
  \thmname{#1}\thmnumber{\@ifnotempty{#1}{ }\@upn{#2}}%
  \thmnote{ {\the\thm@notefont#3}}}
\let\thmhead\thmhead@plain
\newcommand{\bb}{\begin{equation}\begin{aligned}\hspace{0pt}}
\newcommand{\bbb}{\begin{equation*}\begin{aligned}}
\newcommand{\ee}{\end{aligned}\end{equation}}
\newcommand{\eee}{\end{aligned}\end{equation*}}
\newcommand*{\coloneqq}{\mathrel{\vcenter{\baselineskip0.5ex \lineskiplimit0pt \hbox{\scriptsize.}\hbox{\scriptsize.}}} =}
\newcommand{\id}{\mathds{1}}
\newcommand{\R}{\mathds{R}}
\newcommand{\N}{\mathds{N}}
\newcommand{\C}{\mathds{C}}
\newcommand{\norm}[1]{\lVert #1 \rVert}
\newcommand{\NORM}[1]{\left\lVert #1 \right\rVert}
\newcommand{\abs}[1]{\lvert #1 \rvert}
\newcommand{\Tr}{\mathrm{Tr}}
\DeclareMathAlphabet{\pazocal}{OMS}{zplm}{m}{n}
\DeclareMathOperator{\Id}{Id}
\newcommand{\lsmatrix}{\left(\begin{smallmatrix}}
\newcommand{\rsmatrix}{\end{smallmatrix}\right)}
\newcommand*\rel@kern[1]{\kern#1\dimexpr\macc@kerna}
\newcommand*\widebar[1]{%
  \begingroup
  \def\mathaccent##1##2{%
    \rel@kern{0.8}%
    \overline{\rel@kern{-0.8}\macc@nucleus\rel@kern{0.2}}%
    \rel@kern{-0.2}%
  }%
  \macc@depth\@ne
  \let\math@bgroup\@empty \let\math@egroup\macc@set@skewchar
  \mathsurround\z@ \frozen@everymath{\mathgroup\macc@group\relax}%
  \macc@set@skewchar\relax
  \let\mathaccentV\macc@nested@a
  \macc@nested@a\relax111{#1}%
  \endgroup
}
\tikzset{meter/.append style={draw, inner sep=10, rectangle, font=\vphantom{A}, minimum width=30, line width=.8, path picture={\draw[black] ([shift={(.1,.3)}]path picture bounding box.south west) to[bend left=50] ([shift={(-.1,.3)}]path picture bounding box.south east);\draw[black,-latex] ([shift={(0,.1)}]path picture bounding box.south) -- ([shift={(.3,-.1)}]path picture bounding box.north);}}}
\tikzset{roundnode/.append style={circle, draw=black, fill=gray!20, thick, minimum size=10mm}}
\tikzset{squarenode/.style={rectangle, draw=black, fill=none, thick, minimum size=10mm}}
\definecolor{Blues5seq1}{RGB}{239,243,255}
\definecolor{Blues5seq2}{RGB}{189,215,231}
\definecolor{Blues5seq3}{RGB}{107,174,214}
\definecolor{Blues5seq4}{RGB}{49,130,189}
\definecolor{Blues5seq5}{RGB}{8,81,156}
\definecolor{Greens5seq1}{RGB}{237,248,233}
\definecolor{Greens5seq2}{RGB}{186,228,179}
\definecolor{Greens5seq3}{RGB}{116,196,118}
\definecolor{Greens5seq4}{RGB}{49,163,84}
\definecolor{Greens5seq5}{RGB}{0,109,44}
\definecolor{Reds5seq1}{RGB}{254,229,217}
\definecolor{Reds5seq2}{RGB}{252,174,145}
\definecolor{Reds5seq3}{RGB}{251,106,74}
\definecolor{Reds5seq4}{RGB}{222,45,38}
\definecolor{Reds5seq5}{RGB}{165,15,21}
\pgfplotsset{width=10cm,compat=1.9}
\newcommand{\RN}[1]{\textup{\uppercase\expandafter{\romannumeral#1}}}
\newcommand{\symspace}[2]{\mathrm{Sym}_{k}(\bb{C}^d)}
\newcommand{\symproj}[2]{P_{\mathrm{Sym}}^{(d,k)}}
\newtheorem{proposition}{Proposition}
\newtheorem{theorem}{Theorem}
\newtheorem{lemma}{Lemma}
\newtheorem{remark}{Remark}
\newtheorem{fact}{Fact}
\newtheorem{problem}{Problem}
\numberwithin{lemma}{section}
\numberwithin{proposition}{section}
\numberwithin{definition}{section}
\numberwithin{theorem}{section}
\numberwithin{remark}{section}
\numberwithin{equation}{section}
\numberwithin{fact}{section}
\DeclareRobustCommand{\orcidicon}{%
	\begin{tikzpicture}
	\draw[lime, fill=lime] (0,0) 
	circle [radius=0.16] 
	node[white] {{\fontfamily{qag}\selectfont \tiny ID}};
	\draw[white, fill=white] (-0.0625,0.095) 
	circle [radius=0.007];
	\end{tikzpicture}
	\hspace{-2mm}
}
\xdef\csname orcid\x\endcsname{\noexpand\href{https://orcid.org/\csname orcidauthor\x\endcsname}{\noexpand\orcidicon}}
\newcommand{\eps}{\varepsilon}
\newcommand{\squeezingparameter}{z}
\title{\textbf{Efficient learning of bosonic Gaussian unitaries}}
\author{
Marco Fanizza\thanks{Inria, Télécom Paris - LTCI, Institut Polytechnique de Paris. Email: \href{matilto:marco.fanizza@inria.fr}{\texttt{marco.fanizza@inria.fr}}} 
\quad Vishnu Iyer\thanks{Department of Computer Science, The University of Texas at Austin. Email: \href{mailto:vishnu.iyer@utexas.edu}{\texttt{vishnu.iyer@utexas.edu}}}
\quad Junseo Lee\thanks{Institute of Computer Technology, Seoul National University. Email: \href{mailto:junseolee@fas.harvard.edu}{\texttt{junseolee@fas.harvard.edu}}}
\quad Antonio A. Mele\thanks{Dahlem Center for Complex Quantum Systems, Freie Universitat Berlin. Email: \href{mailto:a.mele@fu-berlin.de}{\texttt{a.mele@fu-berlin.de}}}
\quad Francesco A. Mele\thanks{NEST, Scuola Normale Superiore and Istituto Nanoscienze. Email: \href{mailto:francesco.mele@sns.it}{\texttt{francesco.mele@sns.it}}}}
\date{\vspace{-4ex}}
\begin{document}
\maketitle

\begin{abstract}
Bosonic Gaussian unitaries are fundamental building blocks of central continuous-variable quantum technologies such as quantum-optic interferometry and bosonic error-correction schemes.
In this work, we present the first time-efficient algorithm for learning bosonic Gaussian unitaries with a rigorous analysis. Our algorithm produces an estimate of the unknown unitary that is accurate to small worst-case error, measured by the physically motivated energy-constrained diamond distance.
Its runtime and query complexity scale polynomially with the number of modes, the inverse target accuracy, and natural energy parameters quantifying the allowed input energy and the unitary’s output-energy growth. The protocol uses only experimentally friendly photonic resources—coherent and squeezed probes, passive linear optics, and heterodyne/homodyne detection. We then employ an efficient classical post-processing routine that leverages a \emph{symplectic regularization} step to project matrix estimates onto the symplectic group.
In the limit of unbounded input energy, our procedure attains arbitrarily high precision using only $2m+2$ queries, where $m$ is the number of modes. To our knowledge, this is the first provably efficient learning algorithm for a multiparameter family of continuous-variable unitaries.
\end{abstract}

\newpage
\tableofcontents

\newpage
\section{Introduction}

Quantum learning theory~\cite{anshu2024survey} investigates the fundamental question of how efficiently one can extract information from quantum systems under realistic resource constraints. In recent years, there has been tremendous progress in our understanding of tasks for learning quantum systems~\cite{chen2022exponential,chen2023does,huang2022quantum,chen2025information,anshu2020sample,huang2022provably,huang2020predicting,huang2024learning,chen2024predicting,chen2024optimal,chen2025efficient}.

In particular, in the discrete-variable (DV) setting, a wide range of problems---from quantum state tomography~\cite{o2016efficient,haah2017sample} and property testing~\cite{montanaro2013survey,o2015quantum,o2017efficient,bubeck2020entanglement,chen2022toward,chen2022tight} to channel learning~\cite{mohseni2008quantum,chung2021sample,haah2023query}, as well as many other related tasks---have been systematically studied. The surveys~\cite{arunachalam2017guest,anshu2024survey} provide thorough overviews of advancements in these areas.

In contrast, the continuous-variable (CV) regime has only recently begun to attract substantial attention within quantum learning theory. Continuous-variable systems~\cite{Braunstein-review,andersen2010continuous,BUCCO} are indispensable in both experimental and theoretical quantum information science, yet their infinite-dimensional structure poses unique mathematical and statistical challenges. Recent works on learning CV systems have begun to chart this territory~\cite{mele2024learning, bittel2024optimalestimatestracedistance, bittel2025energyindependenttomographygaussianstates, fanizza2024efficienthamiltonianstructuretrace, gandhari_precision_2023, becker_classical_2023, oh2024entanglementenabled, liu2025quantumlearningadvantagescalable, coroi2025exponentialadvantagecontinuousvariablequantum, fawzi2024optimalfidelityestimationbinary, wu2024efficient, mobus2023diss, upreti2024efficientquantumstateverification, zhao2025complexityquantumtomographygenuine,symplectity_rank}, with particular progress in quantum state tomography~\cite{mele2024learning, bittel2024optimalestimatestracedistance, bittel2025energyindependenttomographygaussianstates, fanizza2024efficienthamiltonianstructuretrace}. These results illustrate both the potential and the limitations of existing techniques, while also underscoring the need for new approaches adapted to the unbounded nature of CV Hilbert spaces.

Moving beyond the learning of states, a natural and pressing direction is the study of quantum process learning. Just as quantum state tomography provides a window into the structure of states, quantum process tomography~\cite{mohseni2008quantum,o2004quantum,scott2008optimizing,haah2023query} allows us to fully characterize the transformations that act upon them. As one might expect, learning arbitrary continuous-variable processes is extremely expensive~\cite{mele2024learning}, and attention must be restricted to subclasses with nice properties for computational efficiency to be attainable.

Among all processes, bosonic Gaussian unitaries stand out as the most fundamental~\cite{BUCCO}. They arise ubiquitously in quantum optics~\cite{PhysRevLett.121.160502, PhysRevLett.86.5870,q_sensing_cv,BUCCO} and interferometry~\cite{Schnabel2010,SGravi2} as passive linear transformations, such as beam splitters and phase shifters, which preserve photon number while redistributing modes, and also comprise active linear optics, i.e. transformations employing squeezing. Because these operations form the backbone of optical experiments, the ability to benchmark and learn them with statistical rigor is of central importance. 

The physical relevance of Gaussian unitaries is profound: they describe linear transformations of position and momentum operators, which lie at the heart of continuous-variable quantum mechanics. In quantum computing, Gaussian unitaries form the foundation of many applications~\cite{Gottesman2001,Mirrahimi_2014,Ofek_nature2016,error_corr_boson,Guillaud_2019}, e.g.~Gaussian boson sampling~\cite{hamilton2017gaussian,kruse2019detailed}, which is a leading candidate for demonstrating quantum advantage. Even small inaccuracies in modeling or calibrating these transformations can drastically change the output distribution, underscoring the importance of precise learning algorithms. In the context of fault tolerance, Gaussian transformations are indispensable in the Gottesman–Kitaev–Preskill (GKP) code~\cite{Gottesman2001}, where they enable error correction and manipulation of encoded logical information. More broadly, Gaussian unitaries pervade quantum communication protocols~\cite{Wolf2007,TGW,PLOB,Mele_2025,usenko2025continuousvariablequantumcommunication}, continuous-variable quantum cryptography~\cite{grosshans2002continuous,navascues2006optimality,ralph1999continuous}, quantum metrology~\cite{nichols2018multiparameter,fadel2024quantum,SGravi2,PhysRevLett.121.160502, PhysRevLett.86.5870,q_sensing_cv}, and analog quantum simulation~\cite{daley2022practical}. Their ubiquity makes it imperative to rigorously understand the statistical complexity of learning them.

The task of learning bosonic Gaussian \emph{states}, that is, states prepared by bosonic Gaussian unitaries, is fairly well studied. Recent work gives an algorithm to learn these states scaling polynomially in the number of modes and an energy constraint parameter~\cite{mele2024learning}. Subsequent work improved the polynomial dependence in the sample complexity of this learning algorithm~\cite{bittel2024optimalestimatestracedistance,fanizza2024efficienthamiltonianstructuretrace, bittel2025energyindependenttomographygaussianstates}. However, the closely related problem of developing a time-efficient algorithm to learn Gaussian unitaries remained wide open. In particular, while Gaussian states can now be learned under polynomial resources, the learnability of Gaussian \emph{unitaries} themselves, even with natural restrictions, remains largely unknown. In fact, essentially no theoretical complexity guarantees are currently established for this problem. This motivates the central question of our work: 
\begin{center}
    \emph{Can we design a computationally efficient algorithm for learning bosonic Gaussian unitaries \\ to small worst-case error measured by a physically motivated distance?}
\end{center}
In this work, we provide an affirmative answer to this question by designing algorithms with rigorous theoretical guarantees.

\subsection{Main results}
We establish that arbitrary multi-mode bosonic Gaussian unitaries can be efficiently learned, with performance guarantees given in terms of the \emph{energy-constrained diamond norm}. 
At a high level, the so-called energy-constrained diamond norm~\cite{VV-diamond,Shirokov2018,PLOB,EC-diamond}, denoted $\norm{\cdot}_{\diamond,\bar n}$, captures the maximal advantage in distinguishing two quantum channels when the input states are restricted to have mean photon number at most $\bar n$. More specifically, $\frac12\norm{\Phi_1-\Phi_2}_{\diamond,\bar n}$ is the optimal probability bias for correctly identifying an unknown channel, promised to be either $\Phi_1$ or $\Phi_2$ with equal prior probability, using a single query and only input states with mean photon number at the input of at most $\bar n$. This metric is the physically relevant analogue of the diamond norm~\cite{Aharonov1998,Sumeet_book} for continuous-variable systems, and provides a worst-case guarantee on the performance of any tomography or learning protocol subject to energy limitations. In fact, without such an energy constraint, the standard (unconstrained) diamond norm exhibits undesirable properties for continuous-variable quantum channels, causing it to lose the strong physical meaning it possesses in finite-dimensional systems~\cite{VV-diamond}. A formal definition of energy-constrained diamond norm is given in \cref{sec:energy-constrained-diamond-distance}.

We establish the first \emph{time-efficient} algorithm for learning bosonic Gaussian unitaries under this metric, providing rigorous, end-to-end guarantees on the number of channel uses and the computational complexity required to learn the target Gaussian unitary to a prescribed accuracy.
Recall that arbitrary Gaussian unitaries can be expressed as the product of a displacement operator $D_{\mathbf r}$ and a symplectic Gaussian unitary $U_S$.
Our procedure simultaneously estimates the symplectic component $S \in \mathrm{Sp}_{2m}(\R)$ and the displacement vector $\mathbf r \in \R^{2m}$ using only Gaussian resources: coherent and two-mode–squeezed input probes, heterodyne detection, and a deterministic ``symplectic regularization'' step that projects an arbitrary estimate to a valid symplectic matrix. In addition, for displacement learning we also present, in~\cref{sec:algo-without-entanglement}, an alternative algorithm that employs only single-mode squeezed states, and in~\cref{sec:no-activesq} we describe how to avoid active squeezing operations. Taken together, these features suggest that our algorithms are experimentally efficient and feasible for current photonic platforms.

\medskip
We state our main result informally below and give the full theorem in~\cref{thm:end-to-end-diamond}.

\begin{theorem}[(Informal, see~\cref{thm:end-to-end-diamond} for details)]\label{thm:informal-main}
Let $m\in\mathds{N}$, $z\ge 1$, $\bar n>0$, $\bar n_{\mathrm{in}}>0$, 
$\varepsilon\in(0,1)$, and $\delta\in(0,1)$ be known parameters. 
There exists a quantum algorithm that
\begin{itemize}
    \item \textbf{\emph{Given:}} black-box access to an unknown $m$-mode Gaussian unitary $G_{\mathbf r,S} = D_{\mathbf r} U_S$, where $D_{\mathbf r}$ is the displacement operator and $U_S$ is a symplectic Gaussian unitary specified by a symplectic matrix $S$ with operator norm 
    $\|S\|_\infty \le z$;
    
    \item \textbf{\emph{Using:}} a runtime (and number of queries to $G_{\mathbf r,S}$) of
    \begin{equation}
        \mathsf{poly}(m, z, \bar n, \bar n_{\mathrm{in}}, 1/\varepsilon, \log(1/\delta)),
    \end{equation}
    and only input states with mean photon number of at most $\bar n_{\mathrm{in}}$;
    
    \item \textbf{\emph{Outputs:}} estimators $\tilde{\mathbf r} \in \R^{2m}$ and $\tilde S \in \mathrm{Sp}_{2m}(\R)$ such that the corresponding Gaussian unitary channel $\tilde{\mathcal G} \coloneqq \mathcal D_{\tilde{\mathbf r}} \circ \mathcal U_{\tilde S}$ approximates the true channel $\mathcal G \coloneqq \mathcal D_{\mathbf r} \circ \mathcal U_S$ and satisfies
    \begin{equation}
        \Pr\left[
            \frac{1}{2}\norm{\tilde{\mathcal{G}} - \mathcal{G}}_{\diamond,\bar n} 
            \le \varepsilon 
        \right] \ge 1-\delta,
    \end{equation}
    where the diamond norm is taken with respect to the mean photon number constraint $\bar n$.
\end{itemize}
\end{theorem}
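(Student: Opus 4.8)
The plan is to reduce the channel-learning problem to estimating the finitely many real parameters $(\mathbf r, S)$ that define the Gaussian unitary through its Heisenberg action on the quadratures, $G_{\mathbf r,S}^\dagger \hat R\, G_{\mathbf r,S} = S\hat R + \mathbf r$, and then to convert parameter accuracy into an energy-constrained diamond-norm guarantee. The first ingredient I would establish is a \emph{perturbation bound} of the form
\begin{equation}
\tfrac12\norm{\mathcal G_{\tilde{\mathbf r},\tilde S}-\mathcal G_{\mathbf r,S}}_{\diamond,\bar n}\;\le\; C(\bar n,z)\,\bigl(\norm{\tilde S-S}_\infty+\norm{\tilde{\mathbf r}-\mathbf r}\bigr),
\end{equation}
with $C$ polynomial in $\bar n$ and $z$. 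By the triangle inequality it suffices to bound the displacement error and the symplectic error separately: for displacements one controls $\norm{\mathcal D_{\mathbf a}-\mathcal D_{\mathbf b}}_{\diamond,\bar n}$ by the distinguishability of energy-$\bar n$ states under a shift of size $\norm{\mathbf a-\mathbf b}$, and for the symplectic part one must track how $\mathcal U_{\tilde S}$ and $\mathcal U_S$ act on inputs of energy $\le\bar n$, whose outputs may carry energy up to $\sim z^2\bar n$.

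With this reduction in hand, I would design the estimation subroutines using only Gaussian resources. To learn $\mathbf r$, send the vacuum and perform heterodyne detection on the output: the outcomes are Gaussian samples with mean $\mathbf r$ and covariance $SS^\top+\id$, so empirical averaging yields an unbiased estimator $\tilde{\mathbf r}$ whose error concentrates at rate $\sqrt{\norm{SS^\top+\id}_\infty/N}\lesssim z/\sqrt N$. To learn the symplectic matrix, probe each coordinate direction: feeding a coherent state displaced by $c\,e_j$ shifts the output first moments to $c\,S e_j+\mathbf r$, so heterodyne plus subtraction of $\tilde{\mathbf r}$ and rescaling by $1/c$ returns the $j$-th column $Se_j$ up to additive noise of size $\sim z/(c\sqrt N)$. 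The energy constraint enters through $c^2\lesssim\bar n_{\mathrm{in}}$, which fixes the number of repetitions needed for target column accuracy $\eta$ at $N=\mathrm{poly}(z,1/\bar n_{\mathrm{in}},1/\eta)$; in the unbounded-energy limit $c\to\infty$ the per-direction variance vanishes, recovering the claimed $2m+2$ queries. A more energy-efficient variant replaces displaced coherent probes by two-mode squeezed states, reading $S$ off the reference--probe correlations at the cost of a more involved second-moment analysis.

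The raw column estimates assemble into a matrix $\tilde S_{\mathrm{raw}}$ that is close to $S$ but not exactly symplectic, so the third step is the \emph{symplectic regularization}: project $\tilde S_{\mathrm{raw}}$ onto $\mathrm{Sp}_{2m}(\R)$ via an efficiently computable map built from a polar/Bloch--Messiah-type decomposition that re-enforces $S^\top\Omega S=\Omega$. The key property to prove here is \emph{Lipschitz stability} near the group: since $S$ is symplectic and $\norm{\tilde S_{\mathrm{raw}}-S}_\infty\le\eta$ is small, the projection should satisfy $\norm{\tilde S-S}_\infty\le \mathrm{poly}(z)\,\eta$, so regularization only inflates the error by a $z$-dependent factor while guaranteeing $\tilde S\in\mathrm{Sp}_{2m}(\R)$.

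Finally I would stitch the pieces together: standard Gaussian concentration (or median-of-means for the $\log(1/\delta)$ dependence) upgrades each estimator to hold with probability $1-\delta$ after a union bound over the $2m+2$ probes; choosing the target parameter accuracy $\eta=\varepsilon/C(\bar n,z)$ and propagating through the regularization Lipschitz constant and the perturbation bound yields $\tfrac12\norm{\tilde{\mathcal G}-\mathcal G}_{\diamond,\bar n}\le\varepsilon$ with total query and time complexity $\mathrm{poly}(m,z,\bar n,\bar n_{\mathrm{in}},1/\varepsilon,\log(1/\delta))$. I expect the main obstacle to be the perturbation bound $C(\bar n,z)$: because a symplectic matrix with $\norm{S}_\infty\le z$ can amplify quadrature fluctuations quadratically in $z$, a naive estimate of $\norm{\mathcal U_{\tilde S}-\mathcal U_S}_{\diamond,\bar n}$ can blow up, and obtaining a genuinely polynomial dependence requires carefully exploiting the energy constraint on the inputs together with the near-symplecticity of $\tilde S$. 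Establishing the stability of the symplectic projection with an explicit $\mathrm{poly}(z)$ constant is the secondary technical hurdle.
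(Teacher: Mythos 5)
Your skeleton matches the paper's architecture: coherent-state probes with heterodyne detection to read the columns of $S$ off the output first moments, a regularization step enforcing symplecticity, a separate displacement estimate, and a perturbation bound converting parameter error into energy-constrained diamond distance. However, your proposal assumes away exactly the two steps that constitute the paper's technical content. First, the symplectic regularization: you posit that an efficiently computable, Lipschitz-stable projection onto $\mathrm{Sp}_{2m}(\R)$ exists (``the projection should satisfy\ldots''), but no efficient nearest-point projection onto the symplectic group is known---the group is non-convex, and the paper explicitly identifies this as the obstruction separating the bosonic case from the fermionic (orthogonal-group) case, where polar decomposition suffices. The paper resolves it constructively: set $T\coloneqq-\Omega\hat S^\top\Omega\hat S$ (which equals $\id$ precisely when $\hat S$ is symplectic), show $\norm{T-\id}_\infty\le(2z+1)\varepsilon<1/2$ so that the principal square root $Q=\sqrt{T}$ exists and is Lipschitz near the identity (via the integral representation of the matrix root and Neumann-series bounds), and output $\tilde S=Q^{-1}\hat S$, which is exactly symplectic and satisfies $\norm{\tilde S-S}_\infty\le 9z^2\varepsilon$. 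Your proposal names the destination but supplies no map; as a proof, this step is missing rather than sketched.

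Second, your posited perturbation bound is linear in $\norm{\tilde S-S}_\infty$, and that is not what is available: the continuity bound for symplectic Gaussian unitaries that the paper imports (its \cref{thm:symplectic_bound}) scales as $\sqrt{m(\bar n+1)}\cdot\sqrt{\norm{\tilde S^{-1}S-\id}_2}$, i.e., square-root in the symplectic mismatch. This does not threaten polynomiality---one simply calibrates $\varepsilon_S=\Theta\bigl(\varepsilon^2/(mz(\bar n+1))\bigr)$, which is why the paper's complexity carries a $1/\varepsilon^4$ rather than $1/\varepsilon^2$ dependence---but your stitching step, which sets the parameter accuracy $\eta=\varepsilon/C(\bar n,z)$ linearly, is mis-calibrated as written, and proving a genuinely linear bound would itself be a new result beyond the paper. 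A smaller but concrete error: your vacuum-heterodyne estimator for $\mathbf r$ has covariance $(SS^\top+\id)/2$, which does not shrink with probe energy, so your parenthetical claim that the unbounded-energy limit recovers $2m+2$ total queries fails under your scheme (you would still need $\Theta(z^2 m/\varepsilon_r^2)$ vacuum shots for the displacement). The paper achieves energy-suppressed displacement variance by probing with two-mode squeezed vacua pre-corrected by $U_{\tilde S^{-1}}$, where the variance scales like $1/\nu$ provided $\norm{\tilde S^{-1}S-\id}_\infty=o(1/\nu)$; for the informal polynomial-complexity statement alone your vacuum scheme suffices, but for the query-count claim it does not.
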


The runtime of the algorithm is dominated by the number of queries, so the time complexity is asymptotically equivalent to the query complexity. This result provides the first end-to-end guarantee for learning general multi-mode Gaussian unitaries with finite-energy resources and shows that the task can be accomplished with resources that grow only polynomially in the natural parameters of the problem (e.g.~the number of modes $m$). Our algorithm also has the attractive feature that, if arbitrary input energy is available ($\bar n_{\mathrm{in}} \to \infty$), learning can be achieved to arbitrary precision with only $2m+2$ queries (see~\cref{remark:scaling-probepower-nu}).

\paragraph{Technical contributions and significance.}
To the best of our knowledge, the time-efficient symplectic regularization step is a novel technical contribution of our work. In contrast, the corresponding task in the fermionic setting, regularization to elements of the orthogonal group, is a convex optimization problem that can be solved very efficiently via the polar decomposition~\cite{oszmaniec2022fermion}.
Symplectic rounding, however, is inherently non-convex, and we overcome this computational obstacle while incurring only a multiplicative loss of $\mathcal{O}(\squeezingparameter^2)$ in the distance from a symplectic matrix.
This results in only a mild increase in the overall query complexity.

Even when disregarding computational efficiency, our work provides the first formal guarantees on the query complexity of learning bosonic Gaussian unitaries up to small energy-constrained diamond distance.
We believe that our results and techniques open the door to follow-up research, such as learning doped bosonic unitaries~\cite{mele2024learning} and other extensions of Gaussian process tomography.

\subsection{Technical overview}
We outline the main ideas behind our algorithm and its analysis, following the overall structure of the paper. In this overview we assume familiarity with the basics of continuous-variable quantum information, for which we provide a primer in~\cref{sec:pre-cv}. The precise problem formulation together with the parameter conventions and notation can be found in~\cref{sec:problem-define}.

\medskip
We specify an arbitrary Gaussian unitary $G_{\mathbf r, S} = D_{\mathbf r}U_S$  by a displacement vector $\mathbf r \in \R^{2m}$ and a symplectic matrix $S \in \mathrm{Sp}_{2m}(\R)$. The overall procedure can be summarized in four stages: 
\begin{enumerate}
    \item \emph{Symplectic estimation.}  
    Using coherent probes and heterodyne detection, we obtain an initial estimate of the symplectic action $S$ with operator-norm error at most~$\varepsilon_S$.  

    \item \emph{Symplectic regularization.}  
    A matrix square-root correction enforces exact symplecticity, producing $\tilde S \in \mathrm{Sp}_{2m}(\R)$ while preserving the accuracy guarantee $\|\tilde S-S\|_\infty \le \mathcal{O}(z^2\varepsilon_S)$, where $z \ge 1$ is a known squeezing bound with $\|S\|_\infty \le z$.  

    \item \emph{Displacement estimation.}  
    We employ squeezed Gaussian probes that cancel the symplectic action to first order (in particular $U_{\tilde{S}^{-1}}$).
    After this, we are able estimate the displacement vector and obtain $\tilde{\mathbf r}$ with Euclidean error $\varepsilon_r$.  

    \item \emph{End-to-end guarantee.}  
    Combining the symplectic and displacement estimates, the reconstructed channel $\tilde{\mathcal G} = \mathcal D_{\tilde{\mathbf r}} \circ \mathcal U_{\tilde S}$ satisfies $\tfrac12 \|\tilde{\mathcal G}-\mathcal G\|_{\diamond,\bar n} \le \varepsilon$ with probability at least $1-\delta$, provided $\varepsilon_S$ and $\varepsilon_r$ are chosen according to the bounds in~\cref{thm:end-to-end-diamond}.
\end{enumerate}

\paragraph{Symplectic learning via coherent probing~(\cref{sec:vac-share,sec:sym-probe}).}
To access the symplectic action of a Gaussian unitary $G_{\mathbf r, S}$, we probe the channel with coherent states and analyze the heterodyne detection outcomes at the output. For an input coherent state $\ket{\mathbf m}$ with mean vector $\mathbf m \in \R^{2m}$, the heterodyne outcomes are distributed as a Gaussian with mean $\mathbf r + S\mathbf m$ and covariance matrix $(SS^\top+\id)/2$. Thus, appropriately chosen probe vectors $\mathbf m$ allow one to recover the columns of $S$ via centered finite differences. 

In the \emph{vacuum-shared scheme}, one compares the heterodyne averages $\bar Y_i$ obtained from $\ket{\eta e_i}$ against the vacuum response $\bar Y_0$, where $e_i \in \R^{2m}$ is the $i$-th standard basis vector and $\eta>0$ is the probe amplitude. The resulting estimator is defined as $\hat S_i = (\bar Y_i - \bar Y_0)/\eta$, which provides an unbiased estimate of the $i$-th column $S_i$ of $S$. However, since all column estimators share the same random vacuum baseline $\bar Y_0$, their fluctuations are statistically dependent. This correlation complicates the operator-norm analysis, and one is forced to control each column separately and then take a union bound across all $2m$ columns.

In the \emph{symmetric-probe scheme}, by contrast, each column $S_i$ is estimated from two independent batches of heterodyne samples obtained from $\ket{+\eta e_i}$ and $\ket{-\eta e_i}$. The estimator $\hat S_i = (\bar Y_i^{(+)} - \bar Y_i^{(-)})/(2\eta)$ eliminates the displacement $\mathbf r$ and depends only on probe-specific noise. As a consequence, the estimators for different columns are mutually independent. This structural independence allows one to model the entire error matrix $\hat S - S$ as a standard Gaussian matrix after rescaling, thereby enabling the application of sharp operator-norm concentration bounds from random matrix theory. The resulting analysis gives tighter guarantees on $\|\hat S - S\|_\infty$ compared to the vacuum-shared scheme.

Both schemes provide high-probability guarantees on the operator-norm error $\|\hat S - S\|_\infty\le\varepsilon_S$, as shown in~\cref{thm:learnS-again,lem:pm-design}. The query complexity scales as $(2m+1)N_S$ in the vacuum-shared case and $4mN_S$ in the symmetric-probe case, where $N_S$ denotes the number of heterodyne shots per probe. In the asymptotic regime where the photon-number constraint on the input $\bar n_\text{in}$ is taken to infinity (equivalently $\eta \to \infty$), a single shot per probe suffices, yielding the asymptotic limits of $2m+1$ and $4m$ queries, respectively.

\paragraph{Symplectic regularization~(\cref{sec:close-sym}).}
The empirical estimator $\hat S$ is not guaranteed to be symplectic, and no efficient algorithm is currently known for projecting general matrices onto the symplectic group. To resolve this, we introduce a rounding procedure based on symplectic polar decompositions. Define $T \coloneqq -\Omega \hat S^\top \Omega \hat S$, which equals the identity when $\hat S$ is exactly symplectic. When $\hat S$ is close to $S$, the perturbation bound $\|T-\id\|_\infty \le (2\norm{S}_\infty+1)\|\hat S-S\|_\infty$ ensures that $T$ lies in a neighborhood where the principal square root $Q \coloneqq \sqrt{T}$ exists, is unique, and is Lipschitz continuous in operator norm. We then set $\tilde S \coloneqq \hat S Q^{-1}$, which is symplectic (i.e.~it satisfies $\tilde S^\top \Omega \tilde S = \Omega$) by construction. Matrix-analytic arguments further imply
\begin{equation}
    \norm{\tilde S - S}_\infty \le 9 \norm{S}_\infty^2 \, \norm{\hat S - S}_\infty,    
\end{equation}
showing that the regularization step incurs only a quadratic blow-up in $\|S\|_\infty$ when bounding the error. This yields an efficient rounding procedure that enforces exact symplecticity while preserving accuracy.

\paragraph{Displacement learning with two-mode squeezed states~(\cref{sec:dis-entangle}).}
We next consider the setting where auxiliary entanglement is available via two-mode squeezed vacuum (TMSV) probes of squeezing strength $\nu\ge1$. The protocol prepares $\ket{\nu}^{\otimes m}=U_{S_\nu}\ket{0}^{\otimes 2m}$, applies the preprocessing $U_{\tilde S^{-1}}$, the unknown Gaussian unitary $G_{\mathbf r,S}$, and finally the inverse squeezing $U_{S_\nu}^\dagger$. The first $2m$ output modes then have mean vector $\sqrt{\nu}\,\mathbf r$ and covariance $\Sigma^{(1)}=(A+\id)/2$, where
\begin{equation}
A=\id+\nu(\Delta+\Delta^\top)+\nu(2\nu+1)\Delta\Delta^\top,
\qquad \Delta=\tilde S^{-1}S-\id.
\end{equation}
Heterodyne detection on these modes yields Gaussian samples 
$Y^{(1)}\sim\mathcal N(\sqrt{\nu}\mathbf r,\Sigma^{(1)})$, from which an unbiased estimator $\tilde{\mathbf r}=\hat\mu^{(1)}/\sqrt\nu$ is obtained. The accuracy of this estimator is governed by the covariance $\Sigma^{(1)}$, and concentration bounds show that $\|\tilde{\mathbf r}-\mathbf r\|_2\le\varepsilon_r$ with high probability, provided the number of repetitions $N_r$ scales as given in~\cref{thm:learn-r-nu-only}. In particular, larger squeezing $\nu$ suppresses the estimator variance, but only if the mismatch satisfies $\norm{\Delta}_\infty= o(1/\nu)$, reflecting the balance between probe energy and symplectic accuracy. Thus, this entangled scheme provides a high-probability guarantee for estimating $\mathbf r$ under experimentally reasonable assumptions on $\tilde S$.

\paragraph{Experimental feasibility~(\cref{sec:no-activesq}).}  
Although the protocol seems to demand the online application of active Gaussian unitaries such as $U_{S_\nu}^\dagger$ and $U_{\tilde S^{-1}}$ to squeezed inputs, both steps can in fact be avoided. The apparent online use of $U_{S_\nu}^\dagger$ before heterodyne detection can be simulated entirely passively: one injects a fixed squeezed ancilla, interferes it with the signal via beam splitters, performs complementary homodyne measurements, and finally rescales the classical outcomes as $\mathbf q\mapsto S_\nu \mathbf q$ (see~\cref{prop:passive-het}). Likewise, the input $U_{\tilde S^{-1}}\ket{\nu}^{\otimes m}$ can be rewritten via the Bloch--Messiah decomposition as an offline-prepared multimode squeezed state $\ket{Z}$ followed by a passive interferometer $U_{O_1}$. 

Hence, the entire block $U_{S_\nu}^\dagger G_{\mathbf r,S} U_{\tilde S^{-1}} \ket{\nu}^{\otimes m}$ admits an implementation using only passive Gaussian optics, pre-prepared squeezed vacua, and homodyne detection. This eliminates the need for online active squeezing of the unknown signal, while reproducing exactly the heterodyne statistics assumed in our analysis. As a result, all theoretical guarantees on the displacement estimation---in particular the high-probability error bounds from~\cref{thm:learn-r-nu-only}---remain valid, but now in a form that is directly compatible with experimentally accessible resources.

\paragraph{Displacement learning with single-mode squeezed states~(\cref{sec:algo-without-entanglement}).}
The efficient algorithm described above for learning the displacement vector relies on probing the unknown Gaussian unitary with (two-mode squeezed vacuum) states entangled with an auxiliary system. One might be tempted to mistakenly attribute the algorithm’s efficiency to the use of such an ancilla---which remains unaffected by the unknown unitary---as is the case in other quantum learning tasks whose efficiency fundamentally depends on auxiliary systems~\cite{Chen_2022,Chen_2024,oh2024entanglementenabled,liu2025quantumlearningadvantagescalable}. However, we show that this is not the case. In fact, we now introduce an alternative protocol for estimating the displacement vector $\mathbf{r}$ that eliminates the need for ancilla-assisted probes and instead employs only single-mode squeezed states as inputs.  The idea is to separately access the momentum and position components of $\mathbf r$. To estimate the momentum part, one prepares $U_{\tilde S^{-1}}\ket{z_{\mathrm{in}}}^{\otimes m}$, a product of $m$ momentum-squeezed states of strength $z_{\mathrm{in}}\ge 1$, applies the unknown Gaussian unitary $G_{\mathbf r,S}$, and performs homodyne detection in the momentum quadrature. This yields samples $Y_p\sim \mathcal N(\mathbf r_p,V_{pp}/2)$, where $V_{pp}$ is the momentum–momentum block of $(\Delta+\id)V(\ket{z_{\mathrm{in}}})(\Delta+\id)^\top$ with $\Delta=S\tilde S^{-1}-\id$. Similarly, the position part is learned by preparing $U_{\tilde S^{-1}}\ket{z_{\mathrm{in}}^{-1}}^{\otimes m}$, applying $G_{\mathbf r,S}$, and performing homodyne detection in the position quadrature, which yields samples $Y_x\sim\mathcal N(\mathbf r_x,V_{xx}/2)$ with covariance block $V_{xx}$. The operator norm of the measurement covariance satisfies
\begin{equation}
    \max\left(\left\|\frac{V_{xx}}{2}\right\|_\infty, \left\|\frac{V_{pp}}{2}\right\|_\infty \right)\le \frac{1}{2}\left(\frac{(1+\|\Delta\|_\infty)^2}{z_{\mathrm{in}}}+z_{\mathrm{in}}\|\Delta\|_\infty^2\right).   
\end{equation} 
Hence, in the regime $\|\Delta\|_\infty=o(1/z_{\mathrm{in}})$ the statistical noise vanishes as $z_{\mathrm{in}}\to\infty$, enabling efficient estimation of the displacement vector $\textbf{r}$. 

%By Gaussian concentration, the empirical means of the homodyne outcomes yield unbiased estimators for $\mathbf r_x$ and $\mathbf r_p$, and hence for $\mathbf r$. The required number of repetitions $N_r$ to achieve $\|\tilde{\mathbf r}-\mathbf r\|_2\le\varepsilon_r$ with failure probability $\delta$ is given in~\cref{query_displ2}. Notably, this scheme uses $2N_r$ queries in total, reflecting the need to separately probe the two quadratures. Thus, while the sample complexity scales comparably to the entangled-probe method, the implementation relies only on passive Gaussian interferometers, offline single-mode squeezing, and standard homodyne detection. This makes the protocol experimentally attractive while still preserving rigorous high-probability guarantees on the displacement estimation accuracy.

\paragraph{End-to-end guarantee~(\cref{sec:end-to-end}).}
The full guarantee follows by combining the symplectic and displacement learning analyses with the diamond-norm error propagation.  
For the symplectic part, heterodyne-based estimation yields an additive error $\|\tilde S-S\|_\infty\le \varepsilon_S$, using either $(2m+1)N_S$ queries in the vacuum-shared scheme (\cref{sec:vac-share}) or $4mN_S$ in the symmetric-probe scheme (\cref{sec:sym-probe}).  
Through the additive-to-multiplicative conversion (\cref{lem:add-to-mult,lem:inv-perturb-symplectic}), this error induces a multiplicative mismatch $\Delta=\tilde S^{-1}S-\id$ of order $\|\Delta\|_\infty=\mathcal O(z\varepsilon_S)$.

For the displacement part, the measurement distributions arising from squeezed probes depend explicitly on $\Delta$.  
Substituting the above bound into the concentration inequalities (\cref{thm:learn-r-nu-only,query_displ2}) determines the required number of repetitions $N_r$ to guarantee $\|\tilde{\mathbf r}-\mathbf r\|_2\le \varepsilon_r$.  
The total cost then depends on the chosen protocol: with entangled two-mode squeezed probes, one needs $(\text{symplectic cost})+N_r$ queries, while with single-mode squeezed probes, both quadratures must be learned separately, leading to $(\text{symplectic cost})+2N_r$ queries.

These two errors combine at the channel level under the energy-constrained diamond distance.  
By~\cref{metalemma}, we obtain
\begin{equation}
    \frac12\|\tilde{\mathcal G}-\mathcal G\|_{\diamond,\bar n}
 \le 12\sqrt{9\sqrt{2m}(\bar n+1)}\,\sqrt{z\sqrt{2m}\,\varepsilon_S}
   + \sqrt{2}\sqrt{z^2\bar n+1}\,\varepsilon_r,    
\end{equation}
where the first term arises from controlling the effect of the symplectic mismatch and the second accounts for displacement errors. Thus, choosing 
\begin{equation}
    \varepsilon_S\le \frac{\varepsilon^2}{2592mz(\bar n+1)}, \qquad \varepsilon_r\le \frac{\varepsilon}{2\sqrt{2}\sqrt{z^2\bar n+1}},    
\end{equation}
ensures $\tfrac12\|\tilde{\mathcal G}-\mathcal G\|_{\diamond,\bar n}\le \varepsilon$ with success probability at least $1-\delta$.

Altogether, this yields four distinct end-to-end learning algorithms, depending on the choice of symplectic and displacement routines.  
Each achieves polynomial dependence on $m$, $z$, $\bar n$, $\bar n_{\mathrm{in}}$, $1/\varepsilon$, $\log(1/\delta)$, thereby providing the first rigorous, time- and sample-efficient guarantees for learning multiparameter bosonic Gaussian unitaries.  
In the high-energy limit $\bar n_{\mathrm{in}}\to\infty$, both $N_S$ and $N_r$ converge to~1.  
In particular, the vacuum-shared symplectic routine combined with two-mode squeezed displacement learning yields the fundamental query complexity $N_{\mathrm{tot}}=2m+2$.  
The other protocol combinations can be treated by the same analysis, leading to analogous query complexities.

\subsection{Related work}
The problem of learning the parameters of Gaussian channels has been a central focus of quantum estimation theory since the inception of the field by Caves~\cite{Caves1981}. The algorithm we propose is related to several ideas from established protocols, which we briefly review below.

\paragraph{Local and global estimation.} A vast literature has studied the optimal precision asymptotically attainable for the problem of local estimation, where the parameters are already approximately known; see the reviews~\cite{PRXQuantum.3.010202,RevModPhys.90.035006,Zhang_2021}. In particular, we highlight the recent work of Chang, Genoni, and Albarelli~\cite{chang2025multiparameterquantumestimationgaussian}, which provides a comprehensive account of the fundamental limits of Gaussian multiparameter estimation. D'Ariano et al.~\cite{D_Ariano_2001} observed that a two-mode squeezed vacuum probe can estimate a displacement with arbitrarily high precision, a problem further studied in~\cite{PhysRevA.87.012107,PhysRevA.97.012106}. Global parameter estimation of Gaussian processes in the Bayesian setting was developed by Morelli et al.~\cite{Morelli_2021}, though their results do not provide guarantees for worst-case scenarios. The problem of optimal estimation of group action by minimizing an expected cost function was studied in~\cite{Hayashi2016} in generality and with applications to the infinite dimensional case with an energy constraint. Finally, Rosati~\cite{Rosati2024learningtheory} established sample-complexity bounds for learning the output distribution of a CV circuit from the perspective of statistical learning theory.

\paragraph{Learning the symplectic part of Gaussian unitaries.} Several works have proposed reconstruction methods for CV channels. Closest to this paper, the recent work of Grove et al.~\cite{Grove2025} presents an algorithm for learning the symplectic matrix of a Gaussian unitary (but not the displacement), without a sample-complexity analysis. There are two main differences between their approach and ours: (i) we use heterodyne detection instead of homodyne detection, since at high energies the variance reduction from homodyne detection is offset by the fact that it requires twice as many measurements; and (ii) we output a matrix that is arbitrarily close to a true symplectic matrix, whereas no such guarantee is provided in~\cite{Grove2025}.

\paragraph{Process tomography with coherent probes.} Lobino et al.~\cite{Lobino2008} proposed and experimentally implemented a method for general single-mode process tomography in truncated Fock space, using coherent states as inputs and homodyne detection to reconstruct the channel action in the Glauber--Sudarshan representation. Rahimi-Keshari et al.~\cite{Rahimi2010} modified this method to avoid explicit use of the Glauber--Sudarshan representation and extended it to the multimode setting. Subsequently, Wang et al.~\cite{Wang2013} proposed a reconstruction method for Gaussian processes based on the Husimi $Q$ function, requiring $2m+1$ distinct coherent-state inputs. Ghalaii and Rezakhani~\cite{Ghalaii2017} developed a reconstruction procedure using normally ordered moments, while Kumar et al.~\cite{Kumar2020_2} considered photon-number detection measurements.

\paragraph{Connections to related learning tasks.} 
Similar learning algorithms also exist for conceptually related classes of unitary operators, such as Clifford and fermionic Gaussian unitaries~\cite{low2009learning,lai2022learning,leone2024learning,oszmaniec2022fermion}. Like bosonic Gaussian unitaries, these operators admit concise classical representations and possess algebraic properties that make them efficiently learnable. In contrast, the discrete-variable setting simplifies the associated learning algorithms. Moreover, fermionic Gaussian unitaries are associated with the orthogonal group, whereas bosonic Gaussian unitaries are associated with the symplectic group. While efficient algorithms exist to find the closest orthogonal matrix (in operator norm) to a given matrix, no such algorithm is known for the symplectic group. Overcoming this barrier is a key technical contribution of our work.

On the state side, the problem of learning Gaussian states has already been explored extensively both for bosons and fermions~\cite{mele2024learning,bittel2024optimalestimatestracedistance,fanizza2024efficienthamiltonianstructuretrace,bittel2025energyindependenttomographygaussianstates,ogorman2022fermionictomographylearning,aaronson2023efficienttomographynoninteractingfermion,Bittel_2025Optimalferm}, with recent notable advances also extending to non-Gaussian regimes~\cite{mele2024learning,Mele_2025ferm,iosue2025higher}.

\subsection{Discussion and open problems}
In this work, we show that bosonic Gaussian unitaries can be efficiently learned, with both query and time complexity scaling polynomially in the number of modes. This resolves an open question raised in recent works~\cite{bittel2024optimalestimatestracedistance,bittel2025energyindependenttomographygaussianstates}. The proposed learning algorithm is experimentally feasible and can be implemented on current photonic platforms. We provide a detailed complexity analysis with rigorous error guarantees, formulated in terms of the physically motivated energy-constrained diamond norm~\cite{VV-diamond,Shirokov2018,PLOB,EC-diamond}. A key feature is that, in the high-energy regime, the query complexity scales linearly with the number of modes while remaining constant in all other parameters (e.g., the squeezing of the unknown Gaussian unitary), thereby exhibiting a form of ``squeezing independence'' analogous to the ``energy independence'' identified in~\cite{bittel2025energyindependenttomographygaussianstates} for learning Gaussian states. Technically, our approach combines tools from quantum information theory with continuous-variable systems, concentration inequalities for Gaussian random matrices, and a novel method for truncating approximate symplectic matrices into valid ones while controlling the truncation error.

\medskip
This work opens a number of fundamental questions:
\begin{itemize}
    \item Our analysis concerns only unitary Gaussian channels. A natural question is whether general (non-unitary) bosonic Gaussian channels can also be efficiently learned. The main challenge lies in establishing perturbation bounds on the energy-constrained diamond distance between Gaussian channels, analogous to those available in the unitary case~\cite{EC-diamond}. Can the ``derivative approach'' of~\cite{bittel2024optimalestimatestracedistance} be adapted to derive such bounds?

    \item An important open problem is whether matching query-complexity upper and lower bounds can be established for learning arbitrary bosonic Gaussian unitaries, accounting for all problem parameters.
    In particular, our analysis shows that the efficient algorithm presented here achieves an upper bound scaling as $1/\varepsilon^{4}$ in the target accuracy $\varepsilon$, measured in the energy-constrained diamond norm, a dependence that follows from current estimates of this distance between Gaussian unitaries~\cite{EC-diamond}.
    Sharper bounds on this metric could potentially reduce the scaling to $1/\varepsilon^{2}$, as already achieved for fermionic Gaussian unitaries~\cite{oszmaniec2022fermion} and for learning fermionic and bosonic Gaussian states~\cite{bittel2024optimalestimatestracedistance}.
    
    \item What are the optimal query bounds for learning arbitrary (non-Gaussian) bosonic unitaries? The corresponding discrete-variable problem was recently addressed in~\cite{haah2023query}, while the continuous-variable analogue in the setting of state learning was solved in~\cite{mele2024learning}.
    
    \item What is the trade-off between the efficiency of learning bosonic unitaries and the degree of non-Gaussianity in the unknown unitary? One possible approach is to study tomography of \emph{$t$-doped} Gaussian unitaries~\cite{mele2024learning,Mele_2025}, i.e., unitaries that can be implemented using arbitrary Gaussian unitaries together with at most $t$ single-mode non-Gaussian unitaries. The corresponding state learning problem is efficient for $t = \mathcal{O}(1)$~\cite{mele2024learning}, but extending this to unitary learning presents new challenges. Nonetheless, the analogous problem for doped fermionic Gaussian unitaries has been solved~\cite{iyer2025mildly,austin2025efficiently}, suggesting that similar methods may be effective in the bosonic setting.
    
    \item Finally, the result in this work is proved in the so-called \emph{realizable} setting, where the input is assumed to be exactly a Gaussian unitary. A related open problem is to address the \emph{agnostic} setting: given a bosonic unitary that is $\varepsilon$-close to some Gaussian unitary (in an energy-constrained average-case metric), can we output a Gaussian unitary that is $(\varepsilon + \varepsilon')$-close to the input? Here, $\varepsilon'$ represents the additional approximation overhead beyond the best approximation achievable within the class of Gaussian unitaries, which arises because the true input need not lie in this class. This problem, known as \emph{agnostic tomography}, has been intensively studied in the discrete-variable setting~\cite{grewal2024agnostic,grewal2024improved,chen2024stabilizer,bakshi2024learning,wadhwa2024agnostic, Badescu2021, Fanizza2024, fanizza2025learningfinitelycorrelatedstates}.
\end{itemize}

\section{Preliminaries}

\paragraph{Notation.} 
We write $\mathds{R}$, $\mathds{N}$, and $\mathds{C}$ for the sets of real, natural, and complex numbers, respectively. 
For a matrix $M$, we denote by $\norm{M}_\infty$ its operator (spectral) norm, by $\norm{M}_2$ its Hilbert--Schmidt norm, and by $\norm{M}_1$ its trace norm. 
For an integer $n \ge 1$, we use the shorthand $[n] \coloneqq \{1,\ldots,n\}$. 
$\id$ denotes the identity operator. 
We write $\delta_{ij}$ for the Kronecker delta, which equals $1$ if $i=j$ and $0$ otherwise.

Given a mean vector $\mathbf m \in \R^k$ and a positive definite covariance matrix $V \in \R^{k \times k}$, we write $\mathcal{N}(\mathbf m, V)$ for the Gaussian probability distribution on $\R^k$ with mean $\mathbf m$ and covariance $V$. Its probability density function at $\mathbf x \in \R^k$ is
\begin{equation}
    \mathcal{N}(\mathbf m, V)(\mathbf x) \coloneqq 
    \frac{e^{-\tfrac{1}{2}(\mathbf x - \mathbf m)^{\top} V^{-1} (\mathbf x - \mathbf m)}}{(2\pi)^{k/2}\sqrt{\det V}}.
    \label{eq:gaussian-def}
\end{equation}
We denote by $\mathrm{Sp}_{2m}(\R)$ the real symplectic group, i.e., the set of all $2m \times 2m$ real matrices $S$ satisfying $S^{\top} \Omega S = \Omega$, where 
\begin{equation}\label{eq:omega}
    \Omega \coloneqq \bigoplus_{i=1}^m
        \begin{pmatrix}
            0 & 1\\
            -1 & 0
        \end{pmatrix}
\end{equation}
is the canonical symplectic form. We write $\mathrm{O}(n)$ for the orthogonal group of degree $n$, consisting of all real $n \times n$ matrices $Q$ such that $Q^{\top}Q = \id$.

\subsection{Continuous-variable systems}\label{sec:pre-cv}
In this section, we review the relevant preliminaries on continuous-variable (CV) systems; for further details, we refer to the book by Serafini~\cite{BUCCO}.

A continuous-variable system is a quantum system associated with the so-called $m$-mode Hilbert space $L^2(\R^m)$, consisting of all square-integrable complex functions over $\R^m$. Here $m\in\mathds{N}$ denotes the number of modes, which play the role of the \emph{system size}, analogous to the number of qudits in discrete-variable (DV) systems. Note that the number of modes $m$ can be regarded as the system size of a CV system because $L^2(\R^m) = \bigl(L^2(\R)\bigr)^{\otimes m}$.

A quantum state acting on this Hilbert space is called an $m$-mode quantum state, and a unitary operator acting on it is called an $m$-mode unitary. On this space, one can define the usual quadrature operators: for each $i\in[m]$, let $\hat{x}_i$ and $\hat{p}_i$ denote the position and momentum operators of the $i$-th mode. It is convenient to collect them into the \emph{quadrature operator vector}
\begin{equation}
    \hat{\mathbf{R}} \coloneqq (\hat{x}_1,\hat{p}_1,\ldots,\hat{x}_m,\hat{p}_m).
\end{equation}
The quadratures satisfy the canonical commutation relations $[\hat{\mathbf R}_j,\hat{\mathbf R}_\ell] = i\Omega_{j\ell}\id$ for all $j,\ell\in[2m]$, where $\id$ denotes the identity, and $\Omega \in \R^{2m\times 2m}$ is defined in~\cref{eq:omega}. Equivalently, these relations can be written compactly as $[\hat{\mathbf{R}},\hat{\mathbf{R}}^\top] = i\Omega\id$.

\paragraph{Gaussian unitaries.}
Gaussian unitaries are $n$-mode unitaries generated by Hamiltonians quadratic in the quadrature operators. The set of $n$-mode Gaussian unitaries is in one-to-one correspondence with pairs $(\mathbf{r},S)$, where $\mathbf{r}\in\R^{2m}$ and $S\in\mathrm{Sp}_{2m}(\R)$.
For a Gaussian unitary $G_{\mathbf r,S}$ specified by $(\mathbf{r},S)$, the vector $\mathbf{r}$ is called the \emph{displacement vector} and $S$ the \emph{symplectic matrix}. More precisely, any Gaussian unitary admits the decomposition
\begin{equation}
    G_{\mathbf{r},S} \coloneqq D_{\mathbf{r}} U_S,
\end{equation}
where $D_{\mathbf{r}} \coloneqq e^{-i\mathbf{r}^\top \Omega \hat{\mathbf{R}}}$ is the displacement operator, and $U_S$ is the symplectic Gaussian unitary associated with $S$. Their action on quadratures is
\begin{align}
    D_{\mathbf{r}}^\dagger \hat{\mathbf{R}} D_{\mathbf{r}} &= \hat{\mathbf{R}} + \mathbf{r}\id, \\
    U_S \hat{\mathbf{R}} U_S^\dagger &= S \hat{\mathbf{R}}.
\end{align}

\paragraph{Gaussian states.}
Gaussian states are $m$-mode states that can be expressed as tensor products of Gibbs states of quadratic Hamiltonians in the quadrature operator vector. Equivalently, they are precisely the states preserved under Gaussian unitaries. Any Gaussian state is uniquely characterized by its \emph{first moment} and \emph{covariance matrix}, defined for a state $\rho$ as
\begin{align}
    \mathbf{m}(\rho) & = \Tr[\hat{\mathbf{R}} \rho] \in \R^{2m}, \\
    V(\rho) &= \Tr\left[\left\{ \hat{\mathbf{R}} - \mathbf{m}(\rho)\id, \; \bigl(\hat{\mathbf{R}} - \mathbf{m}(\rho)\id\bigr)^\top \right\}\rho \right] \in \R^{2m\times 2m},
\end{align}
where $\{A,B\} = AB + BA$ denotes the anticommutator. Displacements act on moments as
\begin{align}
    \mathbf{m} \bigl(D_{\mathbf{r}}\rho D_{\mathbf{r}}^\dagger\bigr) &= \mathbf{m}(\rho)+\mathbf{r},\\
    V\bigl(D_{\mathbf{r}}\rho D_{\mathbf{r}}^\dagger\bigr) &= V(\rho),
\end{align}
while symplectic Gaussian unitaries act as
\begin{align}
    \mathbf{m}\bigl(U_S\rho U_S^\dagger\bigr) &= S\mathbf{m}(\rho), \\
    V\bigl(U_S\rho U_S^\dagger\bigr) &= S V(\rho) S^\top.
\end{align}

\emph{Coherent states} are examples of pure Gaussian states. They are denoted by $\ket{\mathbf{m}}$, with first moment $\mathbf{m}(\ket{\mathbf{m}}) = \mathbf{m}$ and covariance matrix $V(\ket{\mathbf{m}}) = \id$. The coherent state with $\mathbf{m}=0$ is called the \emph{vacuum state}.  

The \emph{single-mode squeezed state} $\ket{z_\mathrm{in}}$, with squeezing parameter $z_\mathrm{in}>0$, is a pure Gaussian state with zero first moment and covariance matrix
\begin{equation}\label{eq_singlemode_squeezed}
    V(\ket{z_\mathrm{in}}) = 
    \begin{pmatrix}
        z_\mathrm{in} & 0 \\[4pt]
        0 & z_\mathrm{in}^{-1}
    \end{pmatrix}.
\end{equation}
More generally, the $m$-mode product state $\ket{z_\mathrm{in}}^{\otimes m}$, squeezed in the momentum quadrature by factor $z$, has covariance matrix
\begin{equation}
    V(\ket{z_\mathrm{in}}^{\otimes m}) = \bigoplus_{i=1}^m 
    \begin{pmatrix}
        z_\mathrm{in} & 0 \\[4pt]
        0 & z_\mathrm{in}^{-1}
    \end{pmatrix}.
\end{equation}

The \emph{two-mode squeezed vacuum} (TMSV) state $\ket{\nu}$, where $\nu\geq 1$, is a two-mode pure Gaussian state with zero first moment, and covariance matrix
\begin{equation}\label{eq_tmsv}
    V(\ket{\nu}) =
    \begin{pmatrix}
        (2\nu-1)\id & 2\sqrt{\nu(\nu-1)}\,\sigma_z \\[2pt]
        2\sqrt{\nu(\nu-1)}\,\sigma_z & (2\nu-1)\id
    \end{pmatrix},
    \qquad
    \sigma_z \coloneqq \begin{pmatrix} 1 & 0 \\ 0 & -1 \end{pmatrix}\,,
\end{equation}
written in $2\times 2$ block form.

\paragraph{Energy operator and mean energy.}
The energy operator is defined as
\begin{equation}\label{eq:energy-operator}
    \hat{E} \coloneqq \frac{\hat{\mathbf{R}}^\top \hat{\mathbf{R}}}{2} = \sum_{i=1}^m \frac{\hat{x}_i^2 + \hat{p}_i^2}{2}.
\end{equation}
The mean energy of a state $\rho$ is $E(\rho) \coloneqq \Tr[\hat{E}\rho]$, which can be written in terms of moments and covariance as
\begin{equation}
    E(\rho) = \frac{\Tr\left[V(\rho)\right]}{4} + \frac{\norm{\mathbf{m}(\rho)}_2^2}{2}.
\end{equation}

\paragraph{Photon number operator.}
For each mode $j\in[m]$, let $a_j$ and $a_j^\dagger$ denote the annihilation and creation operators, related to the quadratures by
\begin{align}
    a_j & \coloneqq \frac{\hat{x}_j + i\hat{p}_j}{\sqrt{2}}, \\
    a_j^\dagger & \coloneqq \frac{\hat{x}_j - i\hat{p}_j}{\sqrt{2}},
\end{align}
which satisfy the canonical commutation relations $[a_i,a_j^\dagger] = \delta_{ij}$. The \emph{total photon number operator} (or \emph{total particle number operator}) is defined as
\begin{equation}
    \hat{N} \coloneqq \sum_{i=1}^m a_i^\dagger a_i.
\end{equation}

\noindent The particle number operator is related to the energy operator $\hat{E}$ by
\begin{equation}
    \hat{E} = \hat{N} + \frac{m\id}{2},
\end{equation}
so that the mean energy of a state $\rho$ decomposes as
\begin{equation}
    E(\rho) = \langle \hat{N} \rangle_\rho + \frac{m}{2},
\end{equation}
or, equivalently,
\begin{equation}\label{eq_mean_photon}
    \langle \hat{N} \rangle_\rho = \frac{\Tr[V(\rho)-\id]}{4} + \frac{\norm{\mathbf{m}(\rho)}_2^2}{2}.
\end{equation}

\paragraph{Homodyne measurement.} The homodyne measurement is one of the most standard measurement techniques in quantum optics, frequently used to access the quadratures of bosonic modes. Consider an $n$-mode system described by the quadrature operator vector $\hat{\mathbf{R}} = (\hat x_1, \hat p_1, \ldots, \hat x_n, \hat p_n)^\top.$ A homodyne measurement consists of selecting, for each mode $j \in [n]$, whether to measure its position operator $\hat x_j$ or its momentum operator $\hat p_j$.

Suppose a Gaussian state $\rho$ is characterized by first-moment vector $\mathbf{m} \in \R^{2n}$ and covariance matrix $V \in \R^{2n \times 2n}$. 
When $\rho$ is measured by homodyne detection, the outcomes are distributed according to a multivariate Gaussian law whose mean and covariance are obtained by restricting $m$ and $V/2$ to the measured quadratures. 
Two important special cases are as follows:
\begin{itemize}
    \item If all position quadratures $(\hat x_1,\ldots,\hat x_n)$ are measured, the outcomes $\mathbf{x} \in \R^n$ are distributed as $\mathbf{x} \sim \mathcal N(\mathbf{m}_x,\, V_{xx}/2)$, where $\mathbf{m}_x \coloneqq (m_{2j-1}(\rho))_{j \in [n]}$ is the vector of odd entries of $m$, and $V_{xx} \coloneqq ( V_{2i-1,\,2j-1} )_{i,j \in [n]}$
    is the $n \times n$ submatrix of $V$ corresponding to position-position covariances.

    \item If all momentum quadratures $(\hat p_1,\ldots,\hat p_n)$ are measured, the outcomes $\mathbf{p} \in \R^n$ are distributed as $\mathbf{p} \sim \mathcal N(\mathbf{m}_p,\, V_{pp}/2)$, where $\mathbf{m}_p \coloneqq (\mathbf{m}_{2j}(\rho))_{j \in [n]}$ is the vector of even entries of $\mathbf{m}$, and $V_{pp} \coloneqq ( V_{2i,\,2j} )_{i,j \in [n]}$ is the $n \times n$ submatrix of $V$ corresponding to momentum-momentum covariances.
\end{itemize}

From an experimental perspective, homodyne measurements are implemented by interfering the signal mode with a strong local oscillator at a beam splitter, followed by measuring the output intensities. 
The phase of the local oscillator determines the quadrature being measured: a phase of $0$ corresponds to $\hat x$, a phase of $\pi/2$ to $\hat p$, and intermediate values to rotated quadratures.

\paragraph{Heterodyne measurement.}\label{def:hetero}
The heterodyne measurement provides simultaneous (but noisy) information about both the position and momentum quadratures of each bosonic mode. 
For a Gaussian state $\rho$ with first-moment vector $\mathbf m \in \R^{2m}$ and covariance matrix $V \in \R^{2m\times 2m}$, the heterodyne outcomes are distributed according to a classical Gaussian law,
\begin{equation}
    \mathsf{Heterodyne}(\rho) \sim \mathcal{N}\left(\mathbf m,\, \frac{V +\id}{2} \right),
    \label{eq:heterodyne-gaussian}
\end{equation}
where $\mathcal N(\cdot,\cdot)$ is the Gaussian probability distribution defined in~\cref{eq:gaussian-def}.

In practice, heterodyne detection is realized by interfering the $m$-mode state $\rho$ with an auxiliary $m$-mode vacuum state on a balanced beam splitter, and subsequently performing homodyne measurements on both output ports: the position quadratures are measured on one arm, and the momentum quadratures on the other. 

\subsection{Linear algebra and matrix analysis}
We review several results from linear algebra and matrix analysis that will be useful in our proofs.

\medskip
First, we discuss the Euler decomposition of a symplectic matrix.
\begin{lemma}[(Euler decomposition)]\label{lem:euler-dec}
Any symplectic matrix $S \in \mathrm{Sp}_{2m}(\R)$ admits the Euler decomposition
\begin{equation}\label{Euler_dec2}
    S = O_1 Z_S O_2,
\end{equation}
where $O_1,O_2\in \mathrm{O}(2m) \cap \mathrm{Sp}_{2m}(\R)$ are symplectic orthogonal matrices, and
\begin{equation}\label{sq_matttt}
    Z_S \coloneqq \bigoplus_{j=1}^m 
    \begin{pmatrix} z_j & 0 \\[2pt] 0 & z_j^{-1} \end{pmatrix}, 
    \qquad z_j \ge 1,
\end{equation}
is the \emph{squeezing matrix}. The quantity $\norm{S}_\infty = \max_{j \in [m]} z_j$ is referred to as the \emph{squeezing parameter} of $S$. 
\end{lemma}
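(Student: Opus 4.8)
The plan is to prove the Euler decomposition (also called the Bloch--Messiah decomposition) by exploiting the polar decomposition of $S$ together with the special structure that symplecticity forces on the resulting factors. I would begin by writing the polar decomposition $S = P\,O_2$, where $P = \sqrt{SS^\top}$ is symmetric positive definite and $O_2 = P^{-1}S$ is orthogonal. The first key claim is that both $P$ and $O_2$ are themselves symplectic. For $O_2$ this follows because the symplectic group is closed under transpose-inverse and $O_2 = P^{-1}S$ inherits symplecticity once $P$ is shown symplectic; the cleaner route is to show $P = \sqrt{SS^\top}$ is symplectic directly. Here I would use that $S \in \mathrm{Sp}_{2m}(\R)$ means $S\Omega S^\top = \Omega$ (equivalently $SS^\top$ is symplectic, being a product of symplectic matrices since $S^\top$ is symplectic whenever $S$ is), and then invoke the fact that the principal square root of a symplectic positive-definite matrix is again symplectic. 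This last fact is the technical heart: it holds because $M := SS^\top$ satisfies $\Omega M \Omega^{-1} = M^{-1}$ (the defining relation rewritten), and any analytic function of $M$ that respects inversion---in particular the principal square root, since $M$ is positive definite with a well-defined real square root---commutes appropriately with $\Omega$, yielding $\Omega P \Omega^{-1} = P^{-1}$, i.e.\ $P$ is symplectic.

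Next I would diagonalize the symmetric positive-definite symplectic matrix $P$. Since $P$ is symmetric, it is orthogonally diagonalizable, and since it is also symplectic its eigenvalues come in reciprocal pairs $\{z_j, z_j^{-1}\}$ with $z_j \ge 1$. The relation $\Omega P \Omega^{-1} = P^{-1}$ shows that if $v$ is an eigenvector with eigenvalue $\lambda$, then $\Omega v$ is an eigenvector with eigenvalue $\lambda^{-1}$, which pairs up the spectrum. I would then assemble an orthonormal eigenbasis, organized so that each reciprocal pair $(z_j, z_j^{-1})$ corresponds to a symplectic $2\times 2$ block, giving $P = O_1 Z_S O_1^\top$ for some $O_1 \in \mathrm{O}(2m)$ and $Z_S$ of the stated block-diagonal form. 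The orthogonal conjugating matrix $O_1$ can be chosen to be symplectic as well: because $P$ and $Z_S$ are both symplectic and $O_1$ diagonalizes one into the other, $O_1$ lies in $\mathrm{O}(2m) \cap \mathrm{Sp}_{2m}(\R)$, which is exactly the condition required. Substituting back, $S = P\,O_2 = O_1 Z_S O_1^\top O_2$, and setting $O_2' := O_1^\top O_2$ (a product of symplectic orthogonals, hence in $\mathrm{O}(2m)\cap\mathrm{Sp}_{2m}(\R)$) yields the claimed form $S = O_1 Z_S O_2'$.

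The final bookkeeping step is to identify $\norm{S}_\infty = \max_j z_j$. Since $O_1$ and $O_2'$ are orthogonal, they are isometries in the operator norm, so $\norm{S}_\infty = \norm{O_1 Z_S O_2'}_\infty = \norm{Z_S}_\infty$, and because $Z_S$ is diagonal with entries $z_j$ and $z_j^{-1}$ where $z_j \ge 1$, its operator norm is exactly $\max_j z_j$. This justifies calling $\max_j z_j$ the squeezing parameter.

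The main obstacle I anticipate is proving rigorously that the principal square root $P = \sqrt{SS^\top}$ is symplectic and that the diagonalizing orthogonal matrix $O_1$ can simultaneously be taken symplectic---that is, ensuring the eigenbasis can be organized into genuine symplectic blocks rather than merely orthogonal ones. The subtlety lies in the pairing of eigenvectors under the action of $\Omega$: one must check that within each reciprocal eigenvalue pair, the vectors $v$ and $\Omega v$ (suitably normalized) form a symplectic pair, and that degeneracies among the $z_j$ do not obstruct choosing a basis that is simultaneously orthonormal and symplectic. Handling possible eigenvalue degeneracies (including eigenvalues equal to $1$, where $v$ and $\Omega v$ share the same eigenvalue) requires a careful Gram--Schmidt-type argument within each eigenspace that respects the symplectic form; this is where most of the genuine work sits, and I would lean on the standard theory of normal forms for positive-definite symplectic matrices to close the gap cleanly.
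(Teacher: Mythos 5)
The paper never proves this lemma: it appears in the preliminaries as a reviewed fact, with the Euler (Bloch--Messiah) decomposition attributed to the standard CV literature (Serafini's book, cited elsewhere in the paper as the source for this decomposition). So there is no in-paper argument to compare against, and your proposal supplies the standard textbook proof, which is essentially correct. The route you take --- polar decomposition $S = P O_2$ with $P = \sqrt{SS^\top}$; symplecticity of $M = SS^\top$ as a product of symplectic matrices; the rewriting $\Omega M \Omega^{-1} = M^{-1}$ together with the facts that the principal square root commutes with similarity and with inversion, giving $\Omega P \Omega^{-1} = P^{-1}$ and hence $P \in \mathrm{Sp}_{2m}(\R)$; then the reduction of the symmetric positive-definite symplectic $P$ to the normal form $O_1 Z_S O_1^\top$ --- is exactly how this result is established in the literature, and the closing norm identity $\norm{S}_\infty = \norm{Z_S}_\infty = \max_j z_j$ is immediate from orthogonal invariance and $z_j \ge 1$.

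One caveat worth flagging: the sentence asserting that $O_1$ is symplectic ``because $P$ and $Z_S$ are both symplectic and $O_1$ diagonalizes one into the other'' is a non sequitur --- an orthogonal matrix conjugating one symplectic matrix into another need not itself be symplectic, so symplecticity of the two endpoints proves nothing about the conjugator. Your final paragraph repairs this correctly, and that repair is where the real content sits: $O_1$ must be \emph{constructed}, using the pairing $v \mapsto \Omega v$ between the $\lambda$- and $\lambda^{-1}$-eigenspaces of $P$, with attention to signs (for a unit eigenvector $v$ one has $v^\top \Omega (\Omega v) = -\norm{v}_2^2$, so the ordered symplectic pair should be $(v, -\Omega v)$ under the paper's convention for $\Omega$), plus a symplectic Gram--Schmidt procedure within degenerate eigenspaces, including the $\Omega$-invariant eigenvalue-$1$ eigenspace where $v$ and $\Omega v$ are not separated by the spectrum. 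Since you explicitly identify this as the remaining work and sketch the correct mechanism (eigenspace orthogonality handles distinct reciprocal pairs automatically; only degeneracies require the inductive basis construction), the plan is sound and closes.
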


At the unitary level, this yields the decomposition of the corresponding symplectic Gaussian unitary $U_S = U_{O_1} U_{Z_S} U_{O_2}$, where $U_{Z_S}$, called the \emph{squeezing unitary}, is the Gaussian unitary associated with $Z_S$, and $U_{O_1}$ and $U_{O_2}$ are the Gaussian unitaries associated with the symplectic orthogonal matrices $O_1$ and $O_2$. Such unitaries are called \emph{passive}, as they preserve the energy operator defined in~\cref{eq:energy-operator}.

\medskip
Neumann series generalize the notion of geometric series to the setting of operators.
\begin{fact}[(Neumann series)]\label{fact:neumann}
    Let $A$ be a bounded linear operator on a normed space with $\norm{A}_{\infty} < 1$. 
    Then the series
    \begin{equation}
        \sum_{k=0}^\infty A^k
    \end{equation}
    converges in operator norm, to $(\id - A)^{-1}$.
    In particular, $(\id-A)$ is invertible with inverse given by the Neumann series above.
\end{fact}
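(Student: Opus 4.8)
The plan is to prove convergence of the partial sums first and then identify the limit through a telescoping algebraic identity. First I would establish that the partial sums $B_N \coloneqq \sum_{k=0}^N A^k$ form a Cauchy sequence in operator norm. By submultiplicativity of the operator norm one has $\norm{A^k}_\infty \le \norm{A}_\infty^k$, so for $N > M$ the triangle inequality gives $\norm{B_N - B_M}_\infty \le \sum_{k=M+1}^N \norm{A}_\infty^k$, which is the tail of a convergent geometric series precisely because $\norm{A}_\infty < 1$. Hence $(B_N)_N$ is Cauchy, and since we work in a complete space (the finite-dimensional matrix algebra relevant to our application, or more generally a Banach algebra of bounded operators), it converges in operator norm to some bounded operator $B$.

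Next I would identify $B$ as the inverse of $\id - A$. The key is the telescoping identity $(\id - A)\,B_N = \id - A^{N+1}$, verified by expanding the product and cancelling consecutive terms. Since $\norm{A^{N+1}}_\infty \le \norm{A}_\infty^{N+1} \to 0$ as $N \to \infty$, passing to the limit yields $(\id - A)\,B = \id$. Multiplying the partial sums on the other side gives $B_N\,(\id - A) = \id - A^{N+1}$ by the same computation, so $B\,(\id - A) = \id$ in the limit as well. Thus $B$ is a two-sided inverse of $\id - A$, which proves that $\id - A$ is invertible with $(\id - A)^{-1} = \sum_{k=0}^\infty A^k$.

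The argument is entirely routine and presents no genuine obstacle; the only point requiring care is that convergence of the Cauchy sequence relies on completeness of the underlying operator space. In our setting this is automatic, since $A$ is a finite-dimensional matrix, but the same proof applies verbatim to bounded operators on any Banach space, which is why the statement can be phrased at that level of generality.
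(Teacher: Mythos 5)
Your proof is correct and is the standard textbook argument; the paper states this as a background \emph{Fact} without providing any proof, so your Cauchy-partial-sums plus telescoping-identity derivation is exactly the argument the paper implicitly relies on. Your caveat is also well taken: the hypothesis ``normed space'' in the statement should really be ``Banach space'' (completeness is what makes the Cauchy sequence converge), and you correctly observe that this is automatic in the finite-dimensional matrix setting where the paper actually invokes the fact.
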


In our symplectic regularization procedure, it will be necessary to work with 
matrix square roots. For this purpose, we recall the following standard result
on the existence and uniqueness of the principal square root.

\begin{fact}[(Existence and uniqueness of the principal square root, see e.g.~{\cite[Theorem~1.29]{higham2008functions}})]\label{thm:principal-root}
Let $A \in \C^{n\times n}$ have no eigenvalues on $\R^-$. Then there exists a unique square root $X$ of $A$ whose eigenvalues all lie in the open right half-plane. This $X$ is called the principal square root of $A$, and we write $X = \sqrt{A}= A^{1/2}$. If $A$ is real, then $\sqrt{A}$ is also real.
\end{fact}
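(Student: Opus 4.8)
The plan is to construct $\sqrt A$ via the holomorphic functional calculus, prove uniqueness by a commutation argument that exploits the open--right--half--plane constraint on the eigenvalues, and deduce reality from conjugation symmetry. Throughout I read $\R^-$ as the closed negative real axis $(-\infty,0]$, so that the hypothesis in particular forces $A$ to be nonsingular.

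\textbf{Existence.} Let $g$ denote the principal branch of the square root, i.e.\ the unique branch holomorphic on $\C\setminus\R^-$ with $\operatorname{Re} g(z)>0$ there. Since $A$ has no eigenvalue on $\R^-$, $g$ is holomorphic on an open neighbourhood of $\spec(A)$, and I would set $X \coloneqq g(A) = \frac{1}{2\pi i}\oint_\Gamma g(\zeta)\,(\zeta\id - A)^{-1}\,d\zeta$, where $\Gamma\subset\C\setminus\R^-$ is a contour enclosing $\spec(A)$. Because the functional calculus is a unital algebra homomorphism and $g(z)^2=z$ on its domain, it follows at once that $X^2=A$; by the spectral mapping theorem $\spec(X)=\{g(\lambda):\lambda\in\spec(A)\}$, each element of which lies in the open right half-plane by the defining property of $g$. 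Equivalently, passing to the Jordan form of $A$ and applying $g$ block-wise makes manifest that $X$ is a polynomial in $A$, a fact I will use below.

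\textbf{Uniqueness.} Let $Y$ be any square root of $A$ whose eigenvalues all lie in the open right half-plane. Since $Y$ commutes with $A=Y^2$ and $X$ is a polynomial in $A$, $X$ and $Y$ commute, hence are simultaneously triangularizable. From $0=X^2-Y^2=(X-Y)(X+Y)$ and the fact that (in a common triangularization) each eigenvalue of $X+Y$ is a sum of an eigenvalue of $X$ and one of $Y$ --- both of strictly positive real part --- the matrix $X+Y$ has no zero eigenvalue and is therefore invertible. Multiplying $(X-Y)(X+Y)=0$ by $(X+Y)^{-1}$ yields $X=Y$. This shows the square root with spectrum in the open right half-plane is unique, and we write it $X=\sqrt A$.

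\textbf{Real case and main obstacle.} If $A$ is real, I would invoke the reflection identity $\overline{g(z)}=g(\bar z)$ for $z\notin\R^-$ (valid because $g$ is real on $(0,\infty)$ and holomorphic off the cut). Conjugating the Cauchy representation gives $\overline X=g(\overline A)=g(A)=X$ since $\overline A=A$, so $X$ is real; alternatively, $\overline X$ is again a square root with $\spec(\overline X)=\{\overline{g(\lambda)}\}=\{g(\bar\lambda)\}$ in the open right half-plane, whence $\overline X=X$ by the uniqueness just proved. The only genuinely delicate point in the whole argument is the invertibility of $X+Y$ in the uniqueness step: it is precisely the open--right--half--plane constraint that forbids a cancellation $\xi+\eta=0$ between eigenvalues of $X$ and $Y$, and this is exactly where the hypothesis of no negative real eigenvalues is indispensable. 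The existence step is otherwise routine, needing only that this same hypothesis place $\spec(A)$ inside the domain of holomorphy of $g$.
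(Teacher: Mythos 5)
Your proof is correct and complete. Note, however, that the paper does not prove this statement at all: it is imported as a \emph{Fact} with a citation to Higham~\cite[Theorem~1.29]{higham2008functions}, so there is no in-paper argument to compare against; the comparison is with Higham's proof. Higham proceeds via the Jordan canonical form and his classification of \emph{all} square roots of $A$: each square root corresponds to a choice of branch of $z^{1/2}$ on each Jordan block (plus, for derogatory $A$, non-primary roots), and exactly one choice places the entire spectrum in the open right half-plane; realness follows because the principal root is a primary matrix function, hence a polynomial in $A$ that can be taken with real coefficients when $A$ is real. Your route is the standard alternative: the holomorphic functional calculus with the principal branch $g$ gives existence and $\spec(X)\subset\{\operatorname{Re}\,\zeta>0\}$ by spectral mapping; uniqueness follows from $X=p(A)$ for a polynomial $p$, so any competitor $Y$ (which commutes with $A=Y^2$) commutes with $X$, and after simultaneous triangularization the identity $(X-Y)(X+Y)=0$ forces $X=Y$ because the diagonal entries of $X+Y$ are sums $\alpha_i+\beta_i$ of eigenvalues with strictly positive real parts --- you correctly flag this invertibility of $X+Y$ as the one place where the open-half-plane hypothesis is indispensable; realness then follows either from $\overline{g(z)}=g(\bar z)$ or, more economically, from applying your own uniqueness to $\overline{X}$. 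Each approach buys something: yours gives a shorter, self-contained uniqueness argument avoiding Jordan-block bookkeeping, while Higham's classification yields strictly more information (an enumeration of all square roots, not just the principal one), which is why it is the reference of choice in the matrix-function literature.
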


In practice, when $A \in \R^{n\times n}$ has eigenvalues confined to the open right half-plane, 
the principal root can be written in terms of the matrix logarithm as $\sqrt{A} = \exp\left(\tfrac{1}{2}\log(A)\right)$, which immediately ensures the defining property $\sqrt{A}\cdot\sqrt{A} = A$. More generally, the principal $p$-th root admits the following useful integral representation:

\begin{lemma}[(Matrix $p$-th root representation and perturbation, see e.g.~{\cite[Theorem~2.1]{cardoso2012computation}})]\label{thm:root-integral-rep}
Let $A \in \R^{n \times n}$ have no eigenvalues on the closed negative real axis and $p > 1$.  
Then for any $r \ge (2\norm{A}_\infty)^{1/p}$,
\begin{equation}
A^{1/p}
= \frac{p \sin(\pi/p)}{\pi} \cdot
A \left(
\int_{0}^{r} (t^{p} \id + A)^{-1}  dt
+ \int_{r}^{\infty} (t^{p} \id + A)^{-1}  dt
\right),
\end{equation}
where
\begin{equation}
    \NORM{\int_{r}^{\infty} (t^{p} \id + A)^{-1} dt}_\infty \le \frac{2 r^{1-p}}{p-1}.    
\end{equation}
Moreover, if $P \in \R^{n \times n}$ is such that $A+P$ has no eigenvalues on $\R^-$, then
\begin{equation}
(A+P)^{1/p} - A^{1/p}
= \frac{p \sin(\pi/p)}{\pi}
\int_{0}^{\infty}
t^{p} \left(t^{p}\id + A + P\right)^{-1} P \left(t^{p}\id + A\right)^{-1}  dt.
\end{equation}
\end{lemma}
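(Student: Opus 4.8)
The plan is to reduce the whole statement to a single scalar integral identity and then lift it to matrices by the holomorphic functional calculus, after which the tail bound becomes a routine resolvent estimate and the perturbation formula follows from the resolvent identity. Throughout I write $c \coloneqq \frac{p\sin(\pi/p)}{\pi}$. First I would establish the scalar identity: for $a>0$, the substitution $t = a^{1/p}u$ reduces $a\int_0^\infty (t^p+a)^{-1}\,dt$ to $a^{1/p}\int_0^\infty (1+u^p)^{-1}\,du$, and the classical Euler integral $\int_0^\infty (1+u^p)^{-1}\,du = \frac{\pi}{p\sin(\pi/p)}$ (valid for $p>1$) gives $a^{1/p} = c\,a\int_0^\infty (t^p+a)^{-1}\,dt$. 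Since both sides are holomorphic on $\C\setminus\R^-$ and the integral converges there, analytic continuation extends the identity to all $a\in\C\setminus\R^-$.

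Next I would lift this to matrices. Because $A$ has no eigenvalues on $\R^-$, the principal root $A^{1/p}$ is the primary matrix function associated to $z\mapsto z^{1/p}$ (well-defined by \cref{thm:principal-root}). Applying the functional calculus---equivalently, passing to the Jordan form and invoking the scalar identity together with its derivatives---yields $A^{1/p} = c\,A\int_0^\infty (t^p\id+A)^{-1}\,dt$, and splitting $\int_0^\infty = \int_0^r + \int_r^\infty$ is then immediate. For the tail bound, for $t\ge r\ge(2\norm{A}_\infty)^{1/p}$ one has $t^p\ge 2\norm{A}_\infty$, so $\norm{t^{-p}A}_\infty\le\frac12$ and \cref{fact:neumann} gives $\NORM{(\id+t^{-p}A)^{-1}}_\infty\le 2$; hence $\NORM{(t^p\id+A)^{-1}}_\infty\le 2t^{-p}$, and $\int_r^\infty 2t^{-p}\,dt = \frac{2r^{1-p}}{p-1}$ delivers the claimed estimate.

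For the perturbation formula I would write $R(t)\coloneqq(t^p\id+A)^{-1}$ and $\tilde R(t)\coloneqq(t^p\id+A+P)^{-1}$, applying the integral representation to both $A$ and $A+P$ (legitimate since both have spectrum off $\R^-$). The elementary identities $A\,R(t) = \id - t^p R(t)$ and $(A+P)\tilde R(t) = \id - t^p\tilde R(t)$ give $(A+P)\tilde R(t) - A\,R(t) = t^p\bigl(R(t)-\tilde R(t)\bigr)$, while the resolvent identity yields $R(t)-\tilde R(t) = \tilde R(t)\,P\,R(t)$. Substituting these into the difference of the two integral representations produces exactly $(A+P)^{1/p} - A^{1/p} = c\int_0^\infty t^p\,\tilde R(t)\,P\,R(t)\,dt$, as claimed.

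The hard part will be the second step: rigorously justifying that the scalar integral identity lifts to a genuine operator identity for an arbitrary (possibly non-normal and non-diagonalizable) $A$ whose spectrum avoids $\R^-$, and verifying that the matrix-valued integral converges in operator norm---the integrand is bounded near $t=0$ because $A$ is invertible, and integrable at infinity by the resolvent bound above, which also licenses interchanging the functional calculus with the integral. Once this representation is secured, the tail estimate is a routine resolvent calculation and the perturbation formula is pure algebra via the resolvent identity.
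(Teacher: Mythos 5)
Your proof is correct, but note that there is nothing in the paper to compare it against: the paper does not prove \cref{thm:root-integral-rep} at all, importing it verbatim from the cited reference (Cardoso--Sadeghi, Theorem~2.1). Your argument is essentially the standard one underlying that reference: the Euler integral $\int_0^\infty(1+u^p)^{-1}\,du=\pi/(p\sin(\pi/p))$ together with the substitution $t=a^{1/p}u$ gives the scalar identity for $a>0$; analytic continuation extends it to $\C\setminus\R^-$, where $t^p\id+A$ is invertible for every $t\ge 0$ precisely because the spectrum avoids the \emph{closed} negative axis (this is also what bounds the integrand near $t=0$); the functional calculus lifts the identity to $A$; the tail bound is the Neumann-series resolvent estimate of \cref{fact:neumann}; and the perturbation formula is exact algebra from $A\,R(t)=\id-t^pR(t)$ together with the second resolvent identity $R(t)-\tilde R(t)=\tilde R(t)\,P\,R(t)$, with all integrals absolutely convergent since the integrands are $O(t^{-p})$ at infinity and $p>1$. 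Two points worth tightening in a final write-up: (i) \cref{thm:principal-root} as stated in the paper covers only square roots, so for general $p>1$ you should define $A^{1/p}$ directly through the principal branch of $z^{1/p}$ via the Cauchy-integral functional calculus (or cite the general primary-function result, e.g.\ Higham's Theorem~7.2), which is in effect what your construction uses; (ii) the lifting step you correctly flag as the crux is cleanest in exactly the form you sketch: write $A^{1/p}=\frac{1}{2\pi i}\oint_\Gamma z^{1/p}(z\id-A)^{-1}\,dz$, insert the scalar representation of $z^{1/p}$ on the contour $\Gamma\subset\C\setminus\R^-$, and swap the $z$- and $t$-integrals by absolute convergence (uniform $O(t^{-p})$ decay of $(t^p+z)^{-1}$ on the compact contour). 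With those justifications spelled out, your proof is complete and coincides, modulo presentation, with the literature proof the paper points to.
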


\subsection{Concentration inequalities}
Next, we recall a standard concentration bound for Gaussian random matrices, which will be a key tool for analyzing the operator norm.

\begin{lemma}[(Tail bound for Gaussian operator norm; rescaled version of~{\cite[{Theorem~II.13}]{DavidsonSzarek2001}})]
\label{lem:gauss-opnorm-infty}
Let $A\in\R^{m\times n}$ have i.i.d.\ $\mathcal{N}(0,1)$ entries. Then, for all $t\ge 0$,
\begin{equation}
    \Pr\left[\norm{A}_{\infty} \ge \sqrt{m}+\sqrt{n}+t\right] \le e^{-t^2/2}.
\end{equation}
\end{lemma}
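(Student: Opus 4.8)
The plan is to obtain the stated inequality directly from the cited Davidson--Szarek bound by a change of normalization, since the only substantive differences between the two statements are the variance convention for the Gaussian entries and, possibly, an orientation restriction of the form $m \ge n$. First I would recall their Theorem~II.13~\cite{DavidsonSzarek2001} in the normalized form in which it is usually stated: for a Gaussian matrix whose entries have variance $1/N$ (equivalently, for the normalized matrix $N^{-1/2}A$), the largest singular value concentrates near $1 + \sqrt{n/N}$ with a sub-Gaussian tail of rate $e^{-N s^2/2}$.

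The key step is then the positive homogeneity of the operator norm. A matrix $A$ with i.i.d.\ $\mathcal N(0,1)$ entries can be written as $A = \sqrt{N}\,B$, where $B$ has i.i.d.\ $\mathcal N(0,1/N)$ entries, so that $\norm{A}_\infty = \sqrt{N}\,\norm{B}_\infty$. Substituting this into the Davidson--Szarek tail bound, the threshold $1 + \sqrt{n/N} + s$ for $\norm{B}_\infty$ becomes $\sqrt{N} + \sqrt{n} + \sqrt{N}\,s$ for $\norm{A}_\infty$; setting $t = \sqrt{N}\,s$ converts the exponent $-N s^2/2$ into exactly $-t^2/2$, which yields the claimed inequality. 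I would additionally discharge any assumption that the matrix has more rows than columns by invoking $\norm{A}_\infty = \norm{A^\top}_\infty$ together with the invariance of the threshold $\sqrt{m} + \sqrt{n}$ under the exchange $m \leftrightarrow n$, so that the bound holds for arbitrary dimensions.

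Equivalently, if a self-contained argument is preferred, the same estimate follows from the two classical ingredients that already underlie the Davidson--Szarek result. First, the map $A \mapsto \norm{A}_\infty$ is $1$-Lipschitz with respect to the Frobenius (hence Euclidean) norm on the $mn$ entries, since $\bigl|\norm{A}_\infty - \norm{B}_\infty\bigr| \le \norm{A-B}_\infty \le \norm{A-B}_2$; Gaussian concentration of measure (the Borell--Tsirelson--Ibragimov--Sudakov inequality) then gives $\Pr[\norm{A}_\infty \ge \E\norm{A}_\infty + t] \le e^{-t^2/2}$. Second, a Slepian/Gordon-type Gaussian comparison inequality bounds the mean by $\E\norm{A}_\infty \le \sqrt{m} + \sqrt{n}$. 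Combining these two facts reproduces the inequality with exactly the constant $\sqrt{m}+\sqrt{n}$ in the threshold.

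The main obstacle is purely one of bookkeeping: matching the exact normalization and orientation conventions of the cited theorem so that the substitution $t = \sqrt{N}\,s$ produces the clean exponent $e^{-t^2/2}$ and the threshold $\sqrt{m}+\sqrt{n}+t$ with no stray constants, and verifying the strict-versus-nonstrict form of the tail at the boundary $t = 0$. No genuinely new estimate is required beyond the cited result and the standard homogeneity and Lipschitz properties of the operator norm.
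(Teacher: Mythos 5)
Your proposal is correct and follows essentially the same route as the paper's proof: both rescale the Davidson--Szarek bound (stated for entries of variance $1/N$, equivalently for $A/\sqrt{N}$ with $N$ the larger dimension) by homogeneity of the operator norm and substitute $t=\sqrt{N}\,s$ to turn the exponent $e^{-Ns^2/2}$ into $e^{-t^2/2}$, handling the orientation either by swapping $m,n$ (the paper) or via $\norm{A}_\infty=\norm{A^\top}_\infty$ (yours), which are equivalent. The self-contained Lipschitz-plus-Gordon argument you sketch as an alternative is also sound but is not needed, since the cited theorem already packages exactly those two ingredients.
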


\begin{proof}
Assume $m\le n$ (otherwise swap $m,n$) and set $\Gamma:=A/\sqrt{n}$ and $\beta:=m/n$.
Davidson--Szarek~\cite{DavidsonSzarek2001} state that, for all $\tau>0$,
\begin{equation}
    \Pr\left[s_1(\Gamma)\ge 1 + \sqrt{\beta} + \tau\right] \le e^{-n\tau^2/2}.
\end{equation}
Since $s_1(\Gamma) = s_1(A)/\sqrt{n} = \norm{A}_{\infty}/\sqrt{n}$ and $\sqrt{n}(1+\sqrt{\beta})=\sqrt{n}+\sqrt{m}$, substituting $\tau=t/\sqrt{n}$ yields the claimed inequality.
\end{proof}

The next concentration bound will also be useful.
\begin{lemma}[(Estimation of first moments, see e.g.~{\cite[Eq.~(1.1)]{lugosi2017subgaussianestimatorsmeanrandom}})]
\label{lem:estrelative}
Let $\{\hat x_i\}_{i=1}^{N}$ be $N$ i.i.d.~samples from an $n$-dimensional Gaussian distribution 
$\mathcal N(\mu,\Sigma)$. Define the empirical mean as $\hat \mu  \coloneqq \frac{1}{N}\sum_{i=1}^{N} \hat x_i$. 
Then, for every $\delta \in (0,1)$,
\begin{align}
\Pr \left[ \norm{\hat \mu - \mu}_2 \,  \le \frac{\chi_{n,\delta}}{\sqrt{N}} \sqrt{\norm{\Sigma}_\infty} \right] & \ge 1-\delta, 
\end{align}
where $\chi_{n,\delta} \coloneqq \sqrt{n} + \sqrt{2\log(1/\delta)}$.
\end{lemma}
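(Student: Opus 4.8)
The plan is to reduce the claim to the standard Gaussian concentration inequality for the Euclidean norm of a standard normal vector. First I would use the fact that a sum (hence average) of independent Gaussians is Gaussian: since the $\hat x_i$ are i.i.d.\ $\mathcal N(\mu,\Sigma)$, the empirical mean satisfies $\hat\mu \sim \mathcal N(\mu,\Sigma/N)$, so the centered error admits the representation $\hat\mu - \mu = \tfrac{1}{\sqrt N}\,\Sigma^{1/2} g$, where $g \sim \mathcal N(0,\id)$ is a standard $n$-dimensional Gaussian and $\Sigma^{1/2}$ is the symmetric positive-definite square root (well defined since $\Sigma$ is positive definite; cf.~\cref{thm:principal-root}). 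Taking Euclidean norms and using submultiplicativity of the operator norm together with $\norm{\Sigma^{1/2}}_\infty = \sqrt{\norm{\Sigma}_\infty}$ yields the deterministic bound
\begin{equation*}
    \norm{\hat\mu - \mu}_2 \;\le\; \frac{1}{\sqrt N}\,\sqrt{\norm{\Sigma}_\infty}\,\norm{g}_2 .
\end{equation*}
Hence it suffices to control $\norm{g}_2$ with high probability.

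Second, I would invoke the Gaussian concentration of measure for Lipschitz functions. The map $g \mapsto \norm{g}_2$ is $1$-Lipschitz with respect to the Euclidean metric, and by Jensen's inequality $\E[\norm{g}_2] \le \sqrt{\E[\norm{g}_2^2]} = \sqrt n$. The Gaussian Lipschitz concentration (Borell--TIS) inequality then gives, for every $t \ge 0$,
\begin{equation*}
    \Pr\!\left[\norm{g}_2 \ge \sqrt n + t\right] \;\le\; \Pr\!\left[\norm{g}_2 \ge \E[\norm{g}_2] + t\right] \;\le\; e^{-t^2/2}.
\end{equation*}

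Finally, I would set $t = \sqrt{2\log(1/\delta)}$, so that $e^{-t^2/2} = \delta$ and $\sqrt n + t = \chi_{n,\delta}$. On the complementary event, which has probability at least $1-\delta$, one has $\norm{g}_2 \le \chi_{n,\delta}$; substituting this into the deterministic bound above produces exactly $\norm{\hat\mu - \mu}_2 \le \tfrac{\chi_{n,\delta}}{\sqrt N}\sqrt{\norm{\Sigma}_\infty}$, as claimed. The only nontrivial ingredient is the Borell--TIS inequality in the second step; every other piece is elementary, being the closure of Gaussians under affine maps and the submultiplicativity of the operator norm. I therefore expect no genuine obstacle, only the need to state the Lipschitz constant ($1$) and the mean bound ($\E\norm{g}_2 \le \sqrt n$) correctly so that the constants line up to give precisely $\chi_{n,\delta} = \sqrt n + \sqrt{2\log(1/\delta)}$.
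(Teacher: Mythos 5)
Your proposal is correct. Note, however, that the paper does not prove this lemma at all: it imports it as a known fact, citing Eq.~(1.1) of Lugosi--Mendelson, so there is no in-paper argument to compare against; what you have supplied is a complete, self-contained derivation of the cited bound. Your route --- the distributional identity $\hat\mu-\mu \stackrel{d}{=} N^{-1/2}\Sigma^{1/2}g$ with $g\sim\mathcal N(0,\id)$, the norm bound $\norm{\Sigma^{1/2}}_\infty=\sqrt{\norm{\Sigma}_\infty}$, Jensen's bound $\E\norm{g}_2\le\sqrt n$, and Borell--TIS with Lipschitz constant $1$ --- is exactly the standard proof of this inequality, and all constants line up as you claim. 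One remark worth recording: the reference's Eq.~(1.1) is actually slightly sharper, namely $\norm{\hat\mu-\mu}_2 \le \sqrt{\Tr(\Sigma)/N}+\sqrt{2\norm{\Sigma}_\infty\log(1/\delta)/N}$, and your own argument yields it with no extra work if you apply Borell--TIS to the Gaussian vector $h\coloneqq\Sigma^{1/2}g$ directly rather than factoring out $\norm{\Sigma^{1/2}}_\infty$ first: the map $g\mapsto\norm{\Sigma^{1/2}g}_2$ is $\sqrt{\norm{\Sigma}_\infty}$-Lipschitz and $\E\norm{h}_2\le\sqrt{\Tr\Sigma}$. The paper's statement then follows from the relaxation $\Tr\Sigma\le n\norm{\Sigma}_\infty$, so your version is a mild weakening of the cited result, but it is precisely the form the paper states and uses. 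A cosmetic point: for symmetric positive-definite $\Sigma$ the square root is more naturally justified by the spectral theorem than by the principal-root fact (\cref{thm:principal-root}), though that fact does apply since $\Sigma$ has no spectrum on $\R^-$.
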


\subsection{Energy-constrained diamond distance} \label{sec:energy-constrained-diamond-distance}
The setting of continuous-variable quantum systems necessitates a modification to the standard discrete-variable distance metrics. This section serves as an exposition to the so-called \emph{energy-constrained diamond distance}.

In quantum learning theory, quantifying the error in the estimation of quantum objects (e.g., states or processes) requires distance measures with a clear physical interpretation.
For quantum state tomography, the most widely used notion of distance is the trace distance~\cite{MARK, Sumeet_book}. Given two quantum states $\rho_1$ and $\rho_2$, the trace distance is defined as 
\begin{equation}
    \frac{1}{2}\norm{\rho_1 - \rho_2}_{1}.
\end{equation}
This measure has a strong operational meaning due to the Holevo--Helstrom theorem: if one is given a single copy of an unknown state, promised to be either $\rho_1$ or $\rho_2$ with equal prior probability $1/2$, the optimal success probability of correctly identifying the state is 
\begin{equation}
    \frac{1}{2}\left(1 + \frac{1}{2}\norm{\rho_1 - \rho_2}_{1}\right).
\end{equation}
Moreover, if two states are close in trace distance, then the expectation values of any bounded observable on them must also be close.

For tomography of discrete-variable (DV) quantum channels, the most common distance measure is the diamond distance~\cite{MARK,Sumeet_book}. For two channels $\Phi_1$ and $\Phi_2$, it is defined as
\begin{equation}
    \frac{1}{2}\norm{\Phi_1 - \Phi_2}_{\diamond} \coloneqq 
    \sup_{\rho_{AA'}} \frac{1}{2}\norm{\Id_{A}\otimes \Phi_1\left(\rho_{AA'}\right) - \Id_{A}\otimes \Phi_2\left(\rho_{AA'}\right)}_{1},
\end{equation}
where the supremum is over all input states $\rho_{AA'}$ of the channel system $A'$ together with an arbitrary auxiliary  system $A$. This distance also has a direct operational interpretation via the Holevo--Helstrom theorem: if one has access to a single use of an unknown channel, promised to be either $\Phi_1$ or $\Phi_2$ with equal prior probability $1/2$, then the maximum probability of correctly identifying the channel is 
\begin{equation}
    \frac{1}{2}\left(1 + \frac{1}{2}\norm{\Phi_1 - \Phi_2}_{\diamond}\right).
\end{equation}
Operationally, the maximum is taken over all protocols consisting of preparing an input state (possibly entangled with an auxiliary system), sending it through the unknown channel, and performing a measurement. Furthermore, if two channels are close in diamond distance, then the expectation values of any bounded observable on their outputs are also close, regardless of the chosen input state. Hence, the diamond distance provides a physically meaningful notion of distance for DV channels.

The situation is fundamentally different for continuous-variable channels. In this case, the diamond distance is \emph{not} physically meaningful. The issue is that its definition involves optimization over input states with unbounded energy, which are not physically realizable. This feature is absent in the DV case. Even more strikingly, there exist CV unitaries that are intuitively very similar, yet maximally far apart in diamond distance. For example, the diamond distance between any two distinct beam splitters is maximal:
\begin{equation}
\frac{1}{2}\norm{\mathcal{U}_{\lambda_1} - \mathcal{U}_{\lambda_2}}_\diamond = 1 
\quad \text{for all } \lambda_1,\lambda_2\in[0,1], \, \lambda_1 \neq \lambda_2,
\end{equation}
where $\mathcal{U}_\lambda$ denotes the beam splitter unitary with transmissivity $\lambda$~\cite{BUCCO}. This follows from the fact that one can input coherent states of arbitrarily high energy, as shown in~\cite{VV-diamond}. Thus, even two beam splitters with almost identical transmissivity are maximally far in diamond distance. This counterintuitive feature stems precisely from the inclusion of arbitrarily high-energy input states in the optimization.

To address this problem, Winter~\cite{VV-diamond} and Shirokov~\cite{Shirokov2018} introduced the \emph{energy-constrained diamond distance}, refining the original (slightly different) version proposed in Ref.~\cite{PLOB}. Given a parameter $\bar{n} > 0$ and two CV quantum channels $\Phi_1$ and $\Phi_2$, it is defined as
\begin{equation}
    \frac{1}{2}\norm{\Phi_1 - \Phi_2}_{\diamond,\bar{n}} \coloneqq 
    \sup_{\substack{\rho_{AA'}, \; \Tr [\rho_{AA'}(\hat{N}_A\otimes \id_{A'})]\le \bar{n}}} 
    \, \frac{1}{2}\norm{\Id_{A}\otimes \Phi_1\left(\rho_{AA'}\right) - \Id_{A}\otimes \Phi_2\left(\rho_{AA'}\right)}_{1},
\end{equation}
where the supremum is restricted to states whose mean photon number at the channel input system is bounded by $\bar{n}$. The operational meaning remains essentially the same: if one has access to a single use of an unknown channel, promised to be either $\Phi_1$ or $\Phi_2$ with equal prior probability $1/2$, and if there is an input energy constraint $\bar{n}$, then the optimal probability of correctly identifying the channel is 
\begin{equation}
    \frac{1}{2}\left(1 + \frac{1}{2}\norm{\Phi_1 - \Phi_2}_{\diamond,\bar{n}}\right).    
\end{equation}
This corresponds to the optimal distinguishing probability among all protocols where one prepares an input state (possibly entangled with an auxiliary system) whose input mean photon number is bounded by $\bar{n}$, sends it through the unknown channel, and performs a measurement. In conclusion, throughout our tomography analysis, we will use the energy-constrained diamond distance as the measure of error in the estimated channel.

\subsection{Upper bound on the energy-constrained diamond distance between Gaussian unitaries}
The energy-constrained diamond norm admits quantitative continuity bounds for Gaussian unitary channels, two of which 
we recall from~\cite{EC-diamond} and use later in our analysis. 

\begin{proposition}[(Bounds for displacement channels, see e.g.~{\cite[Eq.~(3)]{EC-diamond}})]
\label{prop:displacement_bound}
Let $\mathbf{r}_1,\mathbf{r}_2\in\R^{2m}$, and let $\mathcal{D}_{\mathbf{r}_1},\mathcal{D}_{\mathbf{r}_2}$ denote the displacement channels defined by $\mathcal{D}_{\mathbf{r}_1}(\rho) \coloneqq D_{\mathbf{r}_1} \rho D_{\mathbf{r}_1}^\dagger$ and $\mathcal{D}_{\mathbf{r}_2}(\rho) \coloneqq D_{\mathbf{r}_2} \rho D_{\mathbf{r}_2}^\dagger$, for all states $\rho$. Then, for every mean photon-number $\bar n \ge 0$,
\begin{align}
    \frac{1}{2}\norm{\mathcal{D}_{\mathbf{r}_1} - \mathcal{D}_{\mathbf{r}_2}}_{\diamond,\bar n} \le
    \sin\left(\min\left\{\frac{\bigl(\sqrt{\bar n}+\sqrt{\bar n+1}\bigr)}{\sqrt{2}}\cdot\norm{\mathbf{r}_1-\mathbf{r}_2}_2, \; \frac{\pi}{2}\right\}\right).
\end{align}
\end{proposition}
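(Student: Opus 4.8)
The plan is to reduce the statement to a single displacement channel compared against the identity, then to turn the resulting energy-constrained diamond distance into an overlap (characteristic-function) optimization, and finally to lower-bound that overlap using the energy constraint together with a one-variable inequality. Throughout I write $\mathbf s \coloneqq \mathbf r_1 - \mathbf r_2$ (its sign will be immaterial).

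First I would exploit translation invariance. Since the trace norm is invariant under conjugation by the fixed unitary $\id_A \otimes D_{\mathbf r_1}$, and since the Weyl relations give $D_{\mathbf r_1}^\dagger D_{\mathbf r_2} = e^{i\varphi} D_{\mathbf r_2-\mathbf r_1}$ with a phase $\varphi$ that cancels at the channel level, applying this conjugation to the \emph{output} leaves every input state $\rho_{AA'}$ (hence its mean photon number) untouched while transforming the pair $(\mathcal D_{\mathbf r_1},\mathcal D_{\mathbf r_2})$ into $(\Id,\mathcal D_{\mathbf s})$. Because only the output is conjugated, this reduction respects the energy constraint regardless of which subsystem it is imposed on, yielding $\tfrac12\norm{\mathcal D_{\mathbf r_1}-\mathcal D_{\mathbf r_2}}_{\diamond,\bar n}=\tfrac12\norm{\Id-\mathcal D_{\mathbf s}}_{\diamond,\bar n}$.

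Next I would evaluate the diamond distance between a unitary channel and the identity. For any input $|\Psi\rangle_{AA'}$ both outputs $|\Psi\rangle$ and $(\id_A\otimes D_{\mathbf s})|\Psi\rangle$ are pure, so the pure-state identity $\tfrac12\norm{\ketbrasub{\psi}-\ketbrasub{\phi}}_1=\sqrt{1-\abs{\langle\psi|\phi\rangle}^2}$ applies exactly, with $\langle\Psi|(\id_A\otimes D_{\mathbf s})|\Psi\rangle=\Tr[\rho\,D_{\mathbf s}]$ where $\rho$ is the channel-input marginal. Since every $\rho$ with $\Tr[\rho\hat N]\le\bar n$ is the marginal of some purification, and since the output trace distance is convex in the input (so the supremum is attained on pure inputs), this gives the exact expression $\tfrac12\norm{\Id-\mathcal D_{\mathbf s}}_{\diamond,\bar n}=\sqrt{1-\inf_{\rho:\,\Tr[\rho\hat N]\le\bar n}\abs{\Tr[\rho D_{\mathbf s}]}^2}$, so everything reduces to lower-bounding $\abs{\Tr[\rho D_{\mathbf s}]}$. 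I would then collapse $D_{\mathbf s}$ to one effective quadrature: as $\Omega$ is orthogonal, $D_{\mathbf s}=\exp(-i\,\mathbf s^\top\Omega\hat{\mathbf R})=\exp(-i\norm{\mathbf s}_2\,\hat q)$ with $\hat q\coloneqq\mathbf u^\top\hat{\mathbf R}$ a unit quadrature ($\mathbf u=\Omega^\top\mathbf s/\norm{\mathbf s}_2$), so $\Tr[\rho D_{\mathbf s}]=\E_\mu[e^{-i\norm{\mathbf s}_2 X}]$ for $\mu$ the law of $\hat q$ in $\rho$. The energy input is the variance bound $\langle\hat q^2\rangle_\rho=\tfrac12\mathbf u^\top V(\rho)\mathbf u+(\mathbf u^\top\mathbf m(\rho))^2\le c^2$ with $c\coloneqq(\sqrt{\bar n}+\sqrt{\bar n+1})/\sqrt2$: maximizing $\langle\hat q^2\rangle$ over states with $\langle\hat N\rangle_\rho=\tfrac14\Tr[V(\rho)-\id]+\tfrac12\norm{\mathbf m(\rho)}_2^2\le\bar n$ is a small optimization whose optimum is a single-mode squeezed vacuum squeezed along $\mathbf u$ (photon budget spent in any other direction is wasted), giving $\max\langle\hat q^2\rangle=(\sqrt{\bar n}+\sqrt{\bar n+1})^2/2=c^2$.

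To finish I would invoke the one-variable inequality: for a real random variable $X$ with $\E[X^2]=\sigma^2$ and $t\sigma\le\pi/2$, one has $\abs{\E[e^{-itX}]}\ge\E[\cos(tX)]\ge\cos(t\sigma)$, where the last step uses the dual certificate that the parabola $p(x)=a+bx^2$ tangent to $x\mapsto\cos(tx)$ at $x=\pm\sigma$ has $b\le0$ and lies below $\cos(tx)$ on all of $\R$, whence $\E[\cos tX]\ge a+b\E[X^2]\ge a+b\sigma^2=\cos(t\sigma)$. Applying this with $t=\norm{\mathbf s}_2$ and $\sigma=\sqrt{\langle\hat q^2\rangle}\le c$ gives, when $\norm{\mathbf s}_2 c\le\pi/2$, the bound $\abs{\Tr[\rho D_{\mathbf s}]}\ge\cos(\norm{\mathbf s}_2 c)$ and hence $\tfrac12\norm{\Id-\mathcal D_{\mathbf s}}_{\diamond,\bar n}\le\sin(\norm{\mathbf s}_2 c)$; in the complementary regime the right-hand side is $\sin(\pi/2)=1$, the trivial bound. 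Substituting $\norm{\mathbf s}_2=\norm{\mathbf r_1-\mathbf r_2}_2$ recovers the claim. The hard part will be the one-variable fact that the tangent parabola stays below the cosine up to $t\sigma=\pi/2$ — this is precisely what upgrades the crude quadratic estimate $1-t^2\sigma^2/2$ to the sharp $\cos(t\sigma)$; a secondary technicality is making the unitary-versus-identity reduction fully rigorous in the infinite-dimensional energy-constrained setting, in particular justifying purifications and the restriction to pure inputs.
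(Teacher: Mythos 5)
Your proposal targets a statement the paper never proves: \cref{prop:displacement_bound} is imported verbatim from the cited reference (Eq.~(3) of the EC-diamond paper), so there is no in-paper argument to compare against. Judged on its own merits, your derivation is correct and is essentially the standard route behind the cited bound: reduce via the Weyl relations to $\Id$ versus $\mathcal D_{\mathbf s}$ (the phase cancels at the channel level), pass to pure inputs so that $\tfrac12\norm{\Id-\mathcal D_{\mathbf s}}_{\diamond,\bar n}=\sqrt{1-\inf_\rho\abs{\Tr[\rho D_{\mathbf s}]}^2}$, collapse $D_{\mathbf s}$ onto the single quadrature $\hat q=\mathbf u^\top\hat{\mathbf R}$ (valid since $\Omega$ is orthogonal, so $\mathbf u$ is a unit vector), and lower-bound the characteristic function through a second-moment constraint. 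Both quantitative ingredients check out. First, $\max\{\langle\hat q^2\rangle_\rho:\langle\hat N\rangle_\rho\le\bar n\}=(\sqrt{\bar n}+\sqrt{\bar n+1})^2/2$ does hold: writing $\langle\hat q^2\rangle=\sigma^2+\mu^2$ and using the uncertainty relation with the conjugate quadrature, the photon budget forces $\sigma^2+\mu^2+\tfrac{1}{4\sigma^2}\le 2\bar n+1$, whose optimum is $\mu=0$ and $\sigma^2=\bigl(2\bar n+1+\sqrt{(2\bar n+1)^2-1}\bigr)/2=(\sqrt{\bar n}+\sqrt{\bar n+1})^2/2$, i.e.~a squeezed vacuum along $\mathbf u$; spending energy on displacement is strictly suboptimal for $\bar n>0$. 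Second, your tangent-parabola certificate is valid in the claimed range: with $s\coloneqq t\sigma\le\pi/2$ one has $b=-\sin(s)/(2s)\le 0$, and the function $h(x)=\cos x-a-bx^2$ satisfies $h(\pm s)=h'(\pm s)=0$; since $\sin x/x$ is decreasing on $[0,\pi]$ and $\sin x/x<2/\pi\le\sin(s)/s$ for $x>\pi/2$, $h$ decreases on $[0,s]$ and increases on $[s,\infty)$, so $h\ge 0$ on all of $\R$ and $\E[\cos tX]\ge a+b\,\E[X^2]\ge\cos(t\sigma)$, where the sign $b\le 0$ also lets you use $\E[X^2]\le\sigma^2$ rather than exact equality, exactly as your application requires.

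One slip is worth fixing. Your parenthetical claim that ``the output trace distance is convex in the input (so the supremum is attained on pure inputs)'' is not a valid reduction in the energy-constrained setting: in a convex decomposition $\rho_{AA'}=\sum_i p_i\ketbrasub{\psi_i}$ the individual components $\psi_i$ need not satisfy the photon-number constraint --- only their average does --- so convexity alone does not restrict the optimization to pure feasible inputs. The purification argument you also invoke is the correct mechanism and needs no convexity: purify any feasible $\rho_{AA'}$ through an additional ancilla, which leaves the channel-input marginal (hence the energy constraint) unchanged while the output trace distance can only increase by monotonicity under partial trace. With that substitution, and the routine bookkeeping that the reduction by output conjugation with $D_{\mathbf r_1}^\dagger$ indeed preserves the input constraint, your proof is complete and recovers the stated bound $\sin\bigl(\min\{(\sqrt{\bar n}+\sqrt{\bar n+1})\norm{\mathbf r_1-\mathbf r_2}_2/\sqrt2,\,\pi/2\}\bigr)$ exactly.
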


\begin{proposition}[(Bounds for symplectic Gaussian unitaries, see e.g.~{\cite[Eq.~(4)]{EC-diamond}})]
\label{thm:symplectic_bound}
Let $S_1,S_2 \in \mathrm{Sp}_{2m}(\R)$, and let $\mathcal{U}_{S_1}, \mathcal{U}_{S_2}$ denote the Gaussian unitary channels defined by 
$\mathcal{U}_{S_1}(\rho) \coloneqq U_{S_1} \rho U_{S_1}^\dagger$ and 
$\mathcal{U}_{S_2}(\rho) \coloneqq U_{S_2} \rho U_{S_2}^\dagger$, 
for all states $\rho$. Then, for every mean photon-number $\bar n \ge 0$, 
\begin{align}
    \frac{1}{2} \norm{\mathcal{U}_{S_1} - \mathcal{U}_{S_2}}_{\diamond,\bar n}
    \le
    \sqrt{\bigl(\sqrt{6}+\sqrt{10}+5\sqrt{2m}\bigr)(\bar n+1)}
    g\bigl(\norm{S_2^{-1}S_1}_\infty\bigr)
    \sqrt{\norm{S_2^{-1}S_1 - \id}_2},
\end{align}
where $g(x) \coloneqq \sqrt{{\pi}/{(x+1)}} + \sqrt{2x}$.
\end{proposition}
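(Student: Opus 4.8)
The plan is to peel off the reference symplectic, reduce channel distinguishability to a statement about energy-bounded pure inputs, and then feed in the Gaussian structure of $U_S$ through its quadratic generator and the Euler decomposition of \cref{lem:euler-dec}. \textbf{Reduction to the identity.} Post-composing both channels with the fixed unitary channel $\mathcal U_{S_2^{-1}}=\mathcal U_{S_2}^{-1}$ preserves the trace norm of every output and leaves the input untouched, hence does not change the energy-constrained diamond norm; since $U_{S_2}^\dagger U_{S_1}=U_{S_2^{-1}S_1}$ up to a global phase, we obtain $\norm{\mathcal U_{S_1}-\mathcal U_{S_2}}_{\diamond,\bar n}=\norm{\mathcal U_{S}-\Id}_{\diamond,\bar n}$ with $S\coloneqq S_2^{-1}S_1$. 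As the claimed right-hand side depends on $S_1,S_2$ only through $\norm S_\infty$ and $\norm{S-\id}_2$, it suffices to bound $\tfrac12\norm{\mathcal U_S-\Id}_{\diamond,\bar n}$ in these terms.

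\textbf{Reduction to an energy-bounded pure-state estimate.} By convexity of the trace distance and a standard purification argument, the supremum defining the energy-constrained diamond norm may be taken over pure inputs $\ket\psi$ on the channel system tensored with a reference, subject to a channel-input mean photon number of at most $\bar n$. For such inputs the Holevo--Helstrom pure-state identity gives $\tfrac12\norm{(U_S\otimes\id)\proj{\psi}{\psi}(U_S\otimes\id)^\dagger-\proj{\psi}{\psi}}_1=\sqrt{1-\abs{Z}^2}$ with $Z\coloneqq\bra{\psi}U_S\otimes\id\ket{\psi}$. Chaining $1-\abs Z^2\le 2(1-\abs Z)\le 2(1-\operatorname{Re}Z)$ with the unitarity identity $2(1-\operatorname{Re}Z)=\norm{(U_S\otimes\id-\id)\ket\psi}^2$, and Schmidt-decomposing $\ket\psi=\sum_k\sqrt{p_k}\ket{\phi_k}\ket{a_k}$ across the channel--reference cut, the reference collapses and we are left with $\tfrac12\norm{\cdots}_1\le\big(\sum_k p_k\norm{(U_S-\id)\ket{\phi_k}}^2\big)^{1/2}$, where the energy constraint becomes $\sum_k p_k\langle\hat N\rangle_{\phi_k}\le\bar n$. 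No phase optimization is used here; this is precisely the step at which the (possibly improvable) square-root scaling originates.

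\textbf{Gaussian core.} It remains to bound the single-system quantity $\norm{(U_S-\id)\ket\phi}^2=\bra{\phi}2(\id-\cos\hat H)\ket{\phi}$, where $U_S=e^{-i\hat H}$ and $\hat H=\tfrac12\hat{\mathbf R}^\top H\hat{\mathbf R}$ is the quadratic generator of $S$. Here the Euler decomposition $S=O_1Z_SO_2$ enters: the passive factors $U_{O_1},U_{O_2}$ preserve the energy operator and so only permute the bookkeeping, while the squeezing factor $U_{Z_S}$, whose squeezing parameter equals $\norm S_\infty$, is the sole source of energy growth along any interpolation from $\id$ to $S$. Bounding the generator contribution by $\norm{S-\id}_2$, charging the energy amplification incurred by the squeezing to the factor $g(\norm S_\infty)$ (its precise shape $g(x)=\sqrt{\pi/(x+1)}+\sqrt{2x}$ encoding the square-root cost of squeezing energy), and converting second moments into mean photon number via \cref{eq_mean_photon}, one arrives at a per-state bound of the form $\norm{(U_S-\id)\ket\phi}^2\le(\sqrt6+\sqrt{10}+5\sqrt{2m})(\langle\hat N\rangle_\phi+1)\,g(\norm S_\infty)^2\,\norm{S-\id}_2$, where the dimension $m$ enters through the trace of the covariance and the mode count. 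Substituting this into the weighted average of the previous step, the factor $\langle\hat N\rangle_\phi+1$ averages to at most $\bar n+1$ by linearity, and taking the square root reproduces exactly the claimed inequality after restoring $S=S_2^{-1}S_1$.

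\textbf{Main obstacle.} The crux is the Gaussian core. Because the generator $\hat H$ of a symplectic Gaussian unitary is \emph{quadratic} in the quadratures---unlike the \emph{linear} generator of a displacement, for which the analogous estimate in \cref{prop:displacement_bound} is clean---a direct bound on $\norm{(U_S-\id)\ket\phi}$ involves $\langle\hat H^2\rangle$, i.e.\ a fourth moment of the quadratures, which is \emph{not} controlled by the first-moment energy constraint $\langle\hat N\rangle_\phi\le\bar n$. The entire difficulty is to circumvent this higher-moment dependence: one must accept the weaker $\sqrt{\norm{S-\id}_2}$ scaling while keeping the energy dependence at $\sqrt{\bar n+1}$ and routing all the genuine energy amplification through the squeezing parameter $\norm S_\infty$ via \cref{lem:euler-dec}, so that it appears only inside the explicit factor $g(\norm S_\infty)$. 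Guaranteeing that this estimate is uniform over all reference systems and all energy-bounded, possibly entangled inputs is what upgrades it from a mere unentangled average-case distance to a genuine bound on the full energy-constrained diamond norm.
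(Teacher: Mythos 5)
First, a point of comparison: the paper does not prove \cref{thm:symplectic_bound} at all---it is imported verbatim from the cited reference~\cite[Eq.~(4)]{EC-diamond}, so the paper's ``proof'' is the citation. Measured against what a self-contained proof would require, the first two stages of your proposal are correct and standard: post-composition with the unitary channel $\mathcal U_{S_2}^{-1}$ leaves the energy-constrained diamond norm invariant and reduces the claim to bounding $\tfrac12\norm{\mathcal U_S-\Id}_{\diamond,\bar n}$ for $S=S_2^{-1}S_1$; purification plus data processing legitimately restricts the supremum to pure inputs with channel-input mean photon number at most $\bar n$; and the chain $\tfrac12\norm{\cdot}_1=\sqrt{1-\abs{Z}^2}\le\sqrt{2(1-\operatorname{Re}Z)}=\norm{(U_S\otimes\id-\id)\ket{\psi}}$ together with the Schmidt decomposition correctly collapses the reference to the weighted single-system quantity $\bigl(\sum_k p_k\norm{(U_S-\id)\ket{\phi_k}}^2\bigr)^{1/2}$ with $\sum_k p_k\langle\hat N\rangle_{\phi_k}\le\bar n$.

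The genuine gap sits exactly where you place the ``Gaussian core'': the per-state estimate
\begin{equation*}
    \norm{(U_S-\id)\ket{\phi}}^2\le\bigl(\sqrt{6}+\sqrt{10}+5\sqrt{2m}\bigr)\bigl(\langle\hat N\rangle_\phi+1\bigr)\,g\bigl(\norm{S}_\infty\bigr)^2\,\norm{S-\id}_2
\end{equation*}
is asserted, not derived, and this inequality \emph{is} the theorem---the specific constant $\sqrt{6}+\sqrt{10}+5\sqrt{2m}$, the precise shape of $g$, and the first-power dependence on $\norm{S-\id}_2$ all live in this step. Your narration (``bounding the generator contribution by $\norm{S-\id}_2$,'' ``charging the energy amplification to $g$'') names the desired outcome without supplying a mechanism, and, as you yourself observe, the only concrete route you set up, namely $\norm{(U_S-\id)\ket{\phi}}^2=\bra{\phi}2(\id-\cos\hat H)\ket{\phi}\le\langle\hat H^2\rangle_\phi$, fails because the quadratic generator forces fourth moments of the quadratures, which the constraint $\langle\hat N\rangle_\phi\le\bar n$ does not control. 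The Euler-decomposition heuristic via \cref{lem:euler-dec} (passive factors harmless, squeezing charged to $g$) does not by itself convert this into a bound involving only second moments of $\phi$ and a single power of $\norm{S-\id}_2$; the actual derivation in~\cite{EC-diamond} requires a quantitatively controlled decomposition of $U_S$ with explicit tracking of the energy growth along the way, none of which appears here. So what you have is a correct reduction and an accurate diagnosis of the obstacle, but the analytic heart of the proposition remains unproven.
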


\section{The Gaussian unitary tomography problem}\label{sec:problem-define}
In this section we formalize the tomography task for Gaussian unitaries. The setting is the following: given access to queries of an unknown Gaussian unitary, the objective is to estimate it with performance guarantees measured in terms of the energy-constrained diamond distance. 

Recall that any Gaussian unitary can be expressed as $G_{\mathbf{r},S} \coloneqq D_{\mathbf{r}} U_S$, where $\mathbf{r}$ is the displacement vector and $S$ is the symplectic matrix. It is therefore natural that the tomography task depends on parameters associated with these quantities. 
In total, we will work with three parameters: $\bar n,\bar{n}_{\mathrm{in}} , z$, described as follows.
\begin{itemize}
    \item \textbf{Photon-number constraint in the error metric ($\bar n$).}  
    The primary notion of distance to quantify the error incurred in tomography will be the energy-constrained diamond distance 
    $\norm{\cdot}_{\diamond,\bar n}$, defined with respect to a bound $\bar n$ on the mean photon number.

    \item \textbf{Photon-number constraint at the input of the unknown unitary ($\bar{n}_{\mathrm{in}}$).} We assume that only input states with a mean photon number not exceeding $\bar{n}_{\mathrm{in}}$ can be fed into the unknown Gaussian unitary.

    \item \textbf{Squeezing bound ($z$).}  
    Similarly, we impose an upper bound on the operator norm of the symplectic matrix, $\norm{S}_\infty \le z$. The parameter $z$ quantifies the maximum squeezing allowed in the unknown Gaussian unitary. Indeed, by the Euler decomposition of Gaussian unitaries~\cite{BUCCO}, the operator norm $\norm{S}_\infty$ can be interpreted as the maximum squeezing parameter.
\end{itemize}

We can now state the tomography problem formally.

\begin{problem}[(Tomography of Gaussian unitaries)]\label{problem:tomography}
Let $m \in \mathds{N}$, $N_\mathrm{tot} \in \mathds{N}$, $z \ge 1$, $\bar n > 0$, $\bar n_{\mathrm{in}} > 0$, 
$\varepsilon \in (0,1)$, and $\delta \in (0,1)$ be known parameters.  Design a quantum algorithm that
\begin{itemize}
    \item \textbf{\emph{Given:}} black-box access to an unknown $m$-mode Gaussian unitary 
    $G_{\mathbf r,S} = D_{\mathbf r} U_S$, where 
    $S \in \mathrm{Sp}_{2m}(\R)$ satisfies $\|S\|_\infty \le z$;
    
    \item \textbf{\emph{Using:}} at most $N_{\mathrm{tot}}$ queries to $G_{\mathbf r,S}$ 
    and only input states with mean photon number at most $\bar n_{\mathrm{in}}$;
    
    \item \textbf{\emph{Outputs:}} estimators $\tilde{\mathbf r} \in \R^{2m}$ and $\tilde S \in \mathrm{Sp}_{2m}(\R)$; the corresponding Gaussian unitary channel $\tilde{\mathcal G} \coloneqq \mathcal D_{\tilde{\mathbf r}} \circ \mathcal U_{\tilde S}$ approximates the true channel $\mathcal G = \mathcal D_{\mathbf r} \circ \mathcal U_S$ and satisfies
    \begin{equation}
        \Pr\left[
            \frac{1}{2}\norm{\tilde{\mathcal{G}} - \mathcal{G}}_{\diamond,\bar n} 
            \le \varepsilon 
        \right] \ge 1-\delta,
    \end{equation}
    where the diamond norm is taken with respect to the mean photon number constraint $\bar n$.
\end{itemize}
\end{problem}

To assess the efficiency of a tomography task for channels, we introduce the notion of \emph{query complexity}. Query complexity is defined as the minimum number of queries $N_{\mathrm{tot}}$ to the unknown channel required to solve the tomography problem. Naturally, it depends on all the parameters that specify the problem. If the query complexity scales polynomially with the parameters defined in~\cref{problem:tomography} (in particular, the number of modes $m$ representing the system size, the energy constraint $\bar n$, the input mean photon number $\bar n_{\mathrm{in}}$, and the squeezing bound $z$), the tomography problem is said to be \emph{efficient}; otherwise, it is deemed \emph{inefficient}. In this work, we show that tomography of Gaussian unitaries is efficient, with query complexity growing polynomially in these parameters.

\section{Learning the symplectic component}
First, we address the algorithm for learning the symplectic component $S$ of a Gaussian unitary $G_{\mathbf r,S} = D_{\mathbf r}U_S$. For any coherent state $\ket{\mathbf m}$ with mean vector $\mathbf m \in \R^{2m}$, the action of a Gaussian unitary transforms its first and second moments as 
\begin{align}
    \mathbf m(G_{\mathbf r, S}\ket{\mathbf m}) &= \mathbf r + S\mathbf m, \\
    V(G_{\mathbf r, S}\ket{\mathbf m}) &= S \id S^{\top} = SS^{\top}.
\end{align}
The measurement scheme considered here is \emph{heterodyne detection}, which provides simultaneous noisy information about 
both quadratures of each mode (see~\cref{def:hetero} for a formal definition). Applying heterodyne detection to 
the Gaussian state $G_{\mathbf r, S}\ket{\mathbf m}$ yields
\begin{align}
    \mathsf{Heterodyne}(G_{\mathbf r, S}\ket{\mathbf m}) 
    \sim \mathcal N\left(\mathbf r + S\mathbf m, \; \frac{SS^{\top}+\id}{2}\right).
    \label{eq:heterodyne-distribution}
\end{align}
In words, this means that heterodyne detection produces classical samples centered at $\mathbf r + S\mathbf m$, 
with Gaussian fluctuations whose covariance is given by half the sum of $SS^\top$ and the identity.

Choosing the vacuum state $\ket{0}$ as input (with zero first moment) produces heterodyne outcomes
\begin{equation}
    Y_0 \sim \mathcal N\left(\mathbf r, \; \frac{SS^{\top}+\id}{2}\right).
\end{equation}
For columnwise probing, fix $\eta > 0$ and let $e_i \in \R^{2m}$ be the vector $(e_i)_j = \delta_{ij}$.  If we denote by $S_i$ the $i$-th column of $S$ (i.e.~$S_i\coloneqq Se_i$), then injecting the coherent state $\ket{\eta e_i}$ yields heterodyne outcomes
\begin{equation}
    Y_i \sim \mathcal N\left(\mathbf r + \eta S_i, \; \frac{SS^{\top}+\id}{2}\right).
\end{equation}
The scaled differences $(Y_i - Y_0)/\eta$ therefore provide unbiased estimators of the columns $S_i$, whose detailed statistical analysis will be carried out in ~\cref{sec:vac-share}. Stacking these column estimators produces a matrix estimate $\hat S$ of $S$, which in general may not be exactly symplectic. 
A nearby valid symplectic matrix $\tilde S$ can be obtained by the projection procedure described in~\cref{sec:close-sym}. 

\subsection{Symplectic learning with vacuum-shared inputs}\label{sec:vac-share}
In this section, we analyze the query complexity of learning the symplectic part $S$ via the method outlined above, which from now on we will refer as \emph{symplectic learning with vacuum-shared inputs}.

\begin{lemma}[(Query complexity for learning the symplectic part with vacuum-shared inputs)]
\label{thm:learnS-again}
Let $G_{\mathbf r, S} = D_{\mathbf r}U_S$ be a Gaussian unitary on $m$ bosonic modes with symplectic matrix $S \in \mathrm{Sp}_{2m}(\R)$.  
Fix the heterodyne measurement model for coherent input probes $\ket{\mathbf m}$, for which the outcome is a random vector
\begin{align}
Y \sim \mathcal N\left(\mathbf r + S\mathbf m,\, \Sigma\right),
\qquad
\Sigma \coloneqq \frac{SS^{\top}+\id}{2}.
\end{align}
In particular, let $Y_0$ denote the outcome distribution for the vacuum probe $\ket{0}$, and $Y_i$ the outcome distribution for the input coherent state $\ket{\eta e_i}$ with $\eta>0$ and $i \in [2m]$.  
From $N_S$ independent heterodyne samples of each probe, we form the empirical means
\begin{equation}
    \bar Y_0 \coloneqq \frac{1}{N_S}\sum_{k=1}^{N_S} Y_0^{(k)}, 
    \qquad
    \bar Y_i \coloneqq \frac{1}{N_S}\sum_{k=1}^{N_S} Y_i^{(k)}.    
\end{equation}
We then define the estimators
\begin{equation}
    \hat S_i \coloneqq \frac{\bar Y_i - \bar Y_0}{\eta}, 
    \qquad
    \hat S \coloneqq [\hat S_1, \dots, \hat S_{2m}].    
\end{equation}
Then, for every $\varepsilon > 0$ and $\delta \in (0,1)$, if
\begin{align}
\label{eq:NS-correct}
N_S \ge \frac{4m\|S\|_{\infty}^2 \bigl(\sqrt{2m}+\sqrt{2\log(2m/\delta)}\bigr)^{2}}{\eta^2\varepsilon^2},
\end{align}
we have 
\begin{equation}
    \Pr\left[\|\hat S - S\|_{\infty} \le \varepsilon \right] \ge 1-\delta.
\end{equation}
In particular, the total number of queries to the unitary $G_{\mathbf r, S}$ is $(2m+1)N_S$.
\end{lemma}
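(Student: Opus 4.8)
The plan is to reduce the operator-norm guarantee to a family of per-column $\ell_2$ estimates, each controlled by the Gaussian first-moment concentration bound of \cref{lem:estrelative}, and then to pay for the shared vacuum baseline through a union bound. First I would pin down the exact law of each column estimator. Since the shots for the probe $\ket{\eta e_i}$ and for the vacuum $\ket{0}$ come from independent experiments, the empirical means obey $\bar Y_i \sim \mathcal N(\mathbf r + \eta S_i, \Sigma/N_S)$ and $\bar Y_0 \sim \mathcal N(\mathbf r, \Sigma/N_S)$ independently, so their difference is Gaussian and
\begin{equation}
\hat S_i - S_i = \frac{(\bar Y_i - \mathbf r - \eta S_i) - (\bar Y_0 - \mathbf r)}{\eta} \sim \mathcal N\!\left(0, \frac{2\Sigma}{\eta^2 N_S}\right),
\end{equation}
confirming in particular that the estimator is unbiased. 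I would then bound the operator norm of this effective covariance: because $SS^\top$ is positive semidefinite with $\norm{SS^\top}_\infty = \norm{S}_\infty^2$, the top eigenvalue of $\Sigma = (SS^\top+\id)/2$ equals $(\norm{S}_\infty^2+1)/2$, and since every symplectic matrix satisfies $\norm{S}_\infty \ge 1$ this is at most $\norm{S}_\infty^2$. Hence $\norm{2\Sigma/(\eta^2 N_S)}_\infty \le 2\norm{S}_\infty^2/(\eta^2 N_S)$.

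Next I would apply \cref{lem:estrelative} with $n = 2m$ and a single effective sample ($N=1$) to each column estimator, viewed as one Gaussian draw with the covariance computed above, using per-column failure probability $\delta/(2m)$. This gives, for each fixed $i$, that with probability at least $1-\delta/(2m)$,
\begin{equation}
\norm{\hat S_i - S_i}_2 \le \chi_{2m,\,\delta/(2m)}\sqrt{\frac{2\norm{S}_\infty^2}{\eta^2 N_S}}, \qquad \chi_{2m,\,\delta/(2m)} = \sqrt{2m} + \sqrt{2\log(2m/\delta)}.
\end{equation}
A union bound over the $2m$ columns makes all these events hold simultaneously with probability at least $1-\delta$. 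To pass from column-wise control to the operator norm I would use the chain $\norm{\hat S - S}_\infty \le \norm{\hat S - S}_2 \le \sqrt{2m}\,\max_i \norm{\hat S_i - S_i}_2$, where $\norm{\cdot}_2$ here denotes the Hilbert--Schmidt norm. Substituting the per-column bound and requiring the result to be at most $\varepsilon$ yields
\begin{equation}
\sqrt{2m}\cdot \chi_{2m,\,\delta/(2m)}\cdot \frac{\sqrt{2}\,\norm{S}_\infty}{\eta\sqrt{N_S}} \le \varepsilon,
\end{equation}
which rearranges exactly to the stated threshold $N_S \ge 4m\norm{S}_\infty^2\,\chi_{2m,\,\delta/(2m)}^2/(\eta^2\varepsilon^2)$; the query count $(2m+1)N_S$ follows since we take $N_S$ shots for each of the $2m$ coherent probes together with the single vacuum probe.

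The step I expect to be the main source of looseness—rather than a genuine obstacle—is the conversion from per-column estimates to the operator norm. Because every column estimator subtracts the \emph{same} random baseline $\bar Y_0$, the columns of $\hat S - S$ are statistically dependent, which blocks the sharp operator-norm concentration available for matrices with independent Gaussian columns. This forces the crude Hilbert--Schmidt bound $\norm{\cdot}_\infty \le \sqrt{2m}\,\max_i\norm{\mathrm{col}_i}_2$ together with the union bound, and it is precisely this detour that injects the extra factor $\sqrt{2m}$ responsible for the $4m$ (rather than a smaller multiple of $m$) in the threshold. The symmetric-probe scheme of \cref{sec:sym-probe} removes this correlation and thereby recovers a tighter dependence.
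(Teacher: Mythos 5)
Your proposal is correct and follows essentially the same route as the paper's proof: the same columnwise Gaussian law $\hat S_i - S_i \sim \mathcal N(0,\, 2\Sigma/(\eta^2 N_S))$, the same application of \cref{lem:estrelative} with per-column failure probability $\delta/(2m)$ and union bound, the same bound $\norm{\Sigma}_\infty \le \norm{S}_\infty^2$ via $\norm{S}_\infty \ge 1$, and the same column-to-operator-norm conversion through the Hilbert--Schmidt norm. Your closing observation about the shared vacuum baseline forcing the union-bound detour (and the symmetric-probe scheme removing it) also matches the paper's own discussion exactly.
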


\begin{proof}
Let $n \coloneqq 2m$. Since $SS^{\top} \succcurlyeq 0$ and $I \succ 0$, we have $\Sigma \succ 0$, and hence $\Sigma^{\pm 1/2}$ are well defined.

\medskip
\noindent\emph{Step 1 (Columnwise distribution).}
For each $i \in [n]$, we obtain
\begin{align}
\bar Y_0 & \sim \mathcal N\left(\mathbf r, \, \frac{\Sigma}{N_S}\right), \\
\bar Y_i & \sim \mathcal N\left(\mathbf r+\eta S_i, \, \frac{\Sigma}{N_S}\right),
\end{align}
with the two batches independent. Consequently,
\begin{align}
\hat S_i - S_i
= \frac{1}{\eta} \cdot \left[\bar Y_i - \bigl(\mathbf r+\eta S_i\bigr) - \bigl(\bar Y_0-\mathbf r\bigr)\right]
\sim \mathcal N\left(0,\ \frac{2\Sigma}{\eta^2 N_S}\right). \label{eq:vac-col-dependent}
\end{align}

\medskip
\noindent\emph{Step 2 (Columnwise $\ell_2$ control via~\cref{lem:estrelative}).}
Applying~\cref{lem:estrelative} to the $N_S$ i.i.d.\ samples defining $\bar Y_i$ and $\bar Y_0$, we obtain that for any $\delta_i \in (0,1)$,
\begin{align}
\Pr\left[ \|\hat S_i - S_i\|_2  \le 
\sqrt{\frac{2\|\Sigma\|_\infty}{\eta^2 N_S}}
\cdot \bigl(\sqrt{n} + \sqrt{2\log(1/{\delta_i})}\bigr) \right] \ge 1-\delta_i.
\end{align}
Choosing $\delta_i = \delta/n$ and applying the union bound over $i \in [n]$ yields, for all $i \in [n]$,
\begin{align}
\label{eq:per-column}
\Pr \left[ \|\hat S_i - S_i\|_2 \le \sqrt{\frac{2\|\Sigma\|_\infty}{\eta^2 N_S}} \cdot
\bigl(\sqrt{n} + \sqrt{2\log(n/\delta)}\bigr), \,\, \forall i\in[n] \right]  \ge 1-\delta.
\end{align}

\medskip
\noindent\emph{Step 3 (From columns to operator norm).}
Let $E \coloneqq \hat S - S = [e_1,\ldots,e_n]$ with $e_i = \hat S_i - S_i$. Then
\begin{align}
\|\hat S - S\|_\infty & \le\ \|E\|_2 \nonumber \\
& = \left(\sum_{i=1}^n \|e_i\|_2^2\right)^{1/2} \nonumber \\
& \le \sqrt{n}\cdot\sqrt{\frac{2\|\Sigma\|_\infty}{\eta^2 N_S}} \cdot
\bigl(\sqrt{n} + \sqrt{2\log(n/\delta)}\bigr)
\end{align}
Combining this with \cref{eq:per-column}, we obtain
\begin{align}
\Pr \left[ \|\hat S - S\|_\infty \le \sqrt{\frac{2\|\Sigma\|_\infty}{\eta^2 N_S}}\cdot\sqrt{n}\cdot\bigl(\sqrt{n}+\sqrt{2\log(n/\delta)}\bigr)\right] \ge 1-\delta.
\end{align}

\medskip
\noindent\emph{Step 4 (Solve for $N_S$ and upper bound $\|\Sigma\|_\infty$).}
To ensure that $\|\hat S - S\|_\infty \le \varepsilon$, it suffices that
\begin{align}
N_S \ge \frac{2\|\Sigma\|_\infty}{\eta^2\varepsilon^2}\cdot
n\bigl(\sqrt{n}+\sqrt{2\log(n/\delta)}\bigr)^{2}.
\end{align}
Since $S$ is symplectic, we have $\|S\|_\infty \ge 1$ and
\begin{align}
\|\Sigma\|_\infty \le \frac12 \big(\|S\|_\infty^2+1\big) \le \|S\|_\infty^2,
\end{align}
which yields \cref{eq:NS-correct}. Substituting $n=2m$ completes the proof. By construction, the total number of queries is $(2m+1)N_S$.
\end{proof}

In the scheme of~\cref{thm:learnS-again}, letting $\eta \to +\infty$ allows one to take $N_S = 1$ for any fixed target accuracy. Consequently, the total number of queries reduces to $(2m+1)N_S = 2m+1$. For finite values of $\eta$, however, an alternative design may lead to a more favorable dependence on the sample complexity $N_S$. We therefore proceed to present such a design in the following subsection, which provides a complementary strategy for learning the symplectic part.

\subsection{Symplectic learning with symmetric probes}\label{sec:sym-probe}
In this section, we present an alternative algorithm for learning the symplectic component. The approach makes use of symmetric probes $\ket{\pm \eta e_i}$; accordingly, we will refer to it as \emph{symplectic learning with symmetric probes}. The key distinction from the vacuum-shared input case is that, in the latter, the columnwise samples in~\cref{eq:vac-col-dependent} are not independent, and thus the operator norm bound is obtained via a union bound over the individual columnwise estimators. By contrast, when symmetric probes are used, all columnwise samples become independent, which allows us to apply the Gaussian operator-norm tail bound. This is formalized in the following proposition.

\begin{lemma}[(Query complexity for learning the symplectic part with symmetric probes)]
\label{lem:pm-design}
Let $G_{\mathbf r, S} = D_{\mathbf r}U_S$ be a Gaussian unitary on $m$ bosonic modes with symplectic matrix $S \in \mathrm{Sp}_{2m}(\R)$.  
For each $i \in [2m]$, let $Y_{i,+}$ and $Y_{i,-}$ denote the outcome distributions corresponding to the coherent input states $\ket{+\eta e_i}$ and $\ket{-\eta e_i}$, respectively. From $N_S$ independent heterodyne samples of each probe, we form the empirical means 
\begin{align}
\bar Y_i^{(+)} \coloneqq \frac{1}{N_S}\sum_{k=1}^{N_S} Y_{i,+}^{(k)},
\qquad
\bar Y_i^{(-)} \coloneqq \frac{1}{N_S}\sum_{k=1}^{N_S} Y_{i,-}^{(k)}.
\end{align}
We then define the column estimators and matrix estimator as
\begin{align}
\hat S_i \coloneqq \frac{\bar Y_i^{(+)}-\bar Y_i^{(-)}}{2\eta},
\qquad
\hat S \coloneqq [\hat S_1,\ldots,\hat S_n].
\end{align}
Then, for every $\varepsilon > 0$ and $\delta \in (0,1)$, if
\begin{align}
\label{eq:NS-pm}
N_S \ge \frac{\|S\|_\infty^2\bigl(2\sqrt{2m}+\sqrt{2\log(1/\delta)}\bigr)^{2}}{2\eta^2\varepsilon^2},
\end{align}
we have
\begin{align}
\Pr\left[ \|\hat S - S\|_\infty \le \varepsilon \right] \ge\ 1-\delta.
\end{align}
In particular, the total number of queries to the unitary $G_{\mathbf r, S}$ is $4mN_S$.
\end{lemma}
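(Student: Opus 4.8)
The plan is to mirror the four-step structure of the vacuum-shared proof (\cref{thm:learnS-again}), but to exploit the key advantage of the symmetric-probe design: the column estimators $\hat S_i$ are now \emph{mutually independent}. This lets me replace the union-bound-over-columns argument with a single application of the sharp Gaussian operator-norm tail bound from \cref{lem:gauss-opnorm-infty}, which is why the resulting sample complexity in \cref{eq:NS-pm} avoids the extra factor of $m$ present in \cref{eq:NS-correct}.

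\emph{Step 1 (Columnwise distribution and independence).} First I would compute the distribution of each $\hat S_i$. Since $\bar Y_i^{(+)} \sim \mathcal N(\mathbf r + \eta S_i, \Sigma/N_S)$ and $\bar Y_i^{(-)} \sim \mathcal N(\mathbf r - \eta S_i, \Sigma/N_S)$ with $\Sigma = (SS^\top + \id)/2$, and these two batches are independent, the displacement $\mathbf r$ cancels in the difference, giving
\begin{equation}
\hat S_i - S_i = \frac{(\bar Y_i^{(+)} - (\mathbf r + \eta S_i)) - (\bar Y_i^{(-)} - (\mathbf r - \eta S_i))}{2\eta} \sim \mathcal N\!\left(0, \frac{\Sigma}{2\eta^2 N_S}\right).
\end{equation}
Crucially, because every probe $\ket{\pm \eta e_i}$ draws its own fresh heterodyne samples, the error vectors $e_i \coloneqq \hat S_i - S_i$ are independent across $i$. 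Hence the full error matrix $E \coloneqq \hat S - S$ has i.i.d.\ Gaussian columns with common covariance $\Sigma/(2\eta^2 N_S)$.

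\emph{Step 2 (Whitening to a standard Gaussian matrix).} Next I would reduce $E$ to a standard Gaussian matrix so that \cref{lem:gauss-opnorm-infty} applies directly. Writing $E = \sqrt{\tfrac{\Sigma}{2\eta^2 N_S}}\, G$ where $G \in \R^{2m \times 2m}$ has i.i.d.\ $\mathcal N(0,1)$ entries, submultiplicativity of the operator norm gives $\norm{E}_\infty \le \sqrt{\norm{\Sigma}_\infty/(2\eta^2 N_S)}\,\norm{G}_\infty$. Applying \cref{lem:gauss-opnorm-infty} with $m = n = 2m$ and $t = \sqrt{2\log(1/\delta)}$ yields, with probability at least $1 - \delta$, the bound $\norm{G}_\infty \le 2\sqrt{2m} + \sqrt{2\log(1/\delta)}$. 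Combining these,
\begin{equation}
\Pr\!\left[\,\norm{\hat S - S}_\infty \le \sqrt{\frac{\norm{\Sigma}_\infty}{2\eta^2 N_S}}\,\bigl(2\sqrt{2m} + \sqrt{2\log(1/\delta)}\bigr)\right] \ge 1 - \delta.
\end{equation}

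\emph{Step 3 (Solve for $N_S$ and bound $\norm{\Sigma}_\infty$).} Finally, I would demand that the right-hand bound not exceed $\varepsilon$, and use the same estimate as in \cref{thm:learnS-again}, namely $\norm{\Sigma}_\infty \le \tfrac12(\norm{S}_\infty^2 + 1) \le \norm{S}_\infty^2$ (valid since $\norm{S}_\infty \ge 1$ for symplectic $S$), to obtain exactly the threshold \cref{eq:NS-pm}. The query count is $2 \cdot 2m \cdot N_S = 4m N_S$, since there are $2m$ pairs of probes, each measured $N_S$ times. The only genuinely delicate point—where this proof diverges in spirit from the vacuum-shared one—is the \textbf{independence claim} in Step 1: one must verify that the symmetric construction truly decouples the columns (no shared baseline as in the vacuum-shared scheme), since it is precisely this independence that licenses the single-shot application of the random-matrix tail bound rather than a lossy union bound. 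Everything downstream is routine whitening and algebra.
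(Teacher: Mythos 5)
Your proposal is correct and follows essentially the same route as the paper's proof: identical columnwise distribution $\hat S_i - S_i \sim \mathcal N\bigl(0, \Sigma/(2\eta^2 N_S)\bigr)$ with cancellation of $\mathbf r$, the same independence-across-columns observation, the same whitening to a standard Gaussian matrix (the paper writes $X = \sqrt{2\eta^2 N_S}\,\Sigma^{-1/2}(\hat S - S)$ rather than $E = \Sigma^{1/2}G/\sqrt{2\eta^2 N_S}$, which is the same step in the opposite direction), the same application of \cref{lem:gauss-opnorm-infty} with $t=\sqrt{2\log(1/\delta)}$, and the same bound $\norm{\Sigma}_\infty \le \norm{S}_\infty^2$. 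Your emphasis on verifying that each probe pair uses fresh samples—so that no shared baseline correlates the columns—is exactly the point the paper also flags as the structural difference from the vacuum-shared scheme.
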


\begin{proof}
Let $n \coloneqq 2m$. Since $SS^{\top} \succcurlyeq 0$ and $I \succ 0$, we have $\Sigma \succ 0$, and hence $\Sigma^{\pm 1/2}$ are well defined.

\medskip
\noindent\emph{Step 1 (Columnwise distribution).}
For each $i \in [n]$, we obtain
\begin{align}
\bar Y_i^{(+)} & \sim \mathcal N\left(\mathbf r+\eta S_i, \, \frac{\Sigma}{N_S}\right), \\
\bar Y_i^{(-)} & \sim \mathcal N\left(\mathbf r-\eta S_i, \, \frac{\Sigma}{N_S}\right),
\end{align}
with the two batches and disjoint across $i$ independent. Consequently,
\begin{align}
\hat S_i - S_i
= \frac{1}{2\eta}\left[\bigl(\bar Y_i^{(+)} - (\mathbf r+\eta S_i) \bigr) - \bigl(\bar Y_i^{(-)}- (\mathbf r-\eta S_i)\bigr)\right]
\sim \mathcal N\left(0, \, \frac{\Sigma}{2\eta^2 N_S}\right).
\end{align}
Moreover, the vectors $\{\hat S_i - S_i\}_{i=1}^n$ are mutually independent, since each is constructed from a separate pair of probe data sets.

\medskip
\noindent
\emph{Step 2 (Reduce to a standard Gaussian matrix).}
Define the random matrix
\begin{align}
X  \coloneqq \sqrt{2\eta^2 N_S}\, \Sigma^{-1/2} (\hat S - S)  \in \R^{n\times n}.
\end{align}
Fix any column index $i$. By Step~1 and $\Sigma\succ 0$,
\begin{align}
X_i = \sqrt{2\eta^2 N_S}\, \Sigma^{-1/2}(\hat S_i-S_i) \sim \mathcal N (0,\, \id).
\end{align}
Thus each column $X_i$ is a standard Gaussian vector in $\R^n$, and the columns are independent. Therefore all entries of $X$ are independent $\mathcal N(0,1)$ random variables, i.e., $X$ is a standard Gaussian matrix. Rearranging gives
\begin{align}
\hat S - S = \frac{\Sigma^{1/2}X}{\sqrt{2\eta^2 N_S}}.
\end{align}
By definition of the operator norm,
\begin{align}
\norm{\hat S - S}_{\infty} &= \frac{\norm{\Sigma^{1/2}X}_{\infty}}{\sqrt{2\eta^2 N_S}} \nonumber \\
& \le \frac{\norm{\Sigma^{1/2}}_{\infty}\ \norm{X}_{\infty}}{\sqrt{2\eta^2 N_S}} \nonumber \\
& = \sqrt{\frac{\norm{\Sigma}_{\infty}}{2\eta^2 N_S}} \, \norm{X}_{\infty}. \label{eq:deterministic-pm}
\end{align}

\medskip
\noindent
\emph{Step 3 (Concentration for $\norm{X}_\infty$).}
By the Gaussian operator-norm tail bound of~\cref{lem:gauss-opnorm-infty}, for all $t \ge 0$,
\begin{align}
\Pr\left[\|X\|_{\infty} \ge 2\sqrt{n}+t\right] \le e^{-t^2/2}.
\end{align}
Equivalently, for $\delta\in(0,1)$, taking $t=\sqrt{2\log(1/\delta)}$,
\begin{align}
\Pr\left[\|X\|_\infty \le 2\sqrt{n}+\sqrt{2\log(1/\delta)}\right] \ge 1-\delta.
\end{align}
Plugging this event into~\cref{eq:deterministic-pm} yields
\begin{align}
\Pr\left[\norm{\hat S - S}_{\infty}
\le
\sqrt{\frac{\|\Sigma\|_{\infty}}{2\,\eta^2 N_S}}\cdot\bigl(2\sqrt{n}+\sqrt{2\log(1/\delta)}\bigr)\right] \ge 1-\delta.
\end{align}

\medskip
\noindent
\emph{Step 4 (Final query complexity).}
To guarantee that $\|\hat S - S\|_{\infty} \le \varepsilon$, it suffices that
\begin{align}
N_S \ge \frac{\|\Sigma\|_\infty \bigl(2\sqrt{n}+\sqrt{2\log(1/\delta)}\bigr)^2}{2\eta^2\varepsilon^2}.
\end{align}
Since $S$ is symplectic, we have $\norm{S}_{\infty} \ge 1$, and hence
\begin{align}
    \norm{\Sigma}_{\infty} \le \frac{1}{2}(\norm{S}_{\infty}^2 + 1) \le \norm{S}_{\infty}^2,
\end{align}
which yields the bound in~\cref{eq:NS-pm}. Substituting $n=2m$ gives the claimed expression. Finally, by construction, the total number of queries is $4mN_S$.
\end{proof}

In the scheme of~\cref{lem:pm-design}, taking the limit $\eta \to +\infty$ allows one to set $N_S = 1$, which results in a total query complexity of $4mN_S = 4m$. For finite values of $\eta$, however, the symmetric-probe design may yield a more favorable dependence on the sample complexity $N_S$.

\subsection{Regularization of learned symplectic matrices}\label{sec:close-sym}
The estimators $\hat{S}$ obtained from the procedures in~\cref{sec:vac-share,sec:sym-probe} are not guaranteed to be symplectic. Therefore, it is necessary to design an efficient algorithm that approximates such an estimator by a nearby valid symplectic matrix. 

Consider any $\hat{S} \in \mathds{R}^{2m \times 2m}$ for which there exists a matrix $S \in \mathrm{Sp}_{2m}(\mathds{R})$ satisfying $\norm{\hat{S} - S}_\infty \leq \varepsilon$. We show how to round $\hat{S}$ to a matrix $\tilde{S} \in \mathrm{Sp}_{2m}(\mathds{R})$ such that $\norm{\tilde{S} - S}_\infty \leq 9\squeezingparameter^2 \varepsilon$, provided that $\varepsilon$ is sufficiently small. At a high level, our approach is based on symplectic analogues of polar decompositions, as discussed in~\cite{teretenkov2020symplectic}. Define
\begin{equation}
T \coloneqq -\Omega \hat{S}^\top \Omega \hat{S}.
\end{equation}
Observe that if $\hat{S}$ were symplectic, then $T$ would equal the identity matrix. 
We will argue that the principal root $Q = \sqrt{T}$ serves as a multiplicative correction to $\hat{S}$. 
Concretely, the rounding procedure sets
\begin{equation}
\tilde{S} = \hat{S} Q^{-1}.
\end{equation}

We motivate this rounding procedure by showing that if $Q = \sqrt{T}$ exists and is well-defined then $\tilde{S}$ is indeed symplectic.

\begin{proposition}[(Symplecticity of the rounded matrix)]
    Suppose that $Q = \sqrt{T}$ exists and is well-defined. Then $\tilde{S}\in \mathrm{Sp}_{2m}(\mathds{R})$.
\end{proposition}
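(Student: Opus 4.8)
The plan is to verify the defining relation $\tilde S^\top\Omega\tilde S=\Omega$ by passing to the \emph{symplectic adjoint}, the operation $X^\star\coloneqq-\Omega X^\top\Omega=\Omega^{-1}X^\top\Omega$, where I use that $\Omega^\top=\Omega^{-1}=-\Omega$ and $\Omega^2=-\id$. The point of introducing $\star$ is that it collapses symplecticity into a single clean identity: since $M^\star M=-\Omega M^\top\Omega M$, one has $M\in\mathrm{Sp}_{2m}(\R)$ if and only if $M^\star M=\id$, equivalently $M^\star=M^{-1}$. This is exactly the $\star$-analogue of the characterization $Q^\top Q=\id$ of the orthogonal group, and it exhibits the rounding as a genuine symplectic polar decomposition: $T=\hat S^\star\hat S$ is a ``$\star$-Gram matrix'' and $Q=\sqrt T$ plays the role of its positive part. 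Thus the whole proposition reduces to proving $\tilde S^\star\tilde S=\id$.

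First I would record the elementary algebra of $\star$, all of which follows from $\Omega^2=-\id$: it is an anti-involution, $(X^\star)^\star=X$ and $(XY)^\star=Y^\star X^\star$, with $\id^\star=\id$ and $(X^{-1})^\star=(X^\star)^{-1}$. These at once give $T^\star=(\hat S^\star\hat S)^\star=\hat S^\star(\hat S^\star)^\star=\hat S^\star\hat S=T$, so $T$ is $\star$-Hermitian. The crux is then to transfer this symmetry to the principal root, i.e.\ to show $Q^\star=Q$. Because $X^\star$ is $X^\top$ conjugated by $\Omega$, and the principal square root commutes both with transposition and with similarity — it is a primary matrix function, well-defined here precisely because the hypothesis that $Q=\sqrt T$ exists forces $T$ to have no eigenvalue on $\R^-$ (cf.\ the existence/uniqueness fact recalled above) — one computes $Q^\star=\Omega^{-1}(\sqrt T)^\top\Omega=\Omega^{-1}\sqrt{T^\top}\,\Omega=\sqrt{\Omega^{-1}T^\top\Omega}=\sqrt{T^\star}=\sqrt T=Q$.

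With $Q^\star=Q$ and $Q^2=T$ in hand, the conclusion is a one-line telescoping, provided the correction $Q^{-1}$ is applied on the side matching the Gram that defines $T$: then $\tilde S^\star\tilde S=Q^{-1}\,(\hat S^\star\hat S)\,Q^{-1}=Q^{-1}TQ^{-1}=Q^{-1}Q^2Q^{-1}=\id$, whence $\tilde S\in\mathrm{Sp}_{2m}(\R)$. Here one must keep the bookkeeping straight: the right-handed Gram $T=\hat S^\star\hat S$ pairs with the correction $\hat S Q^{-1}$, whereas the left-handed Gram $\hat S\hat S^\star$ pairs with $Q^{-1}\hat S$ (via $\hat S^\star(\hat S\hat S^\star)^{-1}\hat S=\id$); the anti-homomorphism $(XY)^\star=Y^\star X^\star$ is exactly what interchanges the two, and I would check this pairing is consistent so that the product genuinely telescopes.

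I expect the only real obstacle to be the square-root step $Q^\star=Q$: everything else is formal manipulation of $\star$, but the identity $\sqrt{T^\star}=(\sqrt T)^\star$ requires knowing that the principal root is uniquely determined and that it commutes with both transpose and $\Omega$-conjugation. I would justify this either through the similarity-invariance of the holomorphic/primary functional calculus or directly from the integral representation of the matrix root quoted in the preliminaries, both of which are valid precisely because the hypothesis places the spectrum of $T$ off the negative real axis.
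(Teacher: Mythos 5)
Your proof is correct, and its core mechanism is the same as the paper's: your identity $Q^\star=Q$ is exactly the paper's key step $Q^\top=\Omega Q\Omega^{-1}$, justified in both cases by uniqueness of the principal square root together with its invariance under transpose and similarity, after which symplecticity follows by the same telescoping; the $\star$-adjoint formalism is cleaner packaging rather than a different route. What is genuinely valuable is your bookkeeping caveat about which side $Q^{-1}$ must act on, because it exposes a real inconsistency in the paper: with $T=-\Omega\hat S^\top\Omega\hat S=\hat S^\star\hat S$ the paper \emph{defines} $\tilde S=Q^{-1}\hat S$ (also in its algorithm), yet its displayed chain begins from $(Q^{-1})^\top\hat S^\top\Omega\hat S Q^{-1}=(\hat S Q^{-1})^\top\Omega(\hat S Q^{-1})$, i.e., it actually proves symplecticity of $\hat S Q^{-1}$ --- precisely the pairing you prove. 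The distinction is not pedantic: for $\tilde S=Q^{-1}\hat S$ one gets
\begin{equation*}
\tilde S\,\tilde S^\star = Q^{-1}\,\hat S\hat S^\star\,Q^{-1},
\end{equation*}
which equals $\id$ only when $\hat S\hat S^\star=\hat S^\star\hat S$, a condition a generic non-symplectic $\hat S$ violates once $m\ge 2$ (for $m=1$ one has $M^\star=\mathrm{adj}(M)$, so $T$ is scalar and the order is immaterial). Your alternative fix --- keeping $\tilde S=Q^{-1}\hat S$ but building $Q$ from the left Gram $\hat S\hat S^\star$ --- is also correct, via the identity $\hat S^\star(\hat S\hat S^\star)^{-1}\hat S=\id$ that you note. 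Either repair leaves the paper's downstream error analysis untouched, since $\norm{\hat S Q^{-1}-\hat S}_\infty\le\norm{\hat S}_\infty\cdot\norm{Q^{-1}-\id}_\infty$ holds just as well as for the other order.
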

\begin{proof}
Recall $T \coloneqq -\Omega \hat S^\top \Omega \hat S$ and $Q \coloneqq \sqrt{T}$ (the principal square root), so that $Q^2=T$. We first observe that
\begin{align}
(Q^2)^\top = T^\top & = (-\Omega \hat S^\top \Omega \hat S)^\top \nonumber \\
& = - \hat S^\top \Omega \hat S \Omega \nonumber \\
& = \Omega\,(-\Omega \hat S^\top \Omega \hat S)\Omega^{-1} \nonumber \\
& = \Omega Q^2 \Omega^{-1}, 
\end{align}
where we used $\Omega^\top=-\Omega$ and $\Omega^{-1}=-\Omega$.
Hence both $Q^\top$ and $\Omega Q \Omega^{-1}$ are square roots of $T^\top$.
Moreover, the principal square root is unique and is preserved by transpose and similarity, so $Q^\top = (T^\top)^{1/2} = \Omega Q \Omega^{-1}$. Using this identity, we compute
\begin{align}
\tilde{S}^\top \Omega \tilde{S} &= ({Q^{-1}})^\top \hat{S}^\top \Omega \hat{S} Q^{-1} \nonumber \\
&=\Omega Q^{-1} \Omega^{-1} \hat{S}^{\top}\Omega \hat{S} Q^{-1} \nonumber \\
&= \Omega Q^{-1}(- \Omega \hat{S}^{\top}\Omega \hat{S})Q^{-1} \nonumber \\
&= \Omega Q^{-1} Q^2 Q^{-1} = \Omega.
\end{align}
Hence $\tilde S \in \mathrm{Sp}_{2m}(\mathds{R})$.
\end{proof}

The remainder of the section is dedicated to showing that if $\hat{S}$ is sufficiently close to a symplectic matrix $S$ then $\tilde{S}$ is close to $S$ as well. The first steps towards this involve continuity properties of the principal root function.

In order for $\tilde{S}$ to exist, the principal root of $T$ must be well defined.
We show that if a matrix $A$ is sufficiently close to the identity in operator norm (a property we later demonstrate holds for $T$), then the principal root of $A$ is well-defined.
\begin{proposition}[(Well-definedness of the principal square root)]
\label{prop:root-well-defined}
    Let $A$ be such that $\norm{A - I}_{\infty} < 1$. Then the principal square root $\sqrt{A}$ exists and is well-defined.
\end{proposition}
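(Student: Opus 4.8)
The plan is to reduce the claim to \cref{thm:principal-root}, which guarantees existence and uniqueness of the principal square root of any matrix having no eigenvalues on the closed negative real axis $\R^-$. Thus it suffices to verify that the hypothesis $\norm{A-\id}_\infty < 1$ forces every eigenvalue of $A$ to avoid $\R^-$; in fact we will show the stronger statement that all eigenvalues of $A$ lie in the open right half-plane.

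First I would localize the spectrum of $A$ via the spectrum of $A-\id$. If $\lambda \in \spec(A)$, then $\lambda - 1 \in \spec(A-\id)$, so the spectral radius bound
\begin{equation}
    \abs{\lambda - 1} \le \norm{A-\id}_\infty < 1
\end{equation}
shows that $\lambda$ lies in the open disk of radius $1$ centered at $1$ in the complex plane. The next step is the elementary geometric observation that this open disk is contained in the open right half-plane: any point $z$ with $\abs{z-1}<1$ satisfies $\mathrm{Re}(z) > 0$, since the only boundary point of the disk on the imaginary axis is the origin, which is excluded. Consequently $\lambda \neq 0$ and $\lambda \notin \R^-$.

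Having established that $A$ has no eigenvalue on the closed negative real axis, I would invoke \cref{thm:principal-root} directly to conclude that the principal square root $\sqrt{A}$ exists and is unique (and is real since $A$ is real), which is exactly the assertion. There is essentially no obstacle here: the only point requiring a moment's care is confirming that the open unit disk centered at $1$ meets neither $\R^-$ nor the origin, but this follows immediately from the strict inequality $\norm{A-\id}_\infty < 1$. Had the hypothesis been the non-strict $\norm{A-\id}_\infty \le 1$, the eigenvalue $\lambda = 0$ would become admissible and the argument would fail, so the strictness is what makes the proof go through cleanly.
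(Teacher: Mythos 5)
Your proof is correct and takes essentially the same route as the paper's: bound every eigenvalue via $\abs{\lambda-1}<1$, conclude that the spectrum lies in the open right half-plane (so no eigenvalue sits on the closed negative real axis), and invoke \cref{thm:principal-root}. If anything, your justification of the key inequality---$\lambda-1\in\spec(A-I)$ together with the spectral-radius bound $\abs{\lambda-1}\le\norm{A-I}_\infty$---is cleaner than the paper's, which appeals to $A$ commuting with $I$ and ``simultaneous diagonalizability,'' a step that is vacuous (and not even available when $A$ fails to be diagonalizable), whereas the spectral-radius argument holds for arbitrary $A$.
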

\begin{proof} 
By~\cref{thm:principal-root}, the principal root of $A$ exists and is well-defined if and only if all of the eigenvalues of $A$ have no negative real component. Let $\lambda = \alpha + \beta i$ be an eigenvalue of $A$, where $\alpha,\beta \in \mathds{R}$. Since $A$ commutes with $I$ (that is, they are simultaneously diagonalizable), we have 
\begin{equation}
    \abs{\lambda - 1} = \sqrt{(1 - \alpha)^2 + b^2} < 1.
\end{equation}
This inequality implies $\abs{\alpha - 1} < 1$, hence $\alpha > 0$. 
Since this holds for all eigenvalues of $A$, all eigenvalues have positive real part, and therefore the principal square root $\sqrt{A}$ exists and is well-defined.
\end{proof}

Now we establish a continuity property of the matrix square root in a neighborhood of the identity.
\begin{proposition}[(Lipschitz continuity of the matrix square root near the identity)]\label{prop:root-lipschitz}
    Consider the open set $\mathcal{S} = \{A \in \mathds{R}^{n \times n} : \norm{A - I}_{\infty} < 1/2\}$. 
    For all $A \in \mathcal{S}$, 
    \begin{equation}
        \NORM{\sqrt{A} - \sqrt{I}}_\infty \leq (2-\sqrt{2}) \cdot \norm{A - I}_\infty.
    \end{equation}
\end{proposition}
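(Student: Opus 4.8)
The plan is to reduce everything to a scalar estimate via the integral perturbation formula for the matrix square root. Since $\sqrt{I} = I$, the claim is equivalent to bounding $\|\sqrt{A} - I\|_\infty$ in terms of $E \coloneqq A - I$, where $\|E\|_\infty < 1/2$. First I would invoke the perturbation identity in \cref{thm:root-integral-rep} with $p = 2$, base point $I$ (which has no eigenvalues on $\R^-$), and perturbation $E$; note that $A = I + E$ also has no eigenvalues on $\R^-$ by \cref{prop:root-well-defined}, so the formula applies. With $\sin(\pi/2) = 1$ and $t^2 I + I = (t^2+1)I$, it reads
\begin{equation}
\sqrt{A} - I = \frac{2}{\pi}\int_0^\infty t^2\,\bigl((t^2+1)I + E\bigr)^{-1}\, E\, \frac{1}{t^2+1}\,\mathrm{d}t.
\end{equation}

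Next I would bound the two resolvent factors. Writing $(t^2+1)I + E = (t^2+1)\bigl(I + \tfrac{1}{t^2+1}E\bigr)$ and using that $\tfrac{\|E\|_\infty}{t^2+1} \le \|E\|_\infty < 1$ for all $t \ge 0$, the Neumann series of \cref{fact:neumann} gives $\bigl\|((t^2+1)I+E)^{-1}\bigr\|_\infty \le \bigl((t^2+1)-\|E\|_\infty\bigr)^{-1}$. Taking operator norms in the displayed identity and using submultiplicativity yields
\begin{equation}
\|\sqrt{A}-I\|_\infty \le \frac{2\|E\|_\infty}{\pi}\int_0^\infty \frac{t^2}{(t^2+1)\bigl((t^2+1)-\|E\|_\infty\bigr)}\,\mathrm{d}t.
\end{equation}

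The core computation is the scalar integral. Setting $c \coloneqq \|E\|_\infty$ and $a \coloneqq \sqrt{1-c}$, a partial-fraction decomposition $\frac{t^2}{(t^2+1)(t^2+a^2)} = \frac{1}{c}\,\frac{1}{t^2+1} - \frac{a^2}{c}\,\frac{1}{t^2+a^2}$ together with $\int_0^\infty \frac{\mathrm{d}t}{t^2+b^2} = \frac{\pi}{2b}$ evaluates the integral to $\frac{\pi}{2c}\,(1-\sqrt{1-c})$. Substituting back gives the clean intermediate bound
\begin{equation}
\|\sqrt{A}-I\|_\infty \le 1 - \sqrt{1 - \|E\|_\infty}.
\end{equation}

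Finally I would convert this to the stated linear bound. The function $c \mapsto 1 - \sqrt{1-c}$ is convex on $[0,1/2]$, vanishes at $c=0$, and equals $(2-\sqrt 2)/2$ at $c = 1/2$, which is exactly the value of the line $c\mapsto (2-\sqrt 2)c$ there; by convexity its graph lies below the chord joining these two endpoints, so $1 - \sqrt{1-c} \le (2-\sqrt 2)\,c$ for all $c \in [0,1/2]$. Applying this with $c = \|A - I\|_\infty$ completes the argument. I expect the main obstacle to be the clean evaluation of the scalar integral (and checking that the endpoint $c=1/2$ makes the chord bound tight), rather than the operator-theoretic steps, which are routine given \cref{thm:root-integral-rep,fact:neumann}. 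As an alternative route avoiding the integral entirely, one could expand $\sqrt{I+E} = \sum_{k\ge 0}\binom{1/2}{k}E^k$ (valid for $\|E\|_\infty<1$, with the series being the principal root by uniqueness) and use $\sum_{k\ge 1}\bigl|\binom{1/2}{k}\bigr|c^k = 1-\sqrt{1-c}$ to reach the same intermediate bound.
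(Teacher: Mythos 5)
Your proof is correct, and it rests on the same two pillars as the paper's: the integral perturbation formula of \cref{thm:root-integral-rep} with $p=2$ and base point $I$, followed by a Neumann-series resolvent bound via \cref{fact:neumann}. The difference is in the endgame. The paper substitutes the worst case $\norm{B}_\infty < 1/2$ directly into the resolvent estimate, obtaining $\norm{(t^2 I + A)^{-1}}_\infty \le 2/(2t^2+1)$, and then evaluates the resulting fixed integral $\frac{2}{\pi}\int_0^\infty \frac{2t^2}{(2t^2+1)(t^2+1)}\,dt = 2-\sqrt{2}$ (note the paper's final display silently drops the $\frac{1}{t^2+1}$ factor, without which the integral would diverge; your explicit partial-fraction computation avoids any such ambiguity). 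You instead keep $c = \norm{A-I}_\infty$ exact throughout, evaluate the scalar integral in closed form to reach the intermediate bound $\norm{\sqrt{A}-I}_\infty \le 1-\sqrt{1-c}$, and then recover the stated constant by the chord/convexity argument, since $1-\sqrt{1-c}$ is convex on $[0,1/2]$ and agrees with $(2-\sqrt{2})\,c$ at both endpoints. This buys something genuine: your intermediate bound is sharper than the linear one (it is exactly the scalar case, hence tight), it makes transparent that $2-\sqrt{2}$ is simply the chord slope at $c=1/2$, and your closing remark about the binomial series $\sqrt{I+E}=\sum_{k\ge 0}\binom{1/2}{k}E^k$ with $\sum_{k\ge 1}\bigl|\binom{1/2}{k}\bigr|c^k = 1-\sqrt{1-c}$ gives a fully elementary alternative that bypasses the integral representation altogether (one only needs the spectral mapping theorem to identify the series with the principal root). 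The paper's version is marginally shorter; yours is more informative and self-verifying.
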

\begin{proof}

Let $A = I + B$ with $\norm{B}_{\infty} < 1/2$. 
By \cref{thm:root-integral-rep}, whenever $A$ has no eigenvalues in the closed left half-plane of the complex plane, it admits the following integral representation:
\begin{align}
    \sqrt{A} - I &= \sqrt{I + B} - \sqrt{I} \nonumber \\
    &= \frac{2}{\pi} \int_{0}^\infty t^2 (t^2I + I + B)^{-1} B (t^2 I + I)^{-1} dt \nonumber \\
    &= \frac{2}{\pi} \int_{0}^\infty \frac{t^2}{t^2 + 1}(t^2I + A)^{-1} B dt.    
\end{align}
Applying the triangle inequality and submultiplicativity, we have
\begin{align}
    \norm{\sqrt{A} - I}_\infty & \leq \frac{2}{\pi} \int_{0}^\infty \frac{t^2}{t^2+1} \norm{(t^2I + A)^{-1} B}_\infty dt \nonumber \\
    & \leq \frac{2 \norm{B}_\infty}{\pi} \int_{0}^\infty \frac{t^2}{t^2+1} \norm{(t^2I + A)^{-1}}_\infty dt.    
\end{align}
We now bound the term $\norm{(t^2I+A)^{-1}}_\infty^2$. 
\begin{align}
    (t^2I+A)^{-1} & = ((t^2+1)I+B)^{-1} \nonumber \\
    &= (t^2 + 1)^{-1} \cdot (I + (t^2+1)^{-1} B)^{-1}.    
\end{align}
Consider the Neumann series
\begin{equation}
    \sum_{k=0}^\infty (- (t^2+1)^{-1} B)^k.    
\end{equation}
Since the radius of convergence is given by (see~\cref{fact:neumann})
\begin{equation}
    \frac{\norm{B}_\infty}{t^2+1} < \frac{1}{2(t^2+1)} < \frac12,
\end{equation}
this series converges to $(I + (t^2+1)^{-1} B)^{-1}$. Thus we can bound
\begin{align}
    \norm{(I + (t^2+1)^{-1} B)^{-1}}_\infty & \leq \sum_{k=0}^{\infty} \norm{(t^2+1)^{-1} B}_\infty^k \nonumber \\
    &\leq \sum_{k=0}^{\infty} \left(\frac{1}{2(t^2+1)}\right)^k \nonumber \\
    & = \frac{2(t^2+1)}{2t^2+1}.
\end{align}
As such, $\norm{(t^2I+A)^{-1}}_\infty^2 \leq {2}/(2t^2+1)$. Substituting this into the integral bound yields:
\begin{align}
    \NORM{\sqrt{A} - I}_\infty & \le \left(\frac{2}{\pi}\int_0^\infty \frac{2t^2}{(2t^2+1)}dt\right) \cdot \norm{B}_\infty \nonumber \\
    & = (2 - \sqrt{2}) \cdot \norm{A - I}_\infty.
\end{align}
This completes the derivation of the desired bound.
\end{proof}

We are now ready to prove that $\tilde{S}$ constitutes an effective rounding of $\hat{S}$ to the symplectic group, and is therefore a symplectic matrix that is close to $S$.

\begin{lemma}[(Error bound for symplectic rounding)]
Suppose $\norm{\hat{S} - S}_\infty \leq \varepsilon$, $\norm{S}_\infty \leq \squeezingparameter$, and $(2\squeezingparameter + 1) \varepsilon < 1/2$. Then $\norm{\tilde{S} - S}_\infty \leq 9\squeezingparameter^2 \varepsilon$.
\end{lemma}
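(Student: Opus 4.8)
The plan is to control $\norm{\tilde S - S}_\infty$ by propagating the initial perturbation along the chain $\hat S - S \rightsquigarrow T - \id \rightsquigarrow Q - \id \rightsquigarrow Q^{-1} - \id \rightsquigarrow \tilde S - S$. First I would establish a linear perturbation bound on $T$. Writing $\hat S = S + E$ with $\norm{E}_\infty \le \varepsilon$ and expanding $T = -\Omega \hat S^\top \Omega \hat S$, the symplecticity of $S$ annihilates the zeroth-order term, since $-\Omega S^\top \Omega S = -\Omega\,\Omega = \id$ using $S^\top \Omega S = \Omega$ and $\Omega^2 = -\id$. This leaves
\begin{equation}
T - \id = -\Omega S^\top \Omega E - \Omega E^\top \Omega S - \Omega E^\top \Omega E.
\end{equation}
Because $\Omega$ is orthogonal ($\norm{\Omega}_\infty = 1$), the triangle inequality and submultiplicativity give $\norm{T - \id}_\infty \le 2\norm{S}_\infty \norm{E}_\infty + \norm{E}_\infty^2 \le (2z+1)\varepsilon$, where the last inequality uses $\norm{E}_\infty \le \varepsilon < 1$ (a consequence of $(2z+1)\varepsilon < 1/2$) to replace $\norm{E}_\infty^2$ by $\norm{E}_\infty$. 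In particular $\norm{T-\id}_\infty < 1/2$.

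With $T$ confined to a small neighborhood of the identity, I would invoke \cref{prop:root-well-defined} to guarantee that $Q = \sqrt T$ is well-defined, and then \cref{prop:root-lipschitz} to obtain
\begin{equation}
\norm{Q - \id}_\infty \le (2 - \sqrt 2)\,\norm{T - \id}_\infty \le (2-\sqrt 2)(2z+1)\varepsilon.
\end{equation}
Since $\norm{Q - \id}_\infty < (2-\sqrt2)/2 < 1$, \cref{fact:neumann} applies to $Q = \id - (\id - Q)$, yielding the quantitative estimates
\begin{equation}
\norm{Q^{-1}}_\infty \le \frac{1}{1 - \norm{Q - \id}_\infty}, \qquad \norm{Q^{-1} - \id}_\infty \le \frac{\norm{Q - \id}_\infty}{1 - \norm{Q - \id}_\infty}.
\end{equation}

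Finally, I would split the target error additively as
\begin{equation}
\tilde S - S = Q^{-1}\hat S - S = Q^{-1}(\hat S - S) + (Q^{-1} - \id)S,
\end{equation}
so that $\norm{\tilde S - S}_\infty \le \norm{Q^{-1}}_\infty\,\varepsilon + \norm{Q^{-1} - \id}_\infty\,z$. Substituting the bounds above (with the Neumann denominator bounded below by $1 - (2-\sqrt2)/2 = \sqrt2/2$) and using $z \ge 1$ to fold the lower-order terms into $z^2$, a direct estimate shows the right-hand side is bounded by a constant strictly below $9$ times $z^2\varepsilon$.

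The main obstacle is the second summand $(Q^{-1}-\id)S$: this is exactly where the quadratic blow-up in $z$ originates, since $\norm{Q^{-1}-\id}_\infty$ is already of order $z\varepsilon$ (inherited through $\norm{T-\id}_\infty$) and is then multiplied by $\norm{S}_\infty \le z$. The remaining effort is bookkeeping of the numerical constants—checking that the factor $(2-\sqrt2)$, the Neumann denominators, and the product $z(2z+1)$ combine to a constant comfortably under $9$ for every $z\ge1$. Since the crude estimate lands near $3.9\,z^2$, the generous target constant $9$ makes this final step routine rather than delicate.
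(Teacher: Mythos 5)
Your proposal is correct and takes essentially the same route as the paper's own proof: the identical $(2z+1)\varepsilon$ perturbation bound on $T$, the same appeals to \cref{prop:root-well-defined} and \cref{prop:root-lipschitz}, and the same Neumann-series control of $Q^{-1}$, landing (as you note) near $3.9\,z^2\varepsilon$, comfortably under the target $9z^2\varepsilon$. The only cosmetic differences are that you expand $T-\id$ directly using the symplecticity of $S$ where the paper telescopes $\hat S^\top\Omega\hat S - S^\top\Omega S$, and that you split $\tilde S - S = Q^{-1}(\hat S - S) + (Q^{-1}-\id)S$ where the paper routes through $\hat S$ via $\tilde S - S = (Q^{-1}-\id)\hat S + (\hat S - S)$; neither changes the substance.
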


\begin{proof}
First, we must argue that $\tilde{S}$ is well-defined. This follows from the existence, well-definedness, and non-singularity of $Q$. Indeed, 
    \begin{align}
        \norm{T - I}_\infty &= \norm{-\Omega \hat{S}^\top \Omega \hat{S} - I}_\infty \nonumber \\
        &= \norm{-\Omega \hat{S}^\top \Omega \hat{S} + \Omega S^\top \Omega S}_\infty  \nonumber \\
        &= \norm{\hat{S}^\top \Omega \hat{S} - S^\top \Omega S}_\infty & \text{(unitary invariance)} \nonumber \\
        &= \norm{\hat{S}^\top \Omega \hat{S} - \hat{S}^\top \Omega S}_\infty + \norm{\hat{S}^\top \Omega S - S^\top \Omega S}_\infty &\text{(triangle inequality)} \nonumber \\
        &= \norm{\hat{S}^\top \Omega}_\infty \cdot \norm{\hat{S} - S}_\infty + \norm{\hat{S}^\top - S^\top}_\infty \cdot \norm{\Omega S}_\infty &\text{(submultiplicativity)} \nonumber\\
        &\leq \bigl(\norm{\hat{S}^\top}_\infty + \norm{S}_\infty\bigr) \eps \nonumber\\
        &\leq (2\squeezingparameter + 1) \eps < 1/2.
    \end{align}
    Thus, by \cref{prop:root-well-defined}, $Q = \sqrt{T}$ exists and is well-defined. Moreover, $\abs{\det(Q)}^2 = \abs{\det(Q^2)} = \abs{\det(T)} > 0$, so $Q$ is invertible. Hence $\tilde S$ is well-defined, and we now show that it is close to $S$ in operator norm.    
    
    By the triangle inequality, we have 
    \begin{align}
        \norm{\tilde{S} - S}_\infty & \leq \norm{\tilde{S} - \hat{S}}_\infty + \norm{\hat{S} - S}_\infty \nonumber \\
        & \leq \norm{\tilde{S} - \hat{S}}_\infty + \varepsilon.
    \end{align}

    The remainder of the proof bounds $\norm{\tilde{S} - \hat{S}}_{\infty}$. 
    Observe that $\norm{\hat{S}}_\infty \leq \norm{S}_\infty + \norm{\hat{S} - S}_\infty \leq \squeezingparameter + \varepsilon$. Applying submultiplicativity of the operator norm twice, we have
    \begin{align}
        \norm{\tilde{S} - \hat{S}}_\infty & = \norm{\hat{S} Q^{-1} - \hat{S}}_\infty \nonumber \\
        & \leq \norm{\hat{S}}_\infty \cdot \norm{Q^{-1} - I}_\infty \nonumber \\
        & \leq (\squeezingparameter + \varepsilon) \cdot \norm{Q^{-1} - I}_\infty \nonumber \\
        & \leq (\squeezingparameter + \varepsilon) \cdot \norm{Q^{-1}}_\infty \cdot \norm{Q - I}_\infty.
    \end{align}
    Let $\alpha = 2 - \sqrt{2}$ for notational convenience. Since $\norm{T - I}_\infty < 1/2$, \cref{prop:root-lipschitz} gives
    \begin{equation}
        \norm{Q - I}_\infty \leq \alpha \norm{T - I}_\infty \leq \alpha(2\squeezingparameter + 1)\varepsilon.
    \end{equation}
    To bound $\norm{Q^{-1}}_{\infty}$, we use the Neumann series:
    \begin{equation}
        Q^{-1}=\sum_{k=0}^\infty (I - Q)^k.    
    \end{equation}
    Since $\norm{Q - I}_\infty \leq \alpha(2\squeezingparameter + 1)\varepsilon < 1/2$, the series converges, yielding
    \begin{align}
        \norm{Q^{-1}}_\infty & \leq \sum_{k=0}^{\infty} \norm{Q - I}_\infty^k \leq \sum_{k=0}^{\infty}(\alpha(2\squeezingparameter + 1)\varepsilon)^k \nonumber \\
        & = \left(1 - \alpha(2\squeezingparameter + 1)\varepsilon\right)^{-1} \leq 1 + 2 \alpha (2\squeezingparameter + 1) \varepsilon,    
    \end{align}
    where the last inequality follows from the fact that $\alpha(2\squeezingparameter + 1)\eps < 1/2$.
    Putting everything together,
    \begin{align}
        \norm{\tilde{S} - S}_\infty &\leq \norm{\tilde{S} - \hat{S}}_\infty + \norm{\hat{S} - S}_\infty \nonumber \\
        &\leq \norm{\tilde{S} - \hat{S}}_\infty + \varepsilon \nonumber \\
        &\leq (z + \varepsilon) \cdot \norm{Q^{-1}}_\infty \cdot\norm{Q - I}_\infty + \varepsilon \nonumber \\
        &\leq \alpha(z + 1)\left(1 + 2\alpha (2\squeezingparameter + 1) \varepsilon \right)(2 \squeezingparameter + 1)\varepsilon + \varepsilon \nonumber \\
        &\leq 2\alpha(z + 1)(2 \squeezingparameter + 1)\varepsilon + \varepsilon \nonumber \\
        &\leq 2\alpha(2\squeezingparameter^2 + 3\squeezingparameter + 1)\varepsilon + \varepsilon \leq 9 \squeezingparameter^2 \varepsilon.
    \end{align}
This completes the derivation of the desired bound.
\end{proof}

\subsection{Learning regularized symplectic matrices}
In the previous section, we described a procedure to round an approximate estimate of a symplectic matrix to an exactly symplectic one. We now discuss how to incorporate this result into the learning algorithm. The regularization result from before can be summarized as follows.

\begin{lemma}[(Symplectic regularization, see~\cref{sec:close-sym} for details)]
\label{lem:symplectic-regularization}
Let $S \in \mathrm{Sp}_{2m}(\R)$ be a symplectic matrix, and let $\hat S \in \R^{2m \times 2m}$ satisfy $\norm{\hat S - S}_\infty \le \varepsilon$ and $\left(2\norm{S}_{\infty}+1\right)\varepsilon < 1/2$. Define $T \coloneqq -\Omega \hat S^\top \Omega \hat S$, $Q \coloneqq \sqrt{T}$, and $\tilde S \coloneqq \hat S Q^{-1}$. Then $\tilde S \in \mathrm{Sp}_{2m}(\R)$ and $\norm{\tilde S - S}_{\infty} \le 9 \norm{S}_{\infty}^2 \varepsilon$. In particular, any matrix $\hat S$ that is $\varepsilon$-close to a symplectic matrix $S$ can be efficiently regularized to an exactly symplectic matrix $\tilde S$ that remains $\mathcal{O}(\norm{S}_{\infty}^2\varepsilon)$-close to $S$.
\end{lemma}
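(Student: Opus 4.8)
The plan is to assemble the structural and quantitative results already established in \cref{sec:close-sym}, since this lemma is precisely their consolidation into a single statement. The first step is to certify that the hypotheses place $T = -\Omega \hat S^\top \Omega \hat S$ in the regime where the principal square root is well-behaved. Using symplecticity of $S$ in the form $-\Omega S^\top \Omega S = \id$ (which holds because $S^\top \Omega S = \Omega$ and $\Omega^2 = -\id$), I would write $T - \id = -\Omega\bigl(\hat S^\top \Omega \hat S - S^\top \Omega S\bigr)$, and then by orthogonality of $\Omega$, the triangle inequality, and submultiplicativity obtain $\norm{T - \id}_\infty \le \bigl(2\norm{S}_\infty + 1\bigr)\varepsilon < 1/2$. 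By \cref{prop:root-well-defined} this guarantees that $Q = \sqrt{T}$ exists and is well-defined, and the determinant identity $\abs{\det Q}^2 = \abs{\det T} > 0$ shows $Q$ is invertible, so that $\tilde S = Q^{-1}\hat S$ is well-defined.

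Next, to establish membership $\tilde S \in \mathrm{Sp}_{2m}(\R)$, I would invoke the symplecticity argument from \cref{sec:close-sym}: using $\Omega^\top = \Omega^{-1} = -\Omega$ one checks that $\Omega Q \Omega^{-1}$ is a square root of $T^\top$, and uniqueness of the principal root forces $Q^\top = \Omega Q \Omega^{-1}$. Substituting this identity into $\tilde S^\top \Omega \tilde S = (Q^{-1})^\top \hat S^\top \Omega \hat S\, Q^{-1}$ and simplifying with $T = Q^2$ collapses the expression to $\Omega$, which is exactly the defining relation of the symplectic group.

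Finally, for the accuracy bound I would quantify the correction $\tilde S - \hat S = (Q^{-1} - \id)\hat S$. Feeding $\norm{T - \id}_\infty \le (2\norm{S}_\infty + 1)\varepsilon$ into the Lipschitz estimate of \cref{prop:root-lipschitz} gives $\norm{Q - \id}_\infty \le (2-\sqrt{2})(2\norm{S}_\infty + 1)\varepsilon$; a Neumann series then bounds $\norm{Q^{-1}}_\infty \le \bigl(1 - \norm{Q-\id}_\infty\bigr)^{-1}$, and combining with $\norm{\hat S}_\infty \le \norm{S}_\infty + \varepsilon$ through submultiplicativity yields, after a triangle inequality with $\norm{\hat S - S}_\infty \le \varepsilon$, the claimed bound $\norm{\tilde S - S}_\infty \le 9\norm{S}_\infty^2\varepsilon$. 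The genuinely delicate parts of the argument—Lipschitz continuity of the matrix square root near the identity and the non-convex nature of symplectic rounding itself—have already been dispatched in the earlier propositions; here the only care required is the constant bookkeeping in this last step, namely verifying that the $\bigl(1 + 2(2-\sqrt{2})(2\norm{S}_\infty + 1)\varepsilon\bigr)$ factor from the Neumann bound together with the $(\norm{S}_\infty + \varepsilon)$ prefactor does not inflate the final constant beyond $9$, which I would confirm by collecting terms exactly as in the Error-bound lemma of \cref{sec:close-sym}.
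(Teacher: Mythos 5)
Your proposal is correct and takes essentially the same route as the paper: \cref{lem:symplectic-regularization} is exactly the consolidation of the results in \cref{sec:close-sym}, and your steps---the perturbation bound $\norm{T-\id}_\infty \le (2\norm{S}_\infty+1)\varepsilon < 1/2$ via symplecticity of $S$ and orthogonal invariance, existence/invertibility of $Q$ through \cref{prop:root-well-defined} and the determinant identity, symplecticity of $\tilde S$ via the principal-root identity $Q^\top = \Omega Q \Omega^{-1}$, and the final error bound from \cref{prop:root-lipschitz} plus the Neumann estimate on $\norm{Q^{-1}}_\infty$---match the paper's argument step for step, including the constant bookkeeping that yields the factor $9\norm{S}_\infty^2$.
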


Consequently, the query complexity required to learn a regularized symplectic matrix can be expressed as follows.

\begin{proposition}[(Learning a regularized symplectic matrix with vacuum-shared inputs)]
\label{prop:regularized-S-vac}
Let $G_{\mathbf r, S} = D_{\mathbf r} U_S$ be a Gaussian unitary on $m$ bosonic modes with symplectic matrix $S \in \mathrm{Sp}_{2m}(\R)$ satisfying $\norm{S}_{\infty} \le z$. Fix accuracy $\tau \in (0,1)$ and failure probability $\delta \in (0,1)$. If the coherent-state estimation protocol of~\cref{thm:learnS-again} (with probes $\ket{0}$ and $\ket{\eta e_i}$) is followed by the symplectic regularization procedure of~\cref{lem:symplectic-regularization}, then whenever
\begin{align}
N_S \ge \frac{324mz^6\bigl(\sqrt{2m}+\sqrt{2\log(2m/\delta)}\bigr)^{2}}{\eta^2\tau^2},
\end{align}
heterodyne shots are used per probe, we have
\begin{equation}
\Pr\left[
   \|\tilde S - S\|_{\infty} \le \tau
\right] \ge 1-\delta.
\end{equation}
where $\tilde S \in \mathrm{Sp}_{2m}(\R)$. In particular, the total query complexity is $(2m+1)N_S$ uses of $G_{\mathbf r, S}$.
\end{proposition}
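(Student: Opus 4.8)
The plan is to compose the two ingredients that have already been established: the vacuum-shared estimation guarantee of \cref{thm:learnS-again}, which produces a (generally non-symplectic) estimate $\hat S$ with small operator-norm error, and the regularization guarantee of \cref{lem:symplectic-regularization}, which converts $\hat S$ into an exactly symplectic $\tilde S$ at the cost of a multiplicative blow-up of the error by $9\norm{S}_\infty^2$. The only real content is to budget the intermediate accuracy so that, after this blow-up, the final error is at most $\tau$, and then to translate the intermediate accuracy into the stated sample size.

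Concretely, I would first fix the intermediate target $\varepsilon_S \coloneqq \tau/(9z^2)$ for the estimation step. By \cref{lem:symplectic-regularization}, on the event $\norm{\hat S - S}_\infty \le \varepsilon_S$ (and provided the precondition checked below holds) one obtains $\norm{\tilde S - S}_\infty \le 9\norm{S}_\infty^2 \varepsilon_S \le 9z^2\varepsilon_S = \tau$, where we used $\norm{S}_\infty \le z$. It therefore suffices to run \cref{thm:learnS-again} with target accuracy $\varepsilon_S$. Substituting $\varepsilon = \varepsilon_S$ and the bound $\norm{S}_\infty \le z$ into \cref{eq:NS-correct} yields the sufficient sample size
\begin{equation}
N_S \ge \frac{4mz^2\bigl(\sqrt{2m}+\sqrt{2\log(2m/\delta)}\bigr)^2}{\eta^2\varepsilon_S^2}
= \frac{4mz^2\cdot 81z^4\bigl(\sqrt{2m}+\sqrt{2\log(2m/\delta)}\bigr)^2}{\eta^2\tau^2}
= \frac{324mz^6\bigl(\sqrt{2m}+\sqrt{2\log(2m/\delta)}\bigr)^2}{\eta^2\tau^2},
\end{equation}
which is exactly the claimed bound; hence with this $N_S$ the lemma gives $\norm{\hat S - S}_\infty \le \varepsilon_S$ with probability at least $1-\delta$.

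The one point requiring verification is the precondition $(2\norm{S}_\infty + 1)\varepsilon_S < 1/2$ of \cref{lem:symplectic-regularization}. Since $z \ge 1$ we have $2z+1 \le 3z^2$, so $(2z+1)\varepsilon_S \le 3z^2\cdot \tau/(9z^2) = \tau/3 < 1/2$ using $\tau < 1$; thus the precondition holds deterministically for all admissible $z,\tau$. Consequently, on the success event of \cref{thm:learnS-again} the regularization step applies and produces $\tilde S \in \mathrm{Sp}_{2m}(\R)$ with $\norm{\tilde S - S}_\infty \le \tau$, so the total failure probability is at most $\delta$. Finally, the query count is inherited verbatim from the vacuum-shared scheme: the $2m+1$ distinct probes $\ket{0}$ and $\ket{\eta e_i}$ for $i \in [2m]$, each measured $N_S$ times, give $(2m+1)N_S$ uses of $G_{\mathbf r, S}$. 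I do not expect any genuine obstacle, since both lemmas do the heavy lifting; the only thing to get right is the quadratic error-propagation factor, which is precisely what turns the $z^2$ dependence of the raw estimation bound into the $z^6$ dependence of the regularized bound (one factor $z^2$ from estimation, and $(9z^2)^2$ from squaring the tightened target $\varepsilon_S$).
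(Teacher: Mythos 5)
Your proposal is correct and follows essentially the same route as the paper's proof: set the intermediate target $\varepsilon = \tau/(9z^2)$, invoke \cref{thm:learnS-again} to get $\|\hat S - S\|_\infty \le \varepsilon$ with probability $1-\delta$, apply \cref{lem:symplectic-regularization} to obtain $\|\tilde S - S\|_\infty \le 9z^2\varepsilon = \tau$, and substitute $\varepsilon$ into \cref{eq:NS-correct} to recover the $324mz^6$ bound. Your explicit verification of the precondition via $2z+1 \le 3z^2$ is in fact slightly more detailed than the paper, which merely asserts that $(2z+1)\varepsilon < 1/2$ holds automatically for $z \ge 1$ and $\tau \in (0,1)$.
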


\begin{proof}
By~\cref{thm:learnS-again}, after $N_S$ heterodyne samples per probe the estimate $\hat S$ satisfies $\norm{\hat S - S}_{\infty} \le \varepsilon$ with probability at least $1-\delta$, provided
\begin{equation}
    N_S \ge  \frac{4mz^2\bigl(\sqrt{2m}+\sqrt{2\log(2m/\delta)}\bigr)^2}{\eta^2 \varepsilon^2}.
\end{equation}
Applying~\cref{lem:symplectic-regularization}, if $(2z+1)\varepsilon < 1/2$, then the regularized matrix $\tilde S$ is symplectic and obeys $\norm{\tilde S - S}_{\infty} \le 9z^2 \varepsilon$. Choosing $\varepsilon = \tau/(9z^2)$ ensures $\norm{\tilde S - S}_{\infty} \le \tau$. 

Since $z \ge 1$, the smallness condition $(2z+1)\varepsilon < 1/2$ is automatically satisfied for $\tau \in (0,1)$. 
Substituting $\varepsilon = \tau/(9z^2)$ into the query complexity bound gives the claimed bound. The total query complexity is $(2m+1)N_S$.
\end{proof}

\begin{proposition}[(Learning a regularized symplectic matrix with symmetric probes)]
\label{prop:regularized-S-sym}
Let $G_{\mathbf r, S} = D_{\mathbf r} U_S$ be a Gaussian unitary on $m$ bosonic modes with symplectic matrix $S \in \mathrm{Sp}_{2m}(\R)$ satisfying $\norm{S}_{\infty} \le z$. Fix accuracy $\tau \in (0,1)$ and failure probability $\delta \in (0,1)$. If the coherent-state estimation protocol of~\cref{lem:pm-design} (with probes $\ket{\pm\eta e_i}$) is followed by the symplectic regularization procedure of~\cref{lem:symplectic-regularization}, then whenever
\begin{align}
N_S \ge \frac{81z^6\bigl(2\sqrt{2m}+\sqrt{2\log(1/\delta)}\bigr)^{2}}{2\eta^2\tau^2},
\end{align}
heterodyne shots are used per probe, we have
\begin{equation}
\Pr\left[
   \|\tilde S - S\|_{\infty} \le \tau
\right] \ge 1-\delta.
\end{equation}
where $\tilde S \in \mathrm{Sp}_{2m}(\R)$. In particular, the total query complexity is $4mN_S$ uses of $G_{\mathbf r, S}$.
\end{proposition}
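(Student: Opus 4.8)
The plan is to prove this exactly as the symmetric-probe counterpart of \cref{prop:regularized-S-vac}: I would chain the estimation guarantee of \cref{lem:pm-design} with the rounding guarantee of \cref{lem:symplectic-regularization}, using the squeezing bound $\norm{S}_\infty \le z$ to absorb all dependence on $S$ into $z$, and then back-solve for the sample count $N_S$. The only difference from \cref{prop:regularized-S-vac} is that the raw-estimation input comes from the symmetric-probe scheme rather than the vacuum-shared scheme, so the base sample bound and the query count change accordingly, while the regularization step is identical.

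First I would invoke \cref{lem:pm-design} to obtain a raw estimator $\hat S$ with $\norm{\hat S - S}_\infty \le \varepsilon$ with probability at least $1-\delta$, which holds once
\[
N_S \ge \frac{\norm{S}_\infty^2\bigl(2\sqrt{2m}+\sqrt{2\log(1/\delta)}\bigr)^2}{2\eta^2\varepsilon^2}.
\]
Replacing $\norm{S}_\infty^2$ by the larger quantity $z^2$ turns this into a sufficient condition that depends only on $z$. Next I would feed $\hat S$ into the regularization procedure: by \cref{lem:symplectic-regularization}, provided the smallness condition $(2z+1)\varepsilon < 1/2$ holds, the rounded matrix $\tilde S \coloneqq Q^{-1}\hat S$ is exactly symplectic and obeys $\norm{\tilde S - S}_\infty \le 9z^2\varepsilon$. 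To hit the target accuracy $\tau$, I would set $\varepsilon = \tau/(9z^2)$, which forces $\norm{\tilde S - S}_\infty \le \tau$.

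The final step is arithmetic: substituting $\varepsilon = \tau/(9z^2)$ gives $\varepsilon^2 = \tau^2/(81z^4)$, so the factor $z^2/\varepsilon^2$ becomes $81z^6/\tau^2$, and the sufficient sample bound collapses to
\[
N_S \ge \frac{81z^6\bigl(2\sqrt{2m}+\sqrt{2\log(1/\delta)}\bigr)^2}{2\eta^2\tau^2},
\]
which is precisely the claimed threshold. The total query count is $4mN_S$ because \cref{lem:pm-design} uses the two probes $\ket{\pm\eta e_i}$ for each of the $2m$ coordinate directions, with $N_S$ heterodyne shots per probe.

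Since the substantive work is already packaged inside the two cited lemmas, there is no real obstacle here; the proof is a routine composition. The one point requiring a line of verification is that the smallness condition $(2z+1)\varepsilon < 1/2$ is automatic rather than an extra hypothesis on $\tau$. With $\varepsilon = \tau/(9z^2)$ and $z \ge 1$, one has $(2z+1)\varepsilon = (2z+1)\tau/(9z^2) \le 3z\tau/(9z^2) = \tau/(3z) \le 1/3 < 1/2$ for every $\tau \in (0,1)$, so no further restriction is needed and the rounding guarantee applies unconditionally. This is exactly the same mild check carried out in \cref{prop:regularized-S-vac}.
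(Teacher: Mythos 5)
Your proposal is correct and follows essentially the same route as the paper: invoke \cref{lem:pm-design} with $\norm{S}_\infty^2 \le z^2$, apply \cref{lem:symplectic-regularization} with $\varepsilon = \tau/(9z^2)$, and substitute to recover the stated threshold on $N_S$ and the query count $4mN_S$. Your explicit verification that $(2z+1)\varepsilon \le \tau/(3z) < 1/2$ is automatic for $z \ge 1$ and $\tau \in (0,1)$ is exactly the check the paper makes (via the identical step in \cref{prop:regularized-S-vac}), so nothing is missing.
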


\begin{proof}
By~\cref{lem:pm-design}, after $N_S$ heterodyne samples per probe the estimate $\hat S$ satisfies $\norm{\hat S - S}_{\infty} \le \varepsilon$ with probability at least $1-\delta$, provided
\begin{equation}
    N_S \ge  \frac{z^2\bigl(2\sqrt{2m}+\sqrt{2\log(1/\delta)}\bigr)^{2}}{2\eta^2\varepsilon^2}.
\end{equation}
The subsequent proof then follows the same steps as in~\cref{prop:regularized-S-vac}, and by setting $\varepsilon = \tau/(9z^2)$ all the required conditions are satisfied. The total query complexity is $4mN_S$.
\end{proof}

\section{Learning the displacement component}
Next, we address the algorithm for learning the displacement component $\mathbf r$ of the Gaussian unitary $G_{\mathbf r,S} = D_{\mathbf r}U_S$. We present two versions of the learning algorithm. In~\cref{sec:dis-entangle,sec:algo-without-entanglement}, we describe and analyze the query complexities of two approaches: one based on two-mode squeezed states that requires entanglement with an auxiliary system, and another based on single-mode squeezed states that does not require such entanglement.

\subsection{Displacement learning with auxiliary-system entanglement}\label{sec:dis-entangle}
To estimate $\mathbf r$, we prepare a product two-mode-squeezed vacuum $\ket{\nu}^{\otimes m}=U_{S_\nu}\ket{0}^{\otimes 2m}$, apply first $U_{\tilde S^{-1}}$, then $G$, and then $U_{S_\nu}^\dagger$, where
\begin{equation}
    S_{\nu}\coloneqq
    \begin{pmatrix}
    \sqrt{\nu}I & \sqrt{\nu-1}Z\\[2pt]
    \sqrt{\nu-1}Z & \sqrt{\nu}I
    \end{pmatrix},\qquad
    Z\coloneqq\bigoplus_{i=1}^{m}
    \begin{pmatrix}1&0\\0&-1\end{pmatrix}.
\end{equation}

The resulting state after this sequence of operations has the following moments.

\begin{lemma}[(Moments of the output state of the protocol)]
\label{lem:moments}
Let $\Delta := \tilde S^{-1} S - \id$. 
Then, the first and second moments of the protocol output can be expressed as
\begin{align}
\mathbf m \bigl(U_{S_\nu}^\dagger G_{\mathbf r, S} U_{\tilde S^{-1}}\ket{\nu}^{\otimes m}\bigr)
    &= (\sqrt{\nu} \, \mathbf r,\; -\sqrt{\nu-1}Z\mathbf r), \\
V\bigl(U_{S_\nu}^\dagger G_{\mathbf r, S} U_{\tilde S^{-1}}\ket{\nu}^{\otimes m}\bigr) & = 
    \begin{pmatrix}
        A & C\\[2pt] C^{\top} & B
    \end{pmatrix},
\end{align}
where
\begin{align}
A &= \id + \nu(\Delta+\Delta^{\top}) + \nu(2\nu+1)\,\Delta\Delta^{\top},\\
B &= \id - (\nu-1)(\Delta+\Delta^{\top}) + (\nu+1)(2\nu+1)\,\Delta\Delta^{\top},\\
C &= \Bigl[-(2\nu-1)\sqrt{\nu(\nu-1)}\Delta\Delta^{\top}
    -\sqrt{\nu(\nu-1)}\Delta^{\top}
    +\sqrt{\nu(\nu-1)}\Delta\Bigr]Z.
\end{align}
\end{lemma}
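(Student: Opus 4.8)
The plan is to treat the whole protocol as one Gaussian channel applied to the vacuum and to track only the first and second moments, exploiting that Gaussian unitaries act affinely on these. Writing $\ket{\nu}^{\otimes m}=U_{S_\nu}\ket{0}^{\otimes 2m}$, the output state is $W\ket{0}^{\otimes 2m}$ with $W=U_{S_\nu}^\dagger D_{\mathbf r}U_S U_{\tilde S^{-1}}U_{S_\nu}$. I would first recall the moment rules from \cref{sec:pre-cv}: a symplectic unitary $U_T$ sends $\mathbf m\mapsto T\mathbf m$ and $V\mapsto TVT^\top$, while $D_{\mathbf r}$ sends $\mathbf m\mapsto \mathbf m+\mathbf r$ and fixes $V$. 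Since $U_{\tilde S^{-1}}$, $U_S$ and $D_{\mathbf r}$ act only on the signal register (the first $m$ modes), on the full $2m$-mode system they are represented by $\tilde S^{-1}\oplus\id$, $S\oplus\id$ and the displacement $(\mathbf r,0)$, so the two symplectic factors combine on the signal block into the effective map $\tilde S^{-1}S=\id+\Delta$, with the idler block left untouched.

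For the first moment, the vacuum has zero mean and the only operation injecting a nonzero mean is $D_{\mathbf r}$, contributing $(\mathbf r,0)$. Propagating this through the final $U_{S_\nu}^\dagger$ gives $\mathbf m=S_\nu^{-1}(\mathbf r,0)$. I would compute $S_\nu^{-1}$ directly from the block form of $S_\nu$, using $Z^2=\id$ and $\nu-(\nu-1)=1$ to obtain $S_\nu^{-1}=\lsmatrix \sqrt\nu\,\id & -\sqrt{\nu-1}\,Z\\ -\sqrt{\nu-1}\,Z & \sqrt\nu\,\id\rsmatrix$, which yields $(\sqrt\nu\,\mathbf r,\,-\sqrt{\nu-1}\,Z\mathbf r)$ at once.

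For the covariance I would propagate $V$ step by step starting from the vacuum covariance $\id$; equivalently, I would conjugate the two-mode-squeezed covariance $V(\ket{\nu}^{\otimes m})=S_\nu S_\nu^\top=\lsmatrix (2\nu-1)\id & 2\sqrt{\nu(\nu-1)}\,Z\\ 2\sqrt{\nu(\nu-1)}\,Z & (2\nu-1)\id\rsmatrix$ first by the signal-block map $R\oplus\id$ with $R=\id+\Delta$, and then by $S_\nu^{-1}$. Carrying out the $2\times 2$ block multiplication, the inner conjugation produces blocks $(2\nu-1)RR^\top$, $2\sqrt{\nu(\nu-1)}\,RZ$ and $(2\nu-1)\id$, and the outer conjugation by $S_\nu^{-1}$ then mixes them. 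Throughout I would collapse scalar prefactors with $\nu-(\nu-1)=1$ and simplify $Z$-sandwiched terms with $Z^2=\id$, and finally substitute $R=\id+\Delta$ and group by total degree in $\Delta$ to read off $A$, $B$ and $C$. A useful sanity check is $\Delta=0$: when $\tilde S=S$ the signal map cancels and the output is a coherent state, forcing $A=B=\id$ and $C=0$.

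The routine-but-delicate part, and the main obstacle, is the combination of convention tracking with the block-matrix bookkeeping. Two points demand care: (i) fixing the composition orientation so that the mismatch enters as $\Delta=\tilde S^{-1}S-\id$ (rather than $S\tilde S^{-1}-\id$) and so that the outer factor is $S_\nu^{-1}$ with the correct signs on the off-diagonal $Z$ blocks—an incorrect choice silently transposes or sign-flips the answer; and (ii) the appearance of $Z$-conjugated quantities such as $ZRZ$ in the idler and cross blocks, which must be reduced before the final grouping. The cross block $C$ is the heaviest computation, since it keeps the full non-symmetrized dependence on $R$ and on $Z$; I expect it to produce the asymmetric combination $\Delta-\Delta^\top-(2\nu-1)\Delta\Delta^\top$ multiplied on the right by $Z$, matching the stated form, while $A$ and $B$ follow from the analogous symmetric combinations $RR^\top$ and $R+R^\top$.
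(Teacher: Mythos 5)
Your route is the same as the paper's (\cref{app:proof-moments}): express the output as the net symplectic $W=S_\nu^{-1}\,(R\oplus\id)\,S_\nu$ acting on the vacuum (equivalently, conjugate $V(\ket{\nu}^{\otimes m})=S_\nu S_\nu^{\top}$ by $R\oplus\id$ and then by $S_\nu^{-1}$), compute $S_\nu^{-1}$ in block form, and read off $\mathbf m = S_\nu^{-1}(\mathbf r,\mathbf 0)$ and $V=WW^{\top}$. However, the one point you yourself single out as delicate—the composition orientation—you resolve in the wrong direction. The protocol applies $U_{\tilde S^{-1}}$ \emph{first} and $U_S$ \emph{second}, so in the Schr\"odinger picture the moments propagate as $V\mapsto \tilde S^{-1}V\tilde S^{-\top}\mapsto S\tilde S^{-1}V\tilde S^{-\top}S^{\top}$; the signal-block map is $R=S\tilde S^{-1}$ and the mismatch entering the moments is $S\tilde S^{-1}-\id$, not $\tilde S^{-1}S-\id$. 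This is exactly what the paper's proof uses: the appendix writes $W=S_\nu^{-1}\lr{\begin{smallmatrix} S & 0\\ 0 & \id\end{smallmatrix}}\lr{\begin{smallmatrix}\tilde S^{-1} & 0\\ 0 & \id\end{smallmatrix}}S_\nu$ and silently redefines $\Delta \coloneqq S\tilde S^{-1}-\id$, overriding the definition printed in the lemma statement (which is a typo). Because your block algebra is form-invariant in $R$, executing your plan would reproduce formulas of the stated \emph{shape}, but with the wrong matrix inside $\Delta$; as a proof of the moment formulas this is a genuine error, even though for all downstream uses (which only invoke $\norm{\Delta}_\infty$, and $\norm{\tilde S^{-1}S-\id}_\infty$ and $\norm{S\tilde S^{-1}-\id}_\infty$ obey the same bounds) the two conventions are interchangeable.

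Two further points that the executed algebra would have surfaced. First, the quadratic coefficients: the block multiplication gives $W_{11}=\id+\nu\Delta$ and $W_{12}=\sqrt{\nu(\nu-1)}\,\Delta Z$, hence the $\Delta\Delta^{\top}$ coefficient in $A$ is $\nu^2+\nu(\nu-1)=\nu(2\nu-1)$, and in $B$ it is $(\nu-1)(2\nu-1)$—not the $\nu(2\nu+1)$ and $(\nu+1)(2\nu+1)$ printed in the statement; the paper's own appendix confirms the $(2\nu-1)$ values, so the printed coefficients are typos your derivation would have exposed rather than "matched." Second, your expectation that the $Z$-conjugated quantities such as $ZRZ$ can be reduced away before grouping is only partially realizable: $Z\Delta Z\neq\Delta$ in general, so the idler block correctly reads $B=\id-(\nu-1)Z(\Delta+\Delta^{\top})Z+(\nu-1)(2\nu-1)Z\Delta\Delta^{\top}Z$; the appendix drops these conjugations in its last step as well, which is immaterial only because \cref{thm:learn-r-nu-only} uses solely the first block $A$. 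Your first-moment computation $S_\nu^{-1}(\mathbf r,\mathbf 0)=(\sqrt{\nu}\,\mathbf r,\,-\sqrt{\nu-1}\,Z\mathbf r)$ and the $\Delta=0$ sanity check are correct and identical to the paper's.
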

\begin{proof}
    The proof follows from a direct block-matrix calculation of the combined symplectic transformation; details are deferred to~\cref{app:proof-moments}.
\end{proof}
If $\nu$ is large and $\norm{\Delta}_{\infty} =o(1/\nu)$ (i.e., the estimated symplectic $\tilde S$ is very close to the true symplectic $S$), then a few rounds of heterodyne detection on $U_{S_\nu}^\dagger G U_{\tilde S^{-1}} \ket{\nu}^{\otimes m}$ produce an estimate of the displacement $\mathbf r$ with error $o(1/\sqrt{\nu})$, with high probability. Then, before analyzing the query complexity for displacement learning, let us recall that, by \cref{eq_tmsv}, the tensor product of $n$ two-mode squeezed vacuum (TMSV) states can be expressed as 
\begin{equation}
    \ket{\nu}^{\otimes m} = U_{S_\nu}\ket{0}^{\otimes 2m},    
\end{equation}
namely as the action of the Gaussian unitary associated with the symplectic matrix $S_\nu$ on the $2m$-mode vacuum state $\ket{0}^{\otimes 2m}$, where
\begin{equation}
    S_{\nu} \coloneqq 
    \begin{pmatrix}
        \sqrt{\nu}\,\id & \sqrt{\nu-1}\,Z \\[2pt]
        \sqrt{\nu-1}\,Z & \sqrt{\nu}\,\id
    \end{pmatrix}, \qquad
    Z \coloneqq \bigoplus_{i=1}^n \begin{pmatrix} 1 & 0 \\ 0 & -1 \end{pmatrix}.
\end{equation}

\begin{lemma}[(Query complexity for learning the displacement with auxiliary-system entanglement)]
\label{thm:learn-r-nu-only}
Let $G_{\mathbf r, S} = D_{\mathbf r} U_S$ act on $m$ bosonic modes (phase-space dimension $2m$). 
Assume that an estimate $\tilde S$ of the symplectic part has been obtained and implemented via $U_{\tilde S^{-1}}$, and define $\Delta \coloneqq \tilde S^{-1}S - \id$.
Prepare $m$ copies of the two-mode squeezed vacuum $\ket{\nu}^{\otimes m}$, apply in order $U_{\tilde S^{-1}}$, then $G_{\mathbf r, S}$, and then $U_{S_\nu^{-1}}$, where
\begin{equation}
    S_{\nu}^{-1} = 
\begin{pmatrix}
\sqrt{\nu}\,\id & -\sqrt{\nu-1}\,Z \\[2pt]
-\sqrt{\nu-1}\,Z & \sqrt{\nu}\,\id
\end{pmatrix}. 
\end{equation}
The first $2m$ output outcome coordinates then have mean $\sqrt{\nu}\, \mathbf r$ and covariance
\begin{equation}
    \Sigma^{(1)}  = \frac{A+\id}{2}, \qquad 
    A = \id + \nu(\Delta+\Delta^{\top}) + \nu(2\nu+1)\Delta\Delta^{\top}.
\end{equation}
Performing heterodyne detection on these modes yields $Y^{(1)} \sim \mathcal N(\sqrt{\nu}\, \mathbf r, \,\Sigma^{(1)})$.
Repeating the experiment $N_r$ times gives independent samples  $Y^{(1)}_1,\dots,Y^{(1)}_{N_r}$, and we estimate
\begin{equation}
    \tilde{\mathbf r} \coloneqq \frac{\hat{\mu}^{(1)}}{\sqrt{\nu}}, \qquad 
    \hat{\mu}^{(1)} = \frac{1}{N_r}\sum_{k=1}^{N_r} Y^{(1)}_k.
\end{equation}
Then, for every $\varepsilon>0$ and $\delta\in(0,1)$, if
\begin{align}
N_r \ge \frac{\bigl(1 + \nu\norm{\Delta}_{\infty}+(3/2)\cdot(\nu\norm{\Delta}_{\infty})^2\bigr) \bigl(\sqrt{2m}+\sqrt{2\log(1/\delta)}\bigr)^2}{\nu \varepsilon^2},
\end{align}
we have 
\begin{equation}
    \Pr\left[\norm{\tilde{\mathbf r}-\mathbf r}_2 \le \varepsilon\right] \ge 1-\delta.
\end{equation}
\end{lemma}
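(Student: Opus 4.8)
The plan is to reduce the statement to a rescaled application of the Gaussian mean-estimation bound in \cref{lem:estrelative}, the only substantive computation being an operator-norm estimate of the measurement covariance $\Sigma^{(1)}$. First I would record the distributional setup: by \cref{lem:moments} the protocol output has covariance $\begin{pmatrix} A & C \\ C^\top & B \end{pmatrix}$, so restricting to the first $2m$ phase-space coordinates gives a reduced state with covariance $A$ and mean $\sqrt{\nu}\,\mathbf r$. Combining this with the heterodyne formula \cref{eq:heterodyne-gaussian}, the heterodyne outcomes on these modes are i.i.d.\ samples $Y^{(1)}_k \sim \mathcal N\!\left(\sqrt{\nu}\,\mathbf r,\,\Sigma^{(1)}\right)$ with $\Sigma^{(1)} = (A+\id)/2$, exactly as asserted. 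Since $\hat\mu^{(1)}$ is the empirical mean of these $N_r$ samples, it is an unbiased estimator of $\sqrt{\nu}\,\mathbf r$.

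Next I would apply \cref{lem:estrelative} in dimension $n = 2m$ with $\mu = \sqrt{\nu}\,\mathbf r$, $\Sigma = \Sigma^{(1)}$, and $N = N_r$, which yields, with probability at least $1-\delta$,
\begin{equation}
\norm{\hat\mu^{(1)} - \sqrt{\nu}\,\mathbf r}_2 \le \frac{\chi_{2m,\delta}}{\sqrt{N_r}}\sqrt{\norm{\Sigma^{(1)}}_\infty}, \qquad \chi_{2m,\delta} = \sqrt{2m}+\sqrt{2\log(1/\delta)}.
\end{equation}
Because $\tilde{\mathbf r} - \mathbf r = (\hat\mu^{(1)} - \sqrt{\nu}\,\mathbf r)/\sqrt{\nu}$, dividing through by $\sqrt{\nu}$ gives, on the same event,
\begin{equation}
\norm{\tilde{\mathbf r} - \mathbf r}_2 \le \frac{\chi_{2m,\delta}}{\sqrt{\nu N_r}}\sqrt{\norm{\Sigma^{(1)}}_\infty}.
\end{equation}

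The remaining step, and the only one requiring care, is to bound $\norm{\Sigma^{(1)}}_\infty$ in terms of $\nu\norm{\Delta}_\infty$. Writing $\Sigma^{(1)} = \id + \tfrac{\nu}{2}(\Delta+\Delta^\top) + \tfrac{\nu(2\nu+1)}{2}\Delta\Delta^\top$ and using the triangle inequality with $\norm{\Delta+\Delta^\top}_\infty \le 2\norm{\Delta}_\infty$ and $\norm{\Delta\Delta^\top}_\infty = \norm{\Delta}_\infty^2$, I obtain $\norm{\Sigma^{(1)}}_\infty \le 1 + \nu\norm{\Delta}_\infty + \tfrac{\nu(2\nu+1)}{2}\norm{\Delta}_\infty^2$. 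The key observation is that the hypothesis $\nu \ge 1$ gives $\tfrac{2\nu+1}{2} = \nu + \tfrac12 \le \tfrac32\nu$, so that $\tfrac{\nu(2\nu+1)}{2}\norm{\Delta}_\infty^2 \le \tfrac32(\nu\norm{\Delta}_\infty)^2$ and hence
\begin{equation}
\norm{\Sigma^{(1)}}_\infty \le 1 + \nu\norm{\Delta}_\infty + \tfrac32(\nu\norm{\Delta}_\infty)^2.
\end{equation}
Finally, requiring $\tfrac{\chi_{2m,\delta}^2\,\norm{\Sigma^{(1)}}_\infty}{\nu N_r} \le \varepsilon^2$ and substituting this bound reproduces exactly the stated threshold on $N_r$. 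I expect no genuine obstacle: the argument is a rescaled mean-estimation bound, and the only subtlety is tracking the $\nu$-dependence of $\Sigma^{(1)}$ and using $\nu \ge 1$ to absorb $\nu+\tfrac12$ into $\tfrac32\nu$ and land the clean constant $3/2$.
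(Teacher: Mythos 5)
Your proposal is correct and follows essentially the same route as the paper's own proof: apply \cref{lem:estrelative} to the samples $Y^{(1)}_k \sim \mathcal N(\sqrt{\nu}\,\mathbf r,\Sigma^{(1)})$, rescale by $1/\sqrt{\nu}$, and bound $\norm{\Sigma^{(1)}}_\infty \le 1+\nu\norm{\Delta}_\infty+\tfrac32(\nu\norm{\Delta}_\infty)^2$ using $\nu\ge1$. Your only cosmetic difference is bounding $\Sigma^{(1)}$ directly rather than first bounding $\norm{A}_\infty$ and then $\tfrac12\norm{A+\id}_\infty$ as the paper does; the resulting constants are identical.
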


\begin{proof}
Since $Y^{(1)}\sim\mathcal N(\sqrt{\nu}\,\mathbf r, \,\Sigma^{(1)})$, the empirical mean satisfies the Gaussian concentration inequality (see~\cref{lem:estrelative}):
\begin{equation}
    \Pr \left[\norm{\hat{\mu}^{(1)} - \sqrt{\nu} \, \mathbf r}_2  \le \frac{\chi_{2m,\delta}}{\sqrt{N_r}} \sqrt{\norm{\Sigma^{(1)}}_{\infty}}\right] \ge 1-\delta
\end{equation}
where $\chi_{2m,\delta}=\sqrt{2m}+\sqrt{2\log(1/\delta)}$.  
To bound $\norm{\Sigma^{(1)}}_{\infty}$, note that
\begin{align}
    \norm{A}_{\infty} & \le \left(2 + \frac{1}{\nu}\right) (\nu \norm{\Delta}_{\infty})^2 + 2(\nu\norm{\Delta}_{\infty}) + 1 \nonumber \\
    & \le 3\left(\nu \norm{\Delta}_{\infty}\right)^2 + 2\nu \norm{\Delta}_{\infty} + 1,    
\end{align}
since $\nu\ge1$. Therefore
\begin{align}
    \norm{\Sigma^{(1)}}_{\infty} & = \frac12 \norm{A+\id}_{\infty} \nonumber \\
    & \le 1 + \nu \norm{\Delta}_{\infty}+\frac32 (\nu\norm{\Delta}_{\infty})^2.
\end{align}
Finally,
\begin{align}
    \norm{\tilde{\mathbf r}-\mathbf r}_2 & = \frac{1}{\sqrt{\nu}} \norm{\hat{\mu}^{(1)} - \sqrt{\nu}\mathbf r}_2 \nonumber \\
    & \le \frac{\chi_{2m,\delta} \cdot \bigl(1+\nu \norm{\Delta}_{\infty} + (3/2)\cdot(\nu \norm{\Delta}_{\infty})^2\bigr)^{1/2}}{\sqrt{N_r}\sqrt{\nu}},
\end{align}
so requiring the right-hand side to be at most $\varepsilon$ gives the stated bound on $N_r$.
\end{proof}

\subsubsection{Experiment-friendly implementation without active squeezing}\label{sec:no-activesq}

In one stage of our protocol, we prepare copies of
\begin{align}
U_{S_\nu}^\dagger  G_{\mathbf r, S}  U_{\tilde{S}^{-1}} \ket{\nu}^{\otimes m},
\end{align}
and then perform heterodyne detection. At first glance, this appears to require
\begin{itemize}
    \item[(i)] the \emph{online} application of the \emph{active} Gaussian unitary $U_{S_\nu}^\dagger$ prior to measurement, and \item [(ii)] applying $U_{\tilde{S}^{-1}}$ to a squeezed input.
\end{itemize}
Active (general) Gaussian operations are experimentally demanding, whereas \emph{passive} Gaussian unitaries (those whose symplectic matrix is also orthogonal, i.e., in $\mathrm{Sp}_{2n}(\mathds{R}) \cap \mathrm{O}(2n)$, such as beam splitters and phase shifters) and \emph{input} (offline) squeezing via pre-prepared squeezed-vacuum ancillas are markedly more accessible experimentally. We now show that neither (i) nor (ii) requires online squeezing: this entire block can be implemented \emph{exactly} using only passive linear optics and input squeezed states, followed by complementary homodyne measurements and a linear post-processing of the recorded outcomes (see also~{\cite[Section E and Appendix A]{bittel2025energyindependenttomographygaussianstates}}).

\paragraph{Removing $U_{S_\nu}^\dagger$.}
The joint action “active Gaussian then heterodyne” can be realized \emph{passively} as follows.

\begin{proposition}[(Passive realization of heterodyne after an active Gaussian, see e.g.~{\cite[Appendix A and Proposition 15]{bittel2025energyindependenttomographygaussianstates}})]
\label{prop:passive-het}
Let $\rho$ be an $n$-mode Gaussian state and let $U_S$ be a Gaussian unitary with symplectic matrix $S\in\mathrm{Sp}_{2n}(\R)$. The heterodyne outcome distribution on $U_S\rho U_S^\dagger$ coincides with that of the following passive scheme:
\begin{enumerate}
\item Prepare an auxiliary squeezed vacuum $\sigma$ with covariance $S^{-1}S^{-\top}$ (i.e., the state $U_S^{-1}\ket{0}$).
\item Interfere $\rho$ and $\sigma$ on a balanced beam splitter, mode by mode.
\item Homodyne the position quadratures on one output arm and the momentum quadratures on the other; denote the outcome by $\mathbf q\coloneqq \sqrt{2}(\mathbf x, \mathbf p)\in\R^{2n}$.
\item Output the post-processed variable $S \mathbf q$.
\end{enumerate}
Then $S\mathbf r$ has the same Gaussian distribution as the heterodyne outcome on $U_S\rho U_S^\dagger$.
\end{proposition}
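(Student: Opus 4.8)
The plan is to reduce the whole statement to a comparison of first and second moments, since every state appearing in both the direct and the passive scheme is Gaussian, and heterodyne detection on a Gaussian state is itself Gaussian by \eqref{eq:heterodyne-gaussian}. First I would record the target distribution. Using the transformation rules $\mathbf m(U_S\rho U_S^\dagger)=S\mathbf m(\rho)$ and $V(U_S\rho U_S^\dagger)=SV(\rho)S^\top$ together with the heterodyne law, the heterodyne outcome on $U_S\rho U_S^\dagger$ is distributed as
\[
\mathcal N\!\left(S\mathbf m(\rho),\ \tfrac12\bigl(SV(\rho)S^\top+\id\bigr)\right).
\]
The entire proof then consists in showing that the passively produced variable $S\mathbf q$ has exactly this mean and covariance.

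Next I would compute the distribution of the raw outcome $\mathbf q$ as a chain of symplectic propagations. The input is the product state $\rho\otimes\sigma$, which is Gaussian with mean $(\mathbf m(\rho),0)$ and block-diagonal covariance $V(\rho)\oplus V(\sigma)$, where $V(\sigma)=V(U_S^{-1}\ket{0})=S^{-1}\id\,S^{-\top}=S^{-1}S^{-\top}$. The balanced beam splitter acts by a fixed symplectic-orthogonal matrix $\mathcal B\in\mathrm{Sp}_{4n}(\R)\cap\mathrm O(4n)$, and the complementary homodyne step (positions on one output arm, momenta on the other) amounts to selecting a maximal commuting family of output quadratures, encoded by a selection matrix $M$ applied to $\mathcal B$. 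By the homodyne law the recorded outcomes are Gaussian with mean $M\mathcal B\,(\mathbf m(\rho),0)^\top$ and covariance $\tfrac12\,M\mathcal B\,(V(\rho)\oplus S^{-1}S^{-\top})\,\mathcal B^\top M^\top$; the rescaling $\mathbf q=\sqrt2(\mathbf x,\mathbf p)$ cancels the beam-splitter $1/\sqrt2$ factors and leaves a Gaussian with mean $\mathbf m(\rho)$ and covariance of the form $\tfrac12\bigl(V(\rho)+\widetilde V_\sigma\bigr)$, where $\widetilde V_\sigma$ is the ancilla covariance transported through $\mathcal B$ and the complementary-quadrature selection. The heart of the construction is that, for the chosen $\sigma=U_S^{-1}\ket0$, this transported noise equals $S^{-1}S^{-\top}$.

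The decisive step is then the classical post-processing $\mathbf q\mapsto S\mathbf q$. Under this affine map the Gaussian family transforms covariantly, sending the mean to $S\mathbf m(\rho)$ and the covariance to $\tfrac12\bigl(SV(\rho)S^\top+S\,S^{-1}S^{-\top}\,S^\top\bigr)$, and the scheme is engineered so that the ancilla contribution collapses exactly through the cancellation $S\,S^{-1}S^{-\top}\,S^\top=\id$. This reproduces the target mean and covariance line for line, so the two Gaussians coincide and hence the two outcome distributions are identical, which is the claim.

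The main obstacle I expect is the bookkeeping of the selection matrix $M$, that is, faithfully translating ``homodyne the positions on one arm and the momenta on the other'' into its action on phase space. Because a physical (passive) beam splitter acts identically on the $\hat x$ and $\hat p$ quadratures, measuring complementary quadratures on the two output arms introduces a momentum sign flip $Z=\bigoplus_{i}\mathrm{diag}(1,-1)$ on the ancilla's contribution, and one must check that the selected output quadratures genuinely commute so that the joint homodyne is well defined. The nontrivial point is therefore to verify that $U_S^{-1}\ket0$ is precisely the ancilla for which this conjugated noise, after the $S$ rescaling, reduces to the identity; the vacuum case $\sigma=\ket0$, which recovers the ordinary double-homodyne realization of heterodyne on $\rho$, is a useful consistency check along the way. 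Once this conjugation accounting is settled, the remaining manipulations are routine Gaussian moment propagation.
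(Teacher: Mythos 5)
Your strategy is the right one, and in fact the paper itself contains no proof of this proposition at all --- it is imported verbatim from the cited reference \cite{bittel2025energyindependenttomographygaussianstates} --- so your attempt stands or falls on its own. The skeleton is sound: every state involved is Gaussian, heterodyne and homodyne outcome laws are Gaussian, so it suffices to match first and second moments; the target law $\mathcal N\bigl(S\mathbf m(\rho),\tfrac12(SV(\rho)S^\top+\id)\bigr)$ is computed correctly; and the final step, that the affine post-processing $\mathbf q\mapsto S\mathbf q$ maps mean and covariance covariantly, is routine. The problem is that the one identity the entire proof hinges on --- that the ancilla noise transported through the beam splitter and the complementary-quadrature selection equals $S^{-1}S^{-\top}$ --- is asserted (``the scheme is engineered so that\dots'') and explicitly deferred in your last paragraph as ``the nontrivial point.'' Carrying out the check you postponed shows it is false as stated. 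As you yourself note, commutativity forces the measured operators to be (after the $\sqrt2$ rescaling) $\hat x_{A,j}+\hat x_{B,j}$ and $\hat p_{A,j}-\hat p_{B,j}$, i.e.\ the vector $\hat{\mathbf R}_A+Z\hat{\mathbf R}_B$ with $Z=\bigoplus_j\mathrm{diag}(1,-1)$; hence the transported noise is $ZV(\sigma)Z$, not $V(\sigma)$. With $V(\sigma)=S^{-1}S^{-\top}$ the post-processed covariance is $\tfrac12\bigl(SV(\rho)S^\top+SZS^{-1}S^{-\top}ZS^\top\bigr)$, and $SZS^{-1}S^{-\top}ZS^\top=\id$ holds iff $ZS^{-1}S^{-\top}Z=S^{-1}S^{-\top}$, which fails for generic $S$: already for one mode with $S^{-1}=\lsmatrix 1 & 1\\ 0 & 1\rsmatrix$ one gets $S^{-1}S^{-\top}=\lsmatrix 2 & 1\\ 1 & 1\rsmatrix$ while $ZS^{-1}S^{-\top}Z=\lsmatrix 2 & -1\\ -1 & 1\rsmatrix$.

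The conceptual reason is that double homodyne with a nonvacuum ancilla $\sigma$ implements the POVM with elements proportional to $D_{\mathbf q}\,\sigma^{\top}D_{\mathbf q}^\dagger$: the ancilla enters \emph{transposed} (time-reversed), and phase-space transposition is precisely conjugation by $Z$; this is invisible in the vacuum case because $Z\id Z=\id$, which is why your consistency check with $\sigma=\ket0$ cannot detect the issue. A complete blind proof must therefore either take the ancilla to be the time-reversed squeezed vacuum $U_{(ZSZ)^{-1}}\ket0$, whose covariance is $ZS^{-1}S^{-\top}Z$ (note $ZSZ$ is again symplectic, so this is still an offline-preparable squeezed vacuum), or fix beam-splitter/outcome conventions so that the $Z$ is absorbed into the statement. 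For the way \cref{prop:passive-het} is actually used in the paper the discrepancy is harmless: there $S=S_\nu$, and $S_\nu^{-1}S_\nu^{-\top}=S_\nu^{-2}$ is invariant under conjugation by $Z$ (its diagonal blocks are multiples of $\id$ and its off-diagonal blocks are multiples of $Z$ restricted to $m$ modes), so your claimed cancellation does hold in the case of interest. But as a proof of the proposition for arbitrary $S\in\mathrm{Sp}_{2n}(\R)$, the deferred verification is exactly where the argument breaks, and it needs the correction above.
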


\noindent
Applying~\cref{prop:passive-het} with $S=S_\nu$ shows that the apparent online use of $U_{S_\nu}^\dagger$ can be dispensed with: the unknown signal remains untouched; one injects a fixed squeezed ancilla, uses only passive optics and complementary homodynes, and finally multiplies the recorded outcome by $S_\nu$.

\paragraph{Rewriting $U_{\tilde{S}^{-1}} \ket{\nu}^{\otimes m}$ using only offline squeezing.}
Note that
\begin{align}
U_{\tilde{S}^{-1}} \ket{\nu}^{\otimes m}
& = U_{\tilde{S}^{-1}} U_{S_\nu} \ket{0}^{\otimes m} \nonumber  \\
& = U_{\tilde{S}^{-1} S_\nu} \ket{0}^{\otimes m}.
\end{align}
By the Euler (Bloch–Messiah) decomposition,
\begin{align}
\tilde{S}^{-1} S_\nu = O_1  Z  O_2,
\end{align}
with $O_1,O_2\in\mathrm{Sp}_{2n}(\mathds{R})\cap\mathrm{O}(2n)$ passive and $Z=\mathrm{diag}(z_1,z_1^{-1},\dots,z_n,z_n^{-1})$ diagonal (single-mode squeezers). On the unitary level,
\begin{align}
U_{\tilde{S}^{-1} S_\nu}\ket{0}^{\otimes m}
& = U_{O_1} U_Z U_{O_2}\ket{0}^{\otimes m} \nonumber  \\
& = U_{O_1} U_Z \ket{0}^{\otimes m} \nonumber  \\
& = U_{O_1}\ket{Z},
\end{align}
since $U_{O_2}\ket{0}^{\otimes m}=\ket{0}^{\otimes m}$. Hence
\begin{align}
U_{\tilde{S}^{-1}} \ket{\nu}^{\otimes m} = U_{O_1}\ket{Z}.
\end{align}
Thus, $U_{\tilde{S}^{-1}}\ket{\nu}^{\otimes m}$ can be implemented by applying the passive interferometer $U_{O_1}$ to the offline-prepared squeezed state $\ket{Z}$.

Therefore, both apparent online uses of online squeezing are removed:
\begin{itemize}
\item The block “$U_{S_\nu}^\dagger$ + heterodyne’’ is implemented passively with a fixed squeezed ancilla, homodyne detection, and the linear post-processing $\mathbf{q}\mapsto S_\nu \mathbf{q}$ (\cref{prop:passive-het}).
\item The input $U_{\tilde{S}^{-1}} \ket{\nu}^{\otimes m}$ is realized as offline \emph{input} squeezing $\ket{Z}$ followed by a passive interferometer $U_{O_1}$.
\end{itemize}
Thus, this stage of the protocol uses only passive Gaussian unitaries, input (offline) squeezed states, and homodyne detection—\emph{no} online active squeezing on the unknown signal—while preserving the heterodyne outcome statistics and all ensuing guarantees of our analysis.

\subsection{Displacement learning without auxiliary-system entanglement}\label{sec:algo-without-entanglement}

Two-mode squeezed states exhibit entanglement shared between the input modes and auxiliary modes. We now present an alternative learning algorithm that requires no entanglement with an auxiliary system, which is instead based solely on single-mode squeezed states combined with homodyne detection.

\paragraph{Learning the momentum component of the displacement.} 
Fix $z_{\mathrm{in}} \ge 1$. This step proceeds as follows:
\begin{itemize}
    \item Prepare the pure Gaussian state $U_{\tilde{S}^{-1}}\ket{z_{\mathrm{in}}}^{\otimes m}$, where $\ket{z_{\text{in}}}^{\otimes m}$ is a tensor product of $m$ single-mode squeezed states 
    squeezed in the momentum quadrature by factor $z_{\text{in}}$. This state has zero first moment and covariance matrix
    \begin{equation}
        V(\ket{z_{\text{in}}})=\bigoplus_{i=1}^m \begin{pmatrix}
            z_{\text{in}} & 0 \\[4pt] 0 & z_{\text{in}}^{-1}
        \end{pmatrix}.
    \end{equation}
    Note that this step is experimentally feasible: by Euler decomposition, any pure Gaussian state can be implemented using only passive Gaussian unitaries and input squeezing, without requiring online squeezing (see~\cref{sec:no-activesq}).

    \item Apply the unknown Gaussian unitary $G_{\mathbf r, S}$, so that the resulting state is $G_{\mathbf r, S}U_{\tilde{S}^{-1}}\ket{z_{\text{in}}}^{\otimes m}$.
    
    \item Perform homodyne detection of the momentum quadrature.
\end{itemize}
The resulting state has first moment and covariance matrix
\begin{align}
    \mathbf{m} \bigl(G_{\mathbf r, S}U_{\tilde{S}^{-1}}\ket{z_{\text{in}}}^{\otimes m}\bigr) &= \mathbf{r}, \\ 
    V \bigl(G_{\mathbf r, S}U_{\tilde{S}^{-1}}\ket{z_{\text{in}}}^{\otimes m}\bigr) &= S \tilde{S}^{-1} V(\ket{z_{\text{in}}}) (S\tilde{S}^{-1})^\top \\
    & = (\Delta+\id) V(\ket{z_{\text{in}}}) (\Delta+\id)^\top,
\end{align}
where $\Delta \coloneqq S\tilde{S}^{-1}-\id$. Let $\mathbf{r}_p$ be the momentum component of $\mathbf{r}$, and for any $2m\times 2m$ matrix $A$ let $A_{pp}$ denote its momentum-momentum block.  
Then homodyne detection of the momentum quadrature yields a random outcome $Y_p$ distributed as
\begin{equation}
    Y_p \sim \mathcal{N}\left(\mathbf{r}_p, \, \frac{V_{pp}}{2}\right),
\end{equation}
with
\begin{equation}
    V_{pp} \coloneqq \bigl( (\Delta+\id) V(\ket{z_{\mathrm{in}}}) (\Delta+\id)^\top \big)_{pp}.
\end{equation}
Consequently, the covariance of $Y_p$ obeys
\begin{equation}\label{eq_op_normvpp}
    \NORM{\frac{V_{pp}}{2}}_\infty \le \frac{1}{2}\left(\frac{(1+\|\Delta\|_\infty)^2}{z_{\text{in}}}+z_{\text{in}}\|\Delta\|_\infty^2\right),
\end{equation}
Hence, in the regime $\norm{\Delta}_\infty = o(1/z_{\text{in}})$ (corresponding to an accurate estimate of the symplectic component), the measurement covariance vanishes as $z_{\mathrm{in}} \to \infty$, enabling efficient estimation of $\mathbf{r}_p$.

\paragraph{Learning the position component of the displacement.} 
Fix $z_{\text{in}} \ge 1$. This step proceeds analogously:
\begin{itemize}
    \item Prepare the pure Gaussian state $U_{\tilde{S}^{-1}}\ket{z_{\mathrm{in}}^{-1}}^{\otimes m}$, where $\ket{z_{\mathrm{in}}^{-1}}^{\otimes m}$ is a tensor product of $m$ single-mode squeezed states with position variance $z_{\mathrm{in}}^{-1}$. This state has zero first moment and covariance matrix
        \begin{equation}
            V(\ket{z_{\text{in}}^{-1}})=\bigoplus_{i=1}^m \begin{pmatrix}
                z_{\text{in}}^{-1} & 0 \\[4pt] 0 & z_{\text{in}} 
            \end{pmatrix}.
        \end{equation}
    As explained above, this step is experimentally feasible (see~\cref{sec:no-activesq}).
        
    \item Apply the unknown Gaussian unitary $G_{\mathbf r, S}$, so that the resulting state is $G_{\mathbf r, S} U_{\tilde{S}^{-1}}\ket{z_{\text{in}}^{-1}}^{\otimes m}$.
    \item Perform homodyne detection of the position quadrature.
\end{itemize}
The resulting state has
\begin{align}
    \mathbf{m} \bigl(G_{\mathbf r, S} U_{\tilde{S}^{-1}}\ket{z_{\mathrm{in}}^{-1}}^{\otimes m}\bigr) &= \mathbf{r}, \\ 
    V \bigl(G_{\mathbf r, S} U_{\tilde{S}^{-1}}\ket{z_{\mathrm{in}}^{-1}}^{\otimes m}\bigr) &= (\Delta+\id) V(\ket{z_{\mathrm{in}}^{-1}}) (\Delta+\id)^\top,
\end{align}
where $\Delta \coloneqq S\tilde{S}^{-1}-\id$. Let $\mathbf{r}_x$ be the position component of $\mathbf r$, and $A_{xx}$ the position–position block of a matrix $A$. 
Then homodyne detection of the position quadrature yields a random outcome $Y_x$ distributed as
\begin{equation}
    Y_x \sim \mathcal{N}\left(\mathbf{r}_x, \frac{V_{xx}}{2}\right),
\end{equation}
where
\begin{equation}
    V_{xx} \coloneqq \bigl( (\Delta+\id) V(\ket{z_{\mathrm{in}}^{-1}}) (\Delta+\id)^\top \bigr)_{xx}.
\end{equation}
Consequently, the covariance of $Y_x$ obeys
\begin{equation}
    \NORM{\frac{V_{xx}}{2}}_\infty 
    \le \frac{1}{2}\left(\frac{(1+\|\Delta\|_\infty)^2}{z_{\mathrm{in}}}+z_{\mathrm{in}}\|\Delta\|_\infty^2\right).
\end{equation}
Thus, in the regime $\|\Delta\|_\infty=o(1/z_{\text{in}})$, $V_{xx}$ vanishes as $z_{\text{in}} \to \infty$, enabling efficient estimation of $\mathbf r_x$.

\medskip

Let us now derive the query complexity of displacement learning using this approach.

\begin{lemma}[(Query complexity for learning the displacement without auxiliary-system entanglement)]\label{query_displ2}
Let $G_{\mathbf r, S} = D_{\mathbf r} U_S$ act on $m$ bosonic modes (phase-space dimension $2m$). Define $\Delta \coloneqq \tilde S^{-1}S - \id$ and fix a squeezing parameter $z_{\mathrm{in}}\ge 1$.  If $N_r\in\mathds{N}$ satisfies
\begin{equation}
N_r \ge \frac{2\bigl(\sqrt{2m}+\sqrt{2\log(2/\delta)}\bigr)^2}{\varepsilon^2}
\left(\frac{(1+\|\Delta\|_\infty)^2}{z_{\mathrm{in}}}+z_{\mathrm{in}}\|\Delta\|_\infty^2\right),
\end{equation}
then, using $2N_r$ queries to the unknown Gaussian unitary $G_{\mathbf r, S}$, we obtain an estimator $\tilde{\mathbf r}$ such that\begin{equation}
    \Pr\left[
   \|\tilde{\mathbf r} - \mathbf r\|_{2} \le \varepsilon
\right] \ge 1-\delta.    
\end{equation}
\end{lemma}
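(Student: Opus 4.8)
The plan is to view the two homodyne experiments described above as two independent Gaussian mean-estimation problems and then to stitch their outputs together into a single estimate of $\mathbf r$. I would run the momentum experiment $N_r$ times to collect i.i.d.\ samples $Y_p^{(1)},\dots,Y_p^{(N_r)}\sim\mathcal N(\mathbf r_p, V_{pp}/2)$ in $\R^m$ and set $\tilde{\mathbf r}_p$ to be their empirical mean, and likewise run the position experiment $N_r$ times to obtain $\tilde{\mathbf r}_x$ from samples $Y_x\sim\mathcal N(\mathbf r_x, V_{xx}/2)$. The estimator $\tilde{\mathbf r}\in\R^{2m}$ is then assembled by placing $\tilde{\mathbf r}_x$ in the position coordinates and $\tilde{\mathbf r}_p$ in the momentum coordinates; this accounts for the claimed $2N_r$ queries, since each of the $2N_r$ prepared probes is sent once through $G_{\mathbf r, S}$.

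For the momentum block I would invoke the Gaussian mean-estimation bound \cref{lem:estrelative} in dimension $m$ with failure parameter $\delta/2$, which gives
\[
\|\tilde{\mathbf r}_p - \mathbf r_p\|_2 \le \frac{\chi_{m,\delta/2}}{\sqrt{N_r}}\sqrt{\NORM{\frac{V_{pp}}{2}}_\infty}, \qquad \chi_{m,\delta/2} = \sqrt m + \sqrt{2\log(2/\delta)},
\]
with probability at least $1-\delta/2$, and then substitute the operator-norm bound on $\|V_{pp}/2\|_\infty$ established in \cref{eq_op_normvpp}. Running the identical argument on the position block, using the matching bound on $\|V_{xx}/2\|_\infty$, yields the same estimate for $\|\tilde{\mathbf r}_x - \mathbf r_x\|_2$ with probability at least $1-\delta/2$. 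A union bound makes both guarantees hold simultaneously with probability at least $1-\delta$.

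On this good event the triangle inequality gives $\|\tilde{\mathbf r}-\mathbf r\|_2 \le \|\tilde{\mathbf r}_x-\mathbf r_x\|_2 + \|\tilde{\mathbf r}_p-\mathbf r_p\|_2$, which is at most twice the common per-quadrature bound. Demanding this be at most $\varepsilon$ and solving for $N_r$ yields $N_r \ge 4\chi_{m,\delta/2}^2\,C/\varepsilon^2$ with $C \coloneqq \tfrac12\big((1+\|\Delta\|_\infty)^2/z_{\mathrm{in}} + z_{\mathrm{in}}\|\Delta\|_\infty^2\big)$; using $4C = 2\big((1+\|\Delta\|_\infty)^2/z_{\mathrm{in}} + z_{\mathrm{in}}\|\Delta\|_\infty^2\big)$ together with the crude inequality $\sqrt m \le \sqrt{2m}$ inside $\chi_{m,\delta/2}$ reproduces the stated threshold exactly. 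I do not anticipate a genuine obstacle here: the covariance operator-norm bounds are already in hand from the preceding discussion and \cref{lem:estrelative} supplies the concentration, so the only points needing care are splitting the failure probability evenly between the two quadratures and the bookkeeping that converts the two per-quadrature $\ell_2$ guarantees into a single guarantee on the full displacement vector $\mathbf r$.
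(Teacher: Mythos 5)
Your proposal is correct and follows essentially the same route as the paper's proof: empirical means for the two homodyne experiments, \cref{lem:estrelative} applied to each quadrature block with failure probability $\delta/2$, the covariance bounds from \cref{eq_op_normvpp} and its position analogue, a union bound, and the triangle-inequality combination of the two per-quadrature errors. The only (harmless) cosmetic difference is that you invoke the concentration bound in the correct dimension $m$ and then loosen $\sqrt{m}\le\sqrt{2m}$ at the end, whereas the paper applies it directly with $\chi_{2m,\delta/2}$; both yield the stated threshold.
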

\begin{proof}
The proof is analogous to that of~\cref{thm:learn-r-nu-only}. 
Since $Y_p\sim \mathcal N(\mathbf r_p, V_{pp}/2)$, the empirical mean $\tilde{\mathbf r}_p$ satisfies the Gaussian concentration inequality (see~\cref{lem:estrelative}):
\begin{equation}
    \Pr\left[\norm{\tilde{\mathbf r}_p - \mathbf r_p}_2 \le \frac{\chi_{2m,\delta/2} \cdot \norm{{V_{pp}}/{2}}^{1/2}_\infty}{\sqrt{N_r}}\right] \ge 1-\frac{\delta}{2},
\end{equation}
where $\chi_{2m,\delta/2}=\sqrt{2m}+\sqrt{2\log(2/\delta)}$. From~\cref{eq_op_normvpp}, we have
\begin{equation}
    \NORM{\frac{V_{pp}}{2}}_\infty \le \frac12\left(\frac{(1+\|\Delta\|_\infty)^2}{z_{\mathrm{in}}}+z_{\mathrm{in}}\|\Delta\|_\infty^2\right).
\end{equation}
The same argument applies to the empirical mean $\tilde{\mathbf r}_x$. By a union bound, and recalling that $\tilde{\mathbf r}\coloneqq (\tilde{\mathbf r}_x,\tilde{\mathbf r}_p)$, we obtain
\begin{equation}
    \Pr\left[\norm{\tilde{\mathbf r}-\mathbf r}_2 \le 2\cdot\frac{\chi_{2m,\delta/2}}{\sqrt{N_r}} 
    \cdot \frac{1}{\sqrt{2}}\left(\frac{(1+\|\Delta\|_\infty)^2}{z_{\mathrm{in}}}+z_{\mathrm{in}}\|\Delta\|_\infty^2\right)^{1/2}\right] \ge 1-\delta.
\end{equation}
This completes the proof of the claimed bound.
\end{proof}

\section{End-to-end learning of Gaussian unitaries}\label{sec:end-to-end}
In the previous sections, we analyzed algorithms for estimating the symplectic component $S$ and the displacement vector $\mathbf r$ to a prescribed accuracy, along with their respective query complexities. We now turn to the central task of learning the full Gaussian unitary $G_{\mathbf r, S} = D_{\mathbf r} U_S$ to the desired accuracy. Importantly, the notion of accuracy must carry an operationally meaningful interpretation in the physical setting. In our work, this is captured by the \emph{energy-constrained diamond distance} (see~\cref{sec:energy-constrained-diamond-distance}), which provides a natural and operationally meaningful metric for quantifying the learning performance of Gaussian unitary channels, and underlies the problem definition stated in~\cref{problem:tomography}.

\subsection{Useful lemmas}
To prepare for the proof of the end-to-end learning guarantee for Gaussian unitaries, we first establish two key technical ingredients. The first is a transition from additive to multiplicative error bounds, which allows us to refine our accuracy guarantees. The second concerns fundamental properties of the regularized symplectic matrix, which will play a central role in controlling the operational error under the energy-constrained diamond distance.

\begin{lemma}[(From additive to multiplicative error)]
\label{lem:add-to-mult}
Let $S \in \mathrm{Sp}_{2m}(\R)$ and suppose $\tilde S$ satisfies 
$\norm{\tilde S - S}_{\infty} \le \varepsilon_S$ with $\varepsilon_S \norm{S}_{\infty} < 1$. 
Then
\begin{equation}
    \norm{\tilde S^{-1}S - \id}_{\infty} \le \frac{\varepsilon_S \norm{S}_{\infty}}{1-\varepsilon_S \norm{S}_{\infty}}.
\end{equation}
\end{lemma}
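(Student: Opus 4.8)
The plan is to reduce the multiplicative error $\tilde S^{-1}S - \id$ to a Neumann-series estimate driven by the additive error $\tilde S - S$, exploiting the symplectic structure to make the bound depend on $\norm{S}_\infty$ rather than $\norm{S^{-1}}_\infty$. First I would introduce $E \coloneqq S^{-1}(\tilde S - S)$, so that $S^{-1}\tilde S = \id + E$. Inverting this identity gives the clean relation
\begin{equation}
\tilde S^{-1}S = (S^{-1}\tilde S)^{-1} = (\id + E)^{-1},
\end{equation}
and therefore
\begin{equation}
\tilde S^{-1}S - \id = (\id + E)^{-1} - \id = -(\id + E)^{-1} E .
\end{equation}
This rewriting is the crux of the argument: it expresses the target quantity entirely in terms of $E$ and $(\id+E)^{-1}$, both of which are controlled once $\norm{E}_\infty$ is small. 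It also shows, as a byproduct, that $\tilde S = S(\id + E)$ is invertible, so $\tilde S^{-1}$ is well defined.

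The second step is to bound $\norm{E}_\infty$. By submultiplicativity, $\norm{E}_\infty \le \norm{S^{-1}}_\infty \norm{\tilde S - S}_\infty \le \varepsilon_S \norm{S^{-1}}_\infty$. Here lies the only genuinely nontrivial input: symplectic matrices satisfy $\norm{S^{-1}}_\infty = \norm{S}_\infty$. This follows from the Euler decomposition (\cref{lem:euler-dec}): writing $S = O_1 Z_S O_2$ with $O_1, O_2$ orthogonal and $Z_S = \bigoplus_{j} \diag(z_j, z_j^{-1})$, the singular values of $S$ are exactly the multiset $\{z_j, z_j^{-1}\}_{j}$, so $\norm{S}_\infty = \max_j z_j$; since $S^{-1} = O_2^{-1} Z_S^{-1} O_1^{-1}$ carries the same multiset of singular values, one gets $\norm{S^{-1}}_\infty = \max_j z_j = \norm{S}_\infty$. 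Consequently $\norm{E}_\infty \le \varepsilon_S \norm{S}_\infty < 1$ by the hypothesis.

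Finally, since $\norm{E}_\infty < 1$, the Neumann series (\cref{fact:neumann}) guarantees that $\id + E$ is invertible with $\norm{(\id+E)^{-1}}_\infty \le (1 - \norm{E}_\infty)^{-1}$. Combining this with the identity above via submultiplicativity yields
\begin{equation}
\norm{\tilde S^{-1}S - \id}_\infty \le \norm{(\id+E)^{-1}}_\infty \, \norm{E}_\infty \le \frac{\norm{E}_\infty}{1 - \norm{E}_\infty} \le \frac{\varepsilon_S \norm{S}_\infty}{1 - \varepsilon_S \norm{S}_\infty},
\end{equation}
where the last step uses that $x \mapsto x/(1-x)$ is increasing on $[0,1)$ together with $\norm{E}_\infty \le \varepsilon_S \norm{S}_\infty$. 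I do not anticipate any real obstacle beyond recalling the symplectic singular-value pairing $\norm{S^{-1}}_\infty = \norm{S}_\infty$; everything else is routine operator-norm manipulation and a single application of the Neumann series.
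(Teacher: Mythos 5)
Your proof is correct and follows essentially the same route as the paper's: factor $\tilde S = S(\id + E)$ with $E = S^{-1}(\tilde S - S)$, bound $\norm{E}_\infty \le \varepsilon_S \norm{S}_\infty < 1$ using the symplectic singular-value pairing $\norm{S^{-1}}_\infty = \norm{S}_\infty$, and conclude via the Neumann series; the paper merely bounds the series $-X + X^2 - \cdots$ termwise where you use the closed form $-(\id+E)^{-1}E$. Incidentally, your left factorization $E = S^{-1}(\tilde S - S)$ is the one that makes the identity $\tilde S^{-1}S = (\id+E)^{-1}$ exact (the paper's $X = PS^{-1}$ is a harmless slip, since the norm bound is the same either way).
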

\begin{proof}
Write $\tilde S = S + P$ with $\norm{P}_{\infty} \le \varepsilon_S$. Then
\begin{align}
    \tilde S^{-1}S - \id &= (S+P)^{-1}S - \id \nonumber \\
    &= (\id + PS^{-1})^{-1} - \id.    
\end{align}
Setting $X = PS^{-1}$, note that $\norm{X}_{\infty} \le \varepsilon_S \norm{S^{-1}}_{\infty} < 1$.  
By the Neumann series,
\begin{equation}
    (\id+X)^{-\id} - \id = -X + X^2 - X^3 + \cdots,
\end{equation}
and termwise bounding yields
\begin{align}
    \norm{\tilde S^{-1}S - \id}_{\infty} 
    & \le \frac{\norm{X}_{\infty}}{1-\norm{X}_{\infty}} \nonumber \\
    & \le \frac{\varepsilon_S \norm{S^{-1}}_{\infty}}{1-\varepsilon_S \norm{S^{-1}}_{\infty}}.    
\end{align}
Since, by the Bloch--Messiah decomposition, $\norm{S^{-1}}_{\infty} = \norm{S}_{\infty}$ for symplectic matrices, the claim follows.
\end{proof}

\begin{lemma}[(Stability of the inverse for symplectic matrices)]
\label{lem:inv-perturb-symplectic}
Let $S \in \mathrm{Sp}_{2m}(\R)$, and set $z \coloneqq \norm{S}_{\infty}$.  
Suppose $\tilde S \in \R^{2m\times 2m}$ satisfies $\norm{\tilde S - S}_{\infty} \le \varepsilon_S$ with $ z \varepsilon_S< 1/2$. Then:
\begin{enumerate}
\item[(a)] $\tilde S$ is invertible.
\item[(b)] Its inverse satisfies $\norm{\tilde S^{-1}}_{\infty} \le 2z. $
\item[(c)] The multiplicative error is bounded by $\norm{\tilde S^{-1}S - \id}_{\infty} \le 2z\varepsilon_S.$
\item[(d)] Consequently, $\norm{\tilde S^{-1}S}_\infty \le 1 + 2z \varepsilon_S.$
\end{enumerate}
\end{lemma}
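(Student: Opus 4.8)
The plan is to treat $\tilde S$ as a multiplicative perturbation of $S$ and extract all four claims from a single Neumann-series expansion. The key structural fact I will use is that for a symplectic matrix $\norm{S^{-1}}_\infty = \norm{S}_\infty = z$, which follows immediately from the Euler decomposition of \cref{lem:euler-dec}: writing $S = O_1 Z_S O_2$ with $O_1, O_2$ orthogonal and $Z_S$ the diagonal squeezing matrix, we have $S^{-1} = O_2^\top Z_S^{-1} O_1^\top$, and $Z_S^{-1}$ carries the same diagonal entries $\{z_j, z_j^{-1}\}$ as $Z_S$, so its operator norm is unchanged.

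First I would write $\tilde S = S + P$ with $\norm{P}_\infty \le \varepsilon_S$ and factor out $S$ on the left, $\tilde S = S(\id + X)$ where $X \coloneqq S^{-1}P$. Submultiplicativity gives $\norm{X}_\infty \le \norm{S^{-1}}_\infty \norm{P}_\infty \le z\varepsilon_S < 1/2$. By \cref{fact:neumann}, $\id + X$ is invertible with $\norm{(\id+X)^{-1}}_\infty \le (1 - \norm{X}_\infty)^{-1} \le (1 - z\varepsilon_S)^{-1} < 2$, where the last inequality uses $z\varepsilon_S < 1/2$. Since $S$ is invertible (being symplectic), claim (a) follows at once from the factorization $\tilde S = S(\id + X)$.

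For (b), I invert the factorization to get $\tilde S^{-1} = (\id+X)^{-1}S^{-1}$ and bound $\norm{\tilde S^{-1}}_\infty \le \norm{(\id+X)^{-1}}_\infty \norm{S^{-1}}_\infty \le z/(1 - z\varepsilon_S) < 2z$. Claim (c) is exactly \cref{lem:add-to-mult} specialized to $\norm{S}_\infty = z$, giving $\norm{\tilde S^{-1}S - \id}_\infty \le z\varepsilon_S/(1 - z\varepsilon_S) < 2z\varepsilon_S$, again invoking $z\varepsilon_S < 1/2$ to clear the denominator. (Equivalently, the left factorization collapses cleanly to $\tilde S^{-1}S = (\id + X)^{-1}$, so $\tilde S^{-1}S - \id = -(\id + X)^{-1}X$ and the same bound drops out directly.) Finally, (d) is immediate from the triangle inequality, $\norm{\tilde S^{-1}S}_\infty \le \norm{\tilde S^{-1}S - \id}_\infty + 1 \le 1 + 2z\varepsilon_S$.

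There is no serious obstacle here; the argument is a routine perturbation estimate. The only points requiring care are bookkeeping ones: choosing the left factorization $\tilde S = S(\id+X)$ (rather than right-multiplying) so that $\tilde S^{-1}S$ collapses to $(\id+X)^{-1}$, and using the hypothesis $z\varepsilon_S < 1/2$ at each step to replace the awkward factor $(1 - z\varepsilon_S)^{-1}$ by the clean constant $2$. The symplectic identity $\norm{S^{-1}}_\infty = \norm{S}_\infty$ is precisely what prevents a stray factor of $\norm{S^{-1}}_\infty$ from surviving in the final bounds.
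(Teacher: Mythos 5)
Your proposal is correct and takes essentially the same route as the paper: the identical factorization $\tilde S = S\bigl(\id + S^{-1}(\tilde S - S)\bigr)$, the Neumann series under the condition $z\varepsilon_S < 1/2$, and the symplectic identity $\norm{S^{-1}}_\infty = \norm{S}_\infty$ to avoid a stray factor. The only cosmetic differences are that you justify this identity via the Euler decomposition rather than the reciprocal pairing of singular values, and that you derive (c) from the clean collapse $\tilde S^{-1}S = (\id + X)^{-1}$ (equivalently via \cref{lem:add-to-mult}) where the paper instead bounds $\norm{\tilde S^{-1}S - \id}_\infty \le \norm{\tilde S^{-1}}_\infty \norm{\tilde S - S}_\infty$ using part (b); both give $2z\varepsilon_S$.
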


\begin{proof}
Since $S$ is symplectic, it is invertible and its singular values come in reciprocal pairs, hence 
$\norm{S}_{\infty}=\norm{S^{-1}}_{\infty}=z$.  
We can factor $\tilde S = S(\id+S^{-1}\bigl(\tilde S-S)\bigr)$. The assumption gives
\begin{align}
    \norm{S^{-1}(\tilde S-S)}_{\infty} 
    & \le \norm{S^{-1}}_{\infty} \cdot \norm{\tilde S-S}_{\infty} \nonumber \\
    & \le z\varepsilon_S < 1/2.
\end{align}
Thus $\id+S^{-1}(\tilde S-S)$ is invertible, with inverse given by the convergent Neumann series 
\begin{equation}
    \sum_{k=0}^\infty \bigl(-S^{-1}(\tilde S-S)\bigr)^k
\end{equation}
Hence $\tilde S$ is invertible and $\tilde S^{-1} = \bigl(\id + S^{-1}(\tilde S-S)\bigr)^{-1} S^{-1}$.    
Taking norms and using submultiplicativity,
\begin{align}
    \norm{\tilde S^{-1}}_{\infty} 
    & \le \norm{\bigl(\id + S^{-1}(\tilde S-S)\bigr)^{-1}}_{\infty}\cdot\norm{S^{-1}}_{\infty} \nonumber \\
    & \le \frac{z}{1-z\varepsilon_S} \le 2z,
\end{align}
which proves (b).  

\medskip
\noindent For (c),
\begin{align}
    \norm{\tilde S^{-1}S - \id}_{\infty} 
    & \le \norm{\tilde S^{-1}}_{\infty}\cdot\norm{\tilde S-S}_{\infty} \nonumber \\
    & \le 2z\varepsilon_S.
\end{align}

\medskip
\noindent Finally, (d) follows from
\begin{align}
    \norm{\tilde S^{-1}S}_\infty & \le 1 + \norm{\tilde S^{-1}S-\id}_\infty \nonumber \\
    & \le 1+2z\varepsilon_S.
\end{align}
This concludes all the proofs.
\end{proof}

\subsection{Query complexity of learning Gaussian unitaries}

Let us begin with the following preliminary result, which provides a way to propagate the error in the estimation of the displacement vector and the symplectic matrix to the error in estimating the Gaussian unitary with respect to the energy-constrained diamond norm.
\begin{lemma}[(From symplectic and displacement errors to diamond distance error)]\label{metalemma}
Let $G_{\mathbf r, S} = D_{\mathbf r} U_S$ be a Gaussian unitary on $m$ bosonic modes with symplectic matrix $S \in \mathrm{Sp}_{2m}(\mathds{R})$ satisfying $\|S\|_{\infty} \le z$.  
Let $\tilde{S} \in \mathrm{Sp}_{2m}(\mathds{R})$ and $\tilde{\mathbf r}\in\mathds{R}^{2m}$, and define
\begin{equation}
    \varepsilon_S \coloneqq \|\tilde{S} - S\|_\infty, \qquad 
    \varepsilon_r \coloneqq \|\tilde{\mathbf r} - \mathbf r\|_2.
\end{equation}
Then, the energy-constrained diamond distance between 
$\tilde{\mathcal G} \coloneqq \mathcal D_{\tilde{\mathbf r}} \circ \mathcal U_{\tilde S}$ 
and 
$\mathcal G \coloneqq \mathcal D_{\mathbf r} \circ \mathcal U_S$ satisfies
\begin{equation}
    \frac{1}{2}\|\tilde{\mathcal{G}} - \mathcal{G}\|_{\diamond,\bar n} 
    \le  12 \sqrt{9\sqrt{2m}(\bar n+1)} \sqrt{z\sqrt{2m}\,\varepsilon_S}
    + \sqrt{2}\sqrt{z^2 \bar n + 1} \, \varepsilon_r,
\end{equation}
where the diamond norm is taken with respect to the mean photon number constraint $\bar n$.  
In particular, for any $\varepsilon>0$, if it holds that
\begin{equation}
    \varepsilon_S \le \frac{\varepsilon^2}{2592 m z (\bar n+1)}, \qquad 
    \varepsilon_r \le \frac{\varepsilon}{2\sqrt{2}\sqrt{z^2 \bar n + 1}},
\end{equation}
then $\frac{1}{2}\|\tilde{\mathcal{G}} - \mathcal{G}\|_{\diamond,\bar n} \le \varepsilon$.
\end{lemma}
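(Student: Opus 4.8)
The plan is to peel off the displacement error and the symplectic error one at a time with a triangle inequality, estimate each piece with the two continuity bounds for Gaussian unitary channels (\cref{prop:displacement_bound,thm:symplectic_bound}), and convert the additive symplectic accuracy $\varepsilon_S$ into the multiplicative quantities those bounds require using \cref{lem:inv-perturb-symplectic}. Concretely, I would insert the intermediate channel $\mathcal D_{\tilde{\mathbf r}}\circ\mathcal U_S$ and apply the triangle inequality for $\norm{\cdot}_{\diamond,\bar n}$,
\begin{equation}
\tfrac12\norm{\tilde{\mathcal G}-\mathcal G}_{\diamond,\bar n}
\le \tfrac12\norm{\mathcal D_{\tilde{\mathbf r}}\circ\mathcal U_{\tilde S}-\mathcal D_{\tilde{\mathbf r}}\circ\mathcal U_S}_{\diamond,\bar n}
+\tfrac12\norm{\mathcal D_{\tilde{\mathbf r}}\circ\mathcal U_S-\mathcal D_{\mathbf r}\circ\mathcal U_S}_{\diamond,\bar n},
\end{equation}
so that the first summand isolates the symplectic mismatch and the second isolates the displacement mismatch.

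For the symplectic summand, the outer displacement $\mathcal D_{\tilde{\mathbf r}}$ is unitary and hence trace-norm preserving, so it drops out and the term equals $\tfrac12\norm{\mathcal U_{\tilde S}-\mathcal U_S}_{\diamond,\bar n}$. I then invoke \cref{thm:symplectic_bound} with $(S_1,S_2)=(S,\tilde S)$, which introduces the factors $g(\norm{\tilde S^{-1}S}_\infty)$ and $\sqrt{\norm{\tilde S^{-1}S-\id}_2}$ together with the dimensional prefactor $\sqrt{(\sqrt6+\sqrt{10}+5\sqrt{2m})(\bar n+1)}$. In the regime $z\varepsilon_S<1/2$ (which holds at the thresholds below and under which the bound is derived), \cref{lem:inv-perturb-symplectic} gives $\norm{\tilde S^{-1}S-\id}_\infty\le 2z\varepsilon_S$ and $\norm{\tilde S^{-1}S}_\infty\le 1+2z\varepsilon_S\le 2$; the latter bounds $g(\cdot)$ by an absolute constant, while passing from the operator to the Hilbert--Schmidt norm on a $2m\times 2m$ matrix costs a factor $\sqrt{2m}$, so $\norm{\tilde S^{-1}S-\id}_2\le 2z\sqrt{2m}\,\varepsilon_S$. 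Using in addition $\sqrt6+\sqrt{10}+5\sqrt{2m}\le 9\sqrt{2m}$ (valid for $m\ge1$), these estimates collapse to the first term $12\sqrt{9\sqrt{2m}(\bar n+1)}\sqrt{z\sqrt{2m}\,\varepsilon_S}$, with the constant $12$ loosely absorbing the bound on $g$ together with the norm conversion.

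For the displacement summand, I factor $\mathcal U_S$ out as a pre-processing: on any admissible joint input (mean photon number at most $\bar n$ on the channel system), the reduced state entering the displacement channels is $\mathcal U_S(\cdot)$, whose covariance and first moment transform as $V\mapsto SVS^\top$ and $\mathbf m\mapsto S\mathbf m$. Through \cref{eq_mean_photon} this inflates the mean-energy budget by the factor $\norm{S}_\infty^2\le z^2$, so the admissible inputs to the pair $\mathcal D_{\tilde{\mathbf r}},\mathcal D_{\mathbf r}$ lie in the larger energy ball of mean photon number (up to an additive offset) $z^2\bar n$. Restricting to this ball and applying \cref{prop:displacement_bound} with $\norm{\tilde{\mathbf r}-\mathbf r}_2=\varepsilon_r$, then using $\sin(x)\le x$ and $\sqrt{a}+\sqrt{a+1}\le 2\sqrt{a+1}$, bounds this term by $\sqrt2\sqrt{z^2\bar n+1}\,\varepsilon_r$. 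Summing the two contributions yields the displayed inequality; finally, substituting the stated thresholds $\varepsilon_S\le \varepsilon^2/(2592\,mz(\bar n+1))$ and $\varepsilon_r\le \varepsilon/(2\sqrt2\sqrt{z^2\bar n+1})$ makes each contribution at most $\varepsilon/2$, hence $\tfrac12\norm{\tilde{\mathcal G}-\mathcal G}_{\diamond,\bar n}\le\varepsilon$.

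I expect the main obstacle to be the energy-inflation step in the displacement summand: unlike the symplectic summand, where the outer displacement simply cancels by unitary invariance, here the symplectic pre-processing genuinely enlarges the admissible input-energy ball, and one must track this through the covariance and first-moment transformations to expose the $z^2$ factor responsible for the $\sqrt{z^2\bar n+1}$ weight (and be careful that the additive offset from the $m\id/2$ shift in $\hat E$ is lower order). The remaining difficulty is organizational: feeding the operator-norm accuracy $\varepsilon_S$ into the two propositions in the correct norms — in particular the Hilbert--Schmidt conversion and the square root in \cref{thm:symplectic_bound} — since it is this square root that makes the symplectic contribution scale like $\varepsilon_S^{1/2}$ and ultimately forces the $\varepsilon_S=\mathcal O(\varepsilon^2)$ threshold.
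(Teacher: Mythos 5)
Your proposal is correct and follows essentially the same route as the paper's proof: the same triangle-inequality splitting through the intermediate channel $\mathcal D_{\tilde{\mathbf r}}\circ\mathcal U_S$, unitary invariance to strip the outer displacement, \cref{thm:symplectic_bound} combined with \cref{lem:inv-perturb-symplectic}, the $\sqrt{2m}$ operator-to-Hilbert--Schmidt conversion and the constant bound on $g$ for the symplectic term, and the $z^2\bar n$ energy-inflation argument feeding \cref{prop:displacement_bound} (with $\sin x\le x$ and $\sqrt{\bar n}+\sqrt{\bar n+1}\le 2\sqrt{\bar n+1}$) for the displacement term. One small caveat, shared with the paper's own presentation: the constant $12$ in the displayed bound arises as a bound on the \emph{full} norm $\norm{\mathcal U_{\tilde S}-\mathcal U_S}_{\diamond,\bar n}$, so to make the symplectic contribution at most $\varepsilon/2$ under the stated threshold on $\varepsilon_S$ you must keep the factor $\tfrac12$ from the triangle inequality (giving $18\sqrt{2mz(\bar n+1)\varepsilon_S}\le\varepsilon/2$) rather than substitute the threshold into the displayed inequality verbatim, which would only yield $3\varepsilon/2$.
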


\begin{proof} The proof is divided in three steps.

\medskip
\noindent \emph{Step 1 (Symplectic contribution).}  Applying~\cref{thm:symplectic_bound} with $S_2=\tilde S$, we obtain
\begin{align}
\frac{1}{2} \norm{\mathcal U_{\tilde S} - \mathcal U_S}_{\diamond,\bar n} \le
\sqrt{\bigl(\sqrt{6}+\sqrt{10}+5\sqrt{2m}\bigr)(\bar n+1)}
\, g\bigl(\norm{\tilde S^{-1}S}_\infty\bigr)
\sqrt{\norm{\tilde S^{-1}S - \id}_2},
\end{align}
where $g(x)=\sqrt{{\pi}/{(x+1)}}+\sqrt{2x}$.  By~\cref{lem:inv-perturb-symplectic}, we have $\norm{\tilde S^{-1}}_{\infty} \le 2z$ and 
$\norm{\tilde S^{-1}S - \id}_{\infty} \le 2z\varepsilon_S$. Hence
\begin{align}
    \norm{\tilde S^{-1}S - \id}_2 & \le \sqrt{2m} \cdot\norm{\tilde S^{-1}S - \id}_{\infty} \nonumber \\
    & \le 2z\sqrt{2m}\,\varepsilon_S,
\end{align}
and moreover
\begin{equation}
    \norm{\tilde S^{-1}S}_\infty \le 1+2z\varepsilon_S.
\end{equation}
Since $z\varepsilon_S<1/2$ and $g(x)$ is increasing for $x\ge 0$, it follows that 
\begin{equation}
    g\bigl(\norm{\tilde S^{-1}S}_\infty\bigr)\le g(2)\le 4.
\end{equation}

Substituting these bounds gives
\begin{align}
\norm{\mathcal U_{\tilde S} - \mathcal U_S}_{\diamond,\bar n} & \le 2 \sqrt{\bigl(\sqrt{6}+\sqrt{10}+5\sqrt{2m}\bigr)(\bar n+1)} \, g\bigl(\norm{\tilde S^{-1}S}_\infty\bigr)\sqrt{\norm{\tilde S^{-1}S - \id}_2} \nonumber \\
& \le 2 \sqrt{\bigl(\sqrt{6}+\sqrt{10}+5\sqrt{2m}\bigr)(\bar n+1)} \cdot 4 \cdot \sqrt{2z\sqrt{2m}\,\varepsilon_S} \nonumber \\ 
& \le 12 \sqrt{\bigl(\sqrt{6}+\sqrt{10}+5\sqrt{2m}\bigr)(\bar n+1)} \sqrt{z\sqrt{2m}\,\varepsilon_S} \nonumber \\
& \le 12 \sqrt{9\sqrt{2m}(\bar n+1)} \sqrt{z\sqrt{2m}\,\varepsilon_S}. \label{eq:symplectic-bound-final}
\end{align}
where in the third line we used $8\sqrt{2}<12$, and in the last line we used 
$\sqrt{6}+\sqrt{10} < 4\sqrt{2} \le 4\sqrt{2m}$ since $m \in \N$. 

\medskip
\noindent
\emph{Step 2 (Displacement contribution).}  By~\cref{prop:displacement_bound},
\begin{align}
    \norm{\mathcal D_{\tilde{\mathbf r}} - \mathcal D_{\mathbf r}}_{\diamond,\bar n} &\le  2\sin\left(\min\left\{\frac{\bigl(\sqrt{\bar n}+\sqrt{\bar n+1}\bigr)}{\sqrt{2}}\varepsilon_r, \; \frac{\pi}{2}\right\}\right) \\\nonumber
    &\le  2\min\left\{\frac{\bigl(\sqrt{\bar n}+\sqrt{\bar n+1}\bigr)}{\sqrt{2}}\varepsilon_r, \; \frac{\pi}{2}\right\}\\\nonumber
    & \le \sqrt{2}\bigl(\sqrt{\bar n}+\sqrt{\bar n+1}\bigr)\varepsilon_r \\
    & \le 2\sqrt{2}\sqrt{\bar n+1}\,\varepsilon_r\,,
\end{align}
where in the second line, we use the elementary inequality $\sin x \le x$ for $x \ge 0$, and in the third line, we apply $\min\{a,b\} \le a$.

\medskip
\noindent
\emph{Step 3 (Putting together both contributions).}  
We have
\begin{align}
\frac12\norm{\tilde{\mathcal G} - \mathcal G }_{\diamond,\bar n}
& \le \frac12\norm{\mathcal{D}_{\tilde{\mathbf r}} \circ (\mathcal U_{\tilde S} - \mathcal U_S)}_{\diamond,\bar n}
+ \frac12\norm{(\mathcal D_{\tilde{\mathbf r}} - \mathcal D_{\mathbf r}) \circ \mathcal U_S}_{\diamond,\bar n} \nonumber \\[4pt]
& \le \frac12\norm{\mathcal U_{\tilde S} - \mathcal U_S}_{\diamond,\bar n}
+ \frac12\norm{\mathcal D_{\tilde{\mathbf r}} - \mathcal D_{\mathbf r}}_{\diamond, z^2 \bar n} \nonumber \\[4pt]
&\le \frac{\varepsilon}{2} + \sqrt{2}\sqrt{z^2\bar n+1}\,\varepsilon_r,
\end{align}

where:
\begin{itemize}
    \item the first inequality follows from the triangle inequality;
    \item the second inequality uses (i) invariance of the energy-constrained diamond norm under post-processing unitaries, i.e.
    \begin{equation}
        \frac12\norm{\mathcal{D}_{\tilde{\mathbf r}} \circ (\mathcal U_{\tilde S} - \mathcal U_S)}_{\diamond,\bar n}
        = \frac12\norm{\mathcal U_{\tilde S} - \mathcal U_S}_{\diamond,\bar n},    
    \end{equation}
    and (ii) that the mean energy of $U_S(\rho)$ is at most $\|S\|_\infty^2$ times the mean energy of $\rho$ (see e.g.~\cite{mele2024learning}); hence,
    \begin{equation}
        \frac12\norm{(\mathcal D_{\tilde{\mathbf r}} - \mathcal D_{\mathbf r}) \circ \mathcal U_S}_{\diamond,\bar n}
        \le \frac12\norm{\mathcal D_{\tilde{\mathbf r}} - \mathcal D_{\mathbf r}}_{\diamond, z^2 \bar n};
    \end{equation}
    \item the last inequality follows from Step 2 and Step 3.
\end{itemize}
This completes the derivation of the desired bound.
\end{proof}
To summarize, using~\cref{metalemma}, we can design an algorithm for the problem of learning Gaussian unitaries described in~\cref{problem:tomography}, carried out in two stages:
\begin{itemize}
    \item \emph{Learning the symplectic component with precision $\varepsilon_S \le \frac{\varepsilon^2}{2592 m z (\bar n+1)}$};
    \item \emph{Learning the displacement component with precision $\varepsilon_r \le \frac{\varepsilon}{2\sqrt{2}\sqrt{z^2 \bar n + 1}}$}.
\end{itemize}

The first stage can be carried out either with the vacuum-shared input protocol introduced in \cref{sec:vac-share}, or with the symmetric-probe protocol introduced in \cref{sec:sym-probe}. The second stage can be implemented either with the two-mode squeezed vacuum protocol introduced in \cref{sec:dis-entangle}, or with the single-mode squeezed protocol introduced in \cref{sec:algo-without-entanglement}.  

These options give rise to four distinct algorithms for learning Gaussian unitaries. With the tools developed above, one can easily establish upper bounds on the query complexity of each. In what follows, we focus on the algorithm that combines the vacuum-shared input protocol from \cref{sec:vac-share} with the two-mode squeezed vacuum protocol from \cref{sec:dis-entangle}, since this choice yields the best query complexity in the limit of large input probe energy $\bar{n}_{\text{in}}$. The overall procedure is summarized in~\cref{algo}, and its correctness is proved below.

\begin{algorithm}[H]
\caption[Learning Gaussian unitaries with auxiliary-system entanglement]{Learning Gaussian unitaries with auxiliary-system entanglement}
\label{algo}
\begin{algorithmic}[1]
\State \textbf{Input}: Setting of~\cref{problem:tomography}, with access to $(2m+1)N_S+N_r$ queries to the unknown Gaussian unitary $G_{\mathbf r, S}$, where $N_S$ and $N_r$ are specified in~\cref{thm:end-to-end-diamond}.
\State \textbf{Output}: $(\tilde{\mathbf{r}},\tilde{S})$ such that $D_{\tilde{\mathbf{r}}}U_{\tilde{S}}$ is $\varepsilon$-close to $G_{\mathbf r, S}$ in energy-constrained diamond norm with probability at least $ 1-\delta$.
\State Prepare $N_S$ copies of $G_{\mathbf r, S}\ket{0}$ and perform heterodyne detection on each of them to construct the mean estimator $\hat{Y}_0$.
\For{$i\in[2m]$}
\State Prepare $N_S$ copies of $G_{\mathbf r, S}\ket{\eta e_i}$ and perform heterodyne detection on each of them to construct the mean estimator $\hat{Y}_i$.
\EndFor
\State $\hat{S} \gets [\hat{Y}_1-\hat{Y}_0, \ldots, \hat{Y}_{2m}-\hat{Y}_0]$ 
\State $\tilde{S} \gets \hat{S} \cdot (-\Omega \hat{S}^{\top} \Omega \hat{S})^{-1/2}$
\State Prepare $N_r$ copies of $U_{S_\nu}^\dagger G_{\mathbf r, S} U_{\tilde{S}^{-1}} \ket{\nu}^{\otimes m}$ and perform heterodyne detection on each of them to construct the mean estimator $\tilde{\mathbf{m}}$ \emph{(see \cref{sec:no-activesq} for an experiment-friendly implementation using only passive optics and input squeezing)}.
\For{$i \in [2m]$}
    \State $\tilde{\mathbf{r}}_i \gets \tilde{\mathbf{m}}_i / \sqrt{\nu}$
\EndFor
\State \textbf{Return}: $\tilde{\mathbf{r}}$, $\tilde{S}$
\end{algorithmic}
\end{algorithm}

\begin{lemma}[(Learning valid symplectic and displacement via~\cref{algo})]\label{thm:end-to-end}
Let $G_{\mathbf r, S} = D_{\mathbf r} U_S$ be a Gaussian unitary on $m$ bosonic modes with displacement $\mathbf r \in \R^{2m}$ and symplectic matrix $S \in \mathrm{Sp}_{2m}(\R)$ satisfying $\norm{S}_{\infty} \le z$.  
Fix accuracy parameters $\varepsilon_S \in (0,1/(2z))$, $\varepsilon_r \in (0,1)$ and failure probability $\delta\in(0,1)$.  
There exists a protocol that outputs an estimate $(\tilde{\mathbf r}, \tilde S)$ such that
\begin{equation}
\Pr\left[
   \|\tilde S - S\|_{\infty} \le \varepsilon_S \;\;\mathrm{and}\;\;
   \|\tilde{\mathbf r} - \mathbf r\|_{2} \le \varepsilon_r
\right] \ge 1-\delta,
\end{equation}
where $\tilde S \in \mathrm{Sp}_{2m}(\R)$. The total query complexity is $(2m+1)N_S + N_r$, where
\begin{align}
    N_S & \ge \frac{324mz^6\bigl(\sqrt{2m}+\sqrt{2\log(2m/\delta)}\bigr)^{2}}{\eta^2 \varepsilon_S^2}, \label{eq:Ns-explicit}\\
    N_r &\ge \frac{\bigl(1+2\nu z \varepsilon_S + 6(\nu z \varepsilon_S)^2\bigr)\bigl(\sqrt{2m}+\sqrt{\log(2/\delta)}\bigr)^{2}}{\nu \varepsilon_r^2}, \label{eq:Nr-explicit}
\end{align}
with probe amplitude $\eta>0$ from~\cref{thm:learnS-again} and squeezing parameter $\nu\ge 1$ from~\cref{thm:learn-r-nu-only}.
\end{lemma}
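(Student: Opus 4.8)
The plan is to run the two estimation stages of \cref{algo} in sequence and then combine their guarantees by a union bound, splitting the total failure budget $\delta$ as $\delta/2$ per stage. The two building blocks are already in place: \cref{prop:regularized-S-vac} supplies the symplectic estimate $\tilde S$, and \cref{thm:learn-r-nu-only} supplies the displacement estimate $\tilde{\mathbf r}$. The one genuinely new point is that the displacement stage is executed \emph{using} the random output $\tilde S$ of the symplectic stage (through the preprocessing $U_{\tilde S^{-1}}$), so that its error depends on the random mismatch $\Delta = \tilde S^{-1}S - \id$; this coupling must be handled by conditioning.

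First I would invoke \cref{prop:regularized-S-vac} with target accuracy $\tau = \varepsilon_S$ and failure probability $\delta/2$. This produces $\tilde S \in \mathrm{Sp}_{2m}(\R)$ satisfying $\|\tilde S - S\|_\infty \le \varepsilon_S$ on an event $E_S$ of probability at least $1-\delta/2$, using $(2m+1)N_S$ queries with $N_S$ as in \cref{eq:Ns-explicit} (up to the constant inside the logarithm arising from rescaling $\delta$). The hypothesis $\varepsilon_S < 1/(2z)$ guarantees the smallness condition $z\varepsilon_S < 1/2$ needed downstream and is consistent with the range $\tau \in (0,1)$ assumed in \cref{prop:regularized-S-vac}, since $z \ge 1$.

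Next I would condition on $E_S$ and convert the additive symplectic error into a multiplicative one. On $E_S$ we have $\|\tilde S - S\|_\infty \le \varepsilon_S$ with $z\varepsilon_S < 1/2$, so part (c) of \cref{lem:inv-perturb-symplectic} yields the deterministic bound $\|\Delta\|_\infty = \|\tilde S^{-1}S - \id\|_\infty \le 2z\varepsilon_S$. I would then run the displacement stage of \cref{algo} with fresh queries and apply \cref{thm:learn-r-nu-only} with target accuracy $\varepsilon_r$ and failure probability $\delta/2$. The sample requirement there involves the factor $1 + \nu\|\Delta\|_\infty + \tfrac32(\nu\|\Delta\|_\infty)^2$, which is increasing in $\|\Delta\|_\infty$; substituting the worst-case value $\|\Delta\|_\infty \le 2z\varepsilon_S$ gives the upper bound $1 + 2\nu z\varepsilon_S + 6(\nu z\varepsilon_S)^2$, which is exactly the numerator in \cref{eq:Nr-explicit}. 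Hence, conditioned on $E_S$, the event $E_r = \{\|\tilde{\mathbf r} - \mathbf r\|_2 \le \varepsilon_r\}$ holds with probability at least $1-\delta/2$ whenever $N_r$ satisfies \cref{eq:Nr-explicit}.

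Finally I would assemble the two guarantees. Since the displacement queries are independent of the symplectic queries given $\tilde S$, we have $\Pr[E_r \mid E_S] \ge 1-\delta/2$ while $\Pr[E_S] \ge 1-\delta/2$, so $\Pr[E_S \cap E_r] = \Pr[E_S]\,\Pr[E_r \mid E_S] \ge (1-\delta/2)^2 \ge 1-\delta$, which is the claimed joint success bound. The query count is $(2m+1)N_S$ for the symplectic stage (one vacuum probe plus $2m$ coherent probes, each measured $N_S$ times) and $N_r$ for the displacement stage, for a total of $(2m+1)N_S + N_r$. I expect the main subtlety to be precisely this conditioning step: the displacement accuracy depends on the \emph{random} estimate $\tilde S$, and the argument goes through only because on $E_S$ the mismatch $\|\Delta\|_\infty$ is deterministically controlled and the relevant sample-size factor is monotone in $\|\Delta\|_\infty$, which lets one apply \cref{thm:learn-r-nu-only} at the worst-case value rather than tracking the full distribution of $\Delta$.
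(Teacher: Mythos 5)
Your proposal is correct and follows essentially the same route as the paper's proof: the paper likewise splits the failure budget as $\delta/2$ per stage, invokes \cref{prop:regularized-S-vac} to obtain the symplectic $\tilde S$, bounds $\norm{\Delta}_\infty \le 2z\varepsilon_S$ using $\varepsilon_S < 1/(2z)$ (via \cref{lem:add-to-mult}, where you equivalently use \cref{lem:inv-perturb-symplectic}(c)), and substitutes this worst-case value into \cref{thm:learn-r-nu-only} to get \cref{eq:Nr-explicit}. Your explicit conditioning argument for combining the two stages is a slightly more careful rendering of the paper's plain union bound, but it is the same argument in substance.
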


\begin{proof}
By~\cref{prop:regularized-S-vac}, using $(2m+1)N_S$ queries suffices to obtain a symplectic $\tilde S$ with
\begin{equation}
\Pr\left[
   \|\tilde S - S\|_{\infty} \le \varepsilon_S
\right] \ge 1-\frac{\delta}{2}.    
\end{equation}
This directly implies the requirement in~\cref{eq:Ns-explicit}.

\medskip
Set $\Delta = \tilde S^{-1}S - \id$.  
By~\cref{lem:add-to-mult}, 
\begin{equation}
    \norm{\Delta}_\infty \le \frac{ z \varepsilon_S}{1- z \varepsilon_S} \le 2 z \varepsilon_S,
\end{equation}
where the last inequality uses $\varepsilon_S < 1/(2z)$.  
Substituting this bound into the requirement of~\cref{thm:learn-r-nu-only} yields~\cref{eq:Nr-explicit}.

\medskip
Each stage succeeds with probability at least $1-\delta/2$.  
By the union bound, the joint estimate $(\tilde{\mathbf r}, \tilde S)$ satisfies both error guarantees with probability at least $1-\delta$.  
The total query complexity is $(2m+1)N_S+N_r$.
\end{proof}
We now combine the results obtained so far to derive the final query complexity guarantee for learning Gaussian unitary channels under the energy-constrained diamond norm, via the procedure of~\cref{algo}.
 
\begin{theorem}[(Learning Gaussian unitaries in energy-constrained diamond norm via~\cref{algo})]
\label{thm:end-to-end-diamond}
Let $m\in\mathds{N}$, $z\ge 1$, $\bar n>0$, $\bar n_{\mathrm{in}}>0$, 
$\varepsilon\in(0,1)$, and $\delta\in(0,1)$ be known parameters as in~\cref{problem:tomography}.  Let $\eta>0$, $\varepsilon_S>0$, $\varepsilon_r>0$ and $\nu\ge 1$ such that
\begin{equation}
    \eta \le \sqrt{\bar{n}_{\mathrm{in}}}, \quad
    \nu \le 1 + \frac{\bar{n}_{\mathrm{in}}}{2m}, \quad
    \varepsilon_S \le \frac{\varepsilon^2}{2592 m z (\bar n+1)}, \quad
    \varepsilon_r \le \frac{\varepsilon}{2\sqrt{2}\sqrt{z^2\bar n+1}}, \label{eq:algo1-parameters}
\end{equation}
and let $N_S,N_r\in\mathds{N}$ satisfying
\begin{align}
    N_S & \ge \frac{324mz^6\bigl(\sqrt{2m}+\sqrt{2\log(2m/\delta)}\bigr)^{2}}{\eta^2 \varepsilon_S^2}, \label{eq:final-Ns}\\
    N_r &\ge \frac{\bigl(1+2\nu z \varepsilon_S + 6(\nu z \varepsilon_S)^2\bigr)\bigl(\sqrt{2m}+\sqrt{\log(2/\delta)}\bigr)^{2}}{\nu \varepsilon_r^2}. \label{eq:Final-Nr}
\end{align}
Then~\cref{algo} is such that:
\begin{itemize}
    \item \textbf{\emph{Given:}} black-box access to an unknown $m$-mode Gaussian unitary $G_{\mathbf r,S} = D_{\mathbf r} U_S$, where $D_{\mathbf r}$ is the displacement operator and $U_S$ is a symplectic Gaussian unitary specified by a symplectic matrix $S$ with operator norm 
    $\|S\|_\infty \le z$;
    \item \textbf{\emph{Using:}} a number of queries to $G_{\mathbf r,S}$ of
    \begin{equation}
        N_\textnormal{tot}\coloneqq(2m+1)N_S + N_r
    \end{equation}
    and only input states with mean photon number of at most $\bar n_{\mathrm{in}}$;
    \item \textbf{\emph{Outputs:}} estimators $\tilde{\mathbf r} \in \R^{2m}$ and $\tilde S \in \mathrm{Sp}_{2m}(\R)$; the corresponding Gaussian unitary channel $\tilde{\mathcal G} \coloneqq \mathcal D_{\tilde{\mathbf r}} \circ \mathcal U_{\tilde S}$ approximates the true channel $\mathcal G = \mathcal D_{\mathbf r} \circ \mathcal U_S$ and satisfies
    \begin{equation}
        \Pr\left[
            \frac{1}{2}\norm{\tilde{\mathcal{G}} - \mathcal{G}}_{\diamond,\bar n} 
            \le \varepsilon 
        \right] \ge 1-\delta,
    \end{equation}
    where the diamond norm is taken with respect to the mean photon number constraint $\bar n$.
\end{itemize}
\end{theorem}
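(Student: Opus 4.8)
The statement is a synthesis theorem: all the quantitative work has already been carried out in \cref{thm:end-to-end} (which controls the additive errors and the query count of \cref{algo}) and in \cref{metalemma} (which converts additive errors into the energy-constrained diamond distance), so the plan is to chain these two results and then separately certify the input-energy budget $\bar n_{\mathrm{in}}$. I would first confirm the admissibility conditions of \cref{thm:end-to-end}, namely $\varepsilon_S\in(0,1/(2z))$ and $\varepsilon_r\in(0,1)$: using $\varepsilon<1$, $m\ge1$ and $\bar n>0$, one has $\varepsilon_S\le \varepsilon^2/\bigl(2592\,m z(\bar n+1)\bigr)<1/(2z)$ and $\varepsilon_r\le \varepsilon/\bigl(2\sqrt2\sqrt{z^2\bar n+1}\bigr)<1$, as required. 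Since the bounds \cref{eq:final-Ns,eq:Final-Nr} on $N_S,N_r$ coincide verbatim with the hypotheses of \cref{thm:end-to-end}, that lemma guarantees that \cref{algo} outputs a genuinely symplectic $\tilde S\in\mathrm{Sp}_{2m}(\R)$ together with $\tilde{\mathbf r}$ obeying $\|\tilde S-S\|_\infty\le\varepsilon_S$ and $\|\tilde{\mathbf r}-\mathbf r\|_2\le\varepsilon_r$ with probability at least $1-\delta$, at a cost of $(2m+1)N_S+N_r$ queries.

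Next I would feed these two additive guarantees into \cref{metalemma}. The tolerances are chosen to meet precisely its hypotheses $\varepsilon_S\le \varepsilon^2/(2592\,m z(\bar n+1))$ and $\varepsilon_r\le \varepsilon/(2\sqrt2\sqrt{z^2\bar n+1})$ from \cref{eq:algo1-parameters}, and the fact that $\tilde S$ is exactly symplectic (supplied by the regularization step inside \cref{algo}) is what lets \cref{metalemma} apply; it then yields $\tfrac12\|\tilde{\mathcal G}-\mathcal G\|_{\diamond,\bar n}\le\varepsilon$ on the event that both additive bounds hold. Because \cref{metalemma} is a deterministic implication, that event contains the success event of \cref{thm:end-to-end}, so no further probability loss is incurred and $\Pr\bigl[\tfrac12\|\tilde{\mathcal G}-\mathcal G\|_{\diamond,\bar n}\le\varepsilon\bigr]\ge 1-\delta$, with the query count $(2m+1)N_S+N_r$ inherited unchanged.

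The only genuinely new computation is the verification that every probe fed to $G_{\mathbf r,S}$ respects the constraint $\bar n_{\mathrm{in}}$, which is what dictates the restrictions on $\eta$ and $\nu$. In the symplectic stage the probes are the vacuum (mean photon number $0$) and the coherent states $\ket{\eta e_i}$, which by \cref{eq_mean_photon} with $V=\id$ carry mean photon number $\tfrac12\|\eta e_i\|_2^2=\eta^2/2\le\bar n_{\mathrm{in}}/2\le\bar n_{\mathrm{in}}$ once $\eta\le\sqrt{\bar n_{\mathrm{in}}}$. In the displacement stage the source is $\ket\nu^{\otimes m}$, whose covariance has trace $4m(2\nu-1)$ by \cref{eq_tmsv}, so \cref{eq_mean_photon} gives mean photon number $\tfrac14\bigl(4m(2\nu-1)-4m\bigr)=2m(\nu-1)$, which is at most $\bar n_{\mathrm{in}}$ exactly when $\nu\le 1+\bar n_{\mathrm{in}}/(2m)$.

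I expect the subtle point to be precisely this last accounting: the state actually entering $G_{\mathbf r,S}$ during displacement learning is $U_{\tilde S^{-1}}\ket\nu^{\otimes m}$ rather than the bare two-mode-squeezed source, so one must argue that the known preprocessing does not silently violate the budget that $\nu\le 1+\bar n_{\mathrm{in}}/(2m)$ secures for the prepared probe. Here I would lean on the passive, offline-squeezing reformulation of \cref{sec:no-activesq}, where $U_{\tilde S^{-1}}\ket\nu^{\otimes m}$ is realized as a fixed squeezed source followed by an energy-preserving passive interferometer, so that the operative energy is that of the prepared squeezed state while the heterodyne statistics assumed throughout are reproduced exactly. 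Everything else is substitution: the constants in \cref{eq:final-Ns,eq:Final-Nr} are carried over verbatim from \cref{thm:end-to-end}, and the only remaining inequalities to recheck are the smallness conditions (e.g.\ $z\varepsilon_S<1/2$, used through \cref{lem:inv-perturb-symplectic}), all of which follow from the quantitative choices in \cref{eq:algo1-parameters}.
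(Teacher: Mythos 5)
Your main chain coincides with the paper's proof: the paper likewise invokes \cref{thm:end-to-end} (whose hypotheses are met verbatim by \cref{eq:final-Ns,eq:Final-Nr}) to obtain $\|\tilde S-S\|_\infty\le\varepsilon_S$ and $\|\tilde{\mathbf r}-\mathbf r\|_2\le\varepsilon_r$ jointly with probability $1-\delta$ at cost $(2m+1)N_S+N_r$, feeds these deterministically into \cref{metalemma} via the tolerances in \cref{eq:algo1-parameters} with no further probability loss, and closes with the same energy accounting, $\eta^2\le\bar n_{\mathrm{in}}$ for the coherent probes and $2m(\nu-1)\le\bar n_{\mathrm{in}}$ for $\ket{\nu}^{\otimes m}$ via \cref{eq_mean_photon}. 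Your explicit verification of the admissibility conditions $\varepsilon_S<1/(2z)$ and $\varepsilon_r<1$ is a welcome bit of care the paper leaves implicit, and your value $\eta^2/2$ for the coherent-probe photon number is in fact the correct evaluation of \cref{eq_mean_photon} (the paper writes $\eta^2$, a harmless overestimate); either way $\eta\le\sqrt{\bar n_{\mathrm{in}}}$ suffices.

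The one substantive flaw is in your final paragraph. You are right to flag that the state entering $G_{\mathbf r,S}$ in the displacement stage is $U_{\tilde S^{-1}}\ket{\nu}^{\otimes m}$ rather than the bare probe, but the resolution you propose does not work: in \cref{sec:no-activesq} one has $U_{\tilde S^{-1}}\ket{\nu}^{\otimes m}=U_{O_1}\ket{Z}$ with $U_{O_1}$ \emph{passive}, and passive unitaries preserve photon number, so the offline-prepared state $\ket{Z}$ carries exactly the same mean photon number as $U_{\tilde S^{-1}}\ket{\nu}^{\otimes m}$. That reformulation removes online active squeezing; it does not shrink the energy budget. Concretely, the reduced covariance on the signal modes is $(2\nu-1)\,\tilde S^{-1}\tilde S^{-\top}$, whose mean photon number can exceed the bare value by a factor of order $\|\tilde S\|_\infty^2\lesssim z^2$. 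Note that the paper's own proof makes no attempt at this stronger accounting either: it budgets only the bare resource states $\ket{\eta e_i}$ and $\ket{\nu}^{\otimes m}$, which is precisely what calibrates the constraints on $\eta$ and $\nu$ in \cref{eq:algo1-parameters}. So under the reading of \cref{problem:tomography} in which the $\bar n_{\mathrm{in}}$ constraint applies to the state after the known preprocessing $U_{\tilde S^{-1}}$, your passive-optics appeal would fail and should be replaced by an explicit requirement of the form $\bigl((2\nu-1)\Tr[\tilde S^{-1}\tilde S^{-\top}]-2m\bigr)/4\le\bar n_{\mathrm{in}}$, tightening the admissible $\nu$ by a $z$-dependent factor; under the paper's reading (budgeting the prepared probes), your extra paragraph is simply unnecessary and can be deleted.
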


\begin{proof}
By~\cref{thm:end-to-end}, the algorithm outputs $(\tilde{\mathbf r}, \tilde S)$, with $\tilde S$ symplectic, such that
\begin{equation}\label{eq_00000}
    \Pr\left[\norm{\tilde{\mathbf r}-\mathbf r}_2 \le \varepsilon_r \;\; \text{and} \;\; \norm{\tilde S-S}_{\infty} \le \varepsilon_S\right] \ge 1-\delta,
\end{equation}
while using 
\[
    N_\textnormal{tot} = (2m+1)N_S + N_r
\]
queries of the unknown Gaussian unitary, where $N_S$ and $N_r$ are defined in~\cref{eq:final-Ns,eq:Final-Nr}, respectively.  
Then, by~\cref{metalemma}, the guarantee in~\cref{eq_00000}, together with the conditions 
\[
    \varepsilon_S \le \frac{\varepsilon^2}{2592 m z (\bar n+1)} 
    \quad \text{and} \quad 
    \varepsilon_r \le \frac{\varepsilon}{2\sqrt{2}\sqrt{z^2\bar n+1}},
\]
implies directly that
\[
    \Pr\left[
        \frac{1}{2}\norm{\tilde{\mathcal{G}} - \mathcal{G}}_{\diamond,\bar n} 
        \le \varepsilon 
    \right] \ge 1-\delta.
\]

It remains to discuss the constraint on the mean photon number $\bar{n}_{\text{in}}$ of the input states employed in~\cref{algo}.  
As described there, estimating the symplectic component requires coherent states of amplitude $\eta$, each with mean photon number $\eta^{2}$ (see~\cref{eq_mean_photon}).  
On the other hand, estimating the displacement component relies on the state $\ket{\nu}^{\otimes m}$, namely the tensor product of $m$ two-mode squeezed vacuum states, each with covariance matrix given in~\cref{eq_tmsv}. By~\cref{eq_mean_photon}, the mean photon number of $\ket{\nu}^{\otimes m}$ is $2m(\nu-1)$.  Therefore, we must enforce the constraints 
\[
    \eta^{2} \leq \bar{n}_{\mathrm{in}}
    \quad \text{and} \quad 
    2m(\nu-1) \leq \bar{n}_{\mathrm{in}}.
\]
Equivalently, this yields 
\[
    \eta \le \sqrt{\bar{n}_{\mathrm{in}}}
    \quad \text{and} \quad 
    \nu \le 1 + \frac{\bar{n}_{\mathrm{in}}}{2m},
\]
which completes the proof.
\end{proof}

We now specialize the general query complexity bound of 
\cref{thm:end-to-end-diamond} to a concrete choice of parameters, and highlight its behavior in the limit of infinite input energy.
\begin{remark}[(Query complexity of~\cref{algo})]
\label{remark:scaling-probepower-nu}
Let $m\in\mathds{N}$, $z\ge 1$, $\bar n>0$, $\bar n_{\mathrm{in}}>0$, $\varepsilon\in(0,1)$, and $\delta\in(0,1)$ be known parameters as in~\cref{problem:tomography}. Assume that $\bar{n}_{\mathrm{in}}\ge (2m)^{4/3}$. Applying~\cref{thm:end-to-end-diamond} with the parameter choices  
\begin{equation}
    \eta \coloneqq \sqrt{\bar{n}_{\mathrm{in}}}, 
    \quad \nu \coloneqq \bar{n}_{\mathrm{in}}^{1/4}+1, 
    \quad \varepsilon_S \coloneqq\frac{\varepsilon^2}{2592mz(\bar n+1)(\bar{n}_{\mathrm{in}} +1)^{1/4}}, \quad \varepsilon_r \coloneqq \frac{\varepsilon}{2\sqrt{2}\sqrt{z^2\bar n+1}},
\end{equation}
we find that~\cref{algo} solves~\cref{problem:tomography} with a total query complexity $N_\mathrm{tot} = (2m+1)N_S + N_r$, where $N_S,N_r \ge 1$ are integers satisfying
\begin{align}
    N_S & = \Theta\left(\frac{m^3z^8(\bar{n}+1)^2{(\bar{n}_{\mathrm{in}}+1)^{1/2}}\bigl(\sqrt{m}+\sqrt{\log(m/\delta)}\bigr)^2}{\bar{n}_{\mathrm{in}}\varepsilon^4}\right), \label{eq:N_S-sample}\\
    N_r & = \Theta\left(\frac{(z^2\bar{n}+1)\bigl(\sqrt{m}+\sqrt{\log(1/\delta)}\bigr)^2}{(1+\bar{n}_{\mathrm{in}}^{1/4})\varepsilon^2}\right). \label{eq:N_r-sample}  
\end{align}
Furthermore, in the limit of infinite input energy, $\bar{n}_{\mathrm{in}} \to \infty$, both $N_S$ and $N_r$ converge to $1$, and hence the total query complexity of~\cref{algo} reduces to $N_\mathrm{tot} = 2m + 2$.
\end{remark}

Finally, we record a remark about the overall time efficiency of~\cref{algo}.
\begin{remark}[(Time efficiency)]\label{rem:time-efficiency}
% The total running time of \cref{algo} is dominated by the computation of $\tilde{S}$. 
The measurement steps and the construction of $\tilde{\mathbf{r}}$ require only $\mathcal{O}(m)$ time, 
while each matrix multiplication can be performed in $\mathcal{O}(m^\omega)$ time, 
where $\omega$ denotes the matrix multiplication exponent. 
The most expensive subroutine is the computation of the principal matrix root, 
which can be carried out in $\mathcal{O}(m^{3})$ time using, for instance, the Schur decomposition~\cite{higham1987computing}. This computation dominates the symplectic regularization procedure, as the other steps are inversion and matrix multiplication, which each take less than $O(m^3)$ time.
% Since the query complexity is $\mathcal{O}(m^3)\cdot\mathsf{poly}(\squeezingparameter,\bar n,\bar{n}_\mathrm{in},1/\varepsilon)$, the cost of computing the matrix root asymptotically dominates the overall running time. 
Therefore, the time complexity of the algorithm is $O(m^3)$ plus the query complexity. In any case, we obtain a time complexity that is $\mathsf{poly}(m, \squeezingparameter,\bar n,\bar{n}_\mathrm{in},1/\varepsilon)$.
\end{remark}

\addcontentsline{toc}{section}{Acknowledgements}
\section*{Acknowledgements}
We thank Scott Aaronson, Mark Abate, Lennart Bittel, Jens Eisert, Daniel Stilck França, Michael Jaber, Hyun-Soo Kim, Ludovico Lami, Lorenzo Leone, Salvatore F.~E.~Oliviero, Jacopo Rizzo, Cambyse Rouzé, and Chirag Wadhwa for useful conversations.

V.~I. is funded by an NSF Graduate Research Fellowship. J.~L. is supported by the National Research Foundation of Korea (NRF) through grants funded by the Ministry of Science and ICT (Grant No.~RS-2024-00404854). A.~A.~M. acknowledges financial support from BMBF (FermiQP, MuniQC-Atoms, DAQC), BMWK (EniQmA), the Quantum Flagship (Millenion, PasQuans2). F.~A.~M. acknowledges financial support from the European Union (ERC StG ETQO, Grant Agreement no.\ 101165230). Views and opinions expressed are however those of the authors only and do not necessarily reflect those of the European Union or the European Research Council. Neither the European Union nor the granting authority can be held responsible for them.  F.~A.~M.~acknowledges financial support from the project: PRIN 2022 ``Recovering Information in Sloppy QUantum modEls (RISQUE)'', code 2022T25TR3, CUP E53D23002400006. F.~A.~M.~thanks California Institute of Technology for hospitality.

\addcontentsline{toc}{section}{References}
\bibliographystyle{unsrt}
\bibliography{biblio}

\newpage
\appendix

\section{Proof of~\cref{lem:moments}}\label{app:proof-moments}

Starting from $|\nu\rangle^{\otimes m}$ (produced by $S_\nu$ acting on vacuum),
we apply $U_{\tilde S^{-1}}$, then $G_{\mathbf r, S}=D_{\mathbf r}U_S$, and finally $U_{S_\nu}^\dagger$. Ignoring the displacement for the moment, the net symplectic acting in the Schr\"odinger picture is
\begin{equation}
    W = S_{\nu}^{-1}
    \begin{pmatrix}
        S & 0\\ 0 & \id
    \end{pmatrix}
    \begin{pmatrix}
        \tilde S^{-1} & 0\\ 0 & \id
    \end{pmatrix}
    S_{\nu}.    
\end{equation}
It is convenient to set $\Delta \coloneqq S\tilde S^{-1} - \id$ so that $S\tilde S^{-1} = \id + \Delta$. A direct block multiplication using the explicit form of $S_{\pm\nu}$ gives
\begin{align}
W &= S_{\nu}^{-1}
    \begin{pmatrix}
        \id + \Delta & 0\\ 0 & \id
    \end{pmatrix}
    S_{\nu}
=
\begin{pmatrix}
    W_{11} & W_{12}\\
    W_{21} & W_{22}
\end{pmatrix},\\
W_{11} &= \nu(\id+\Delta) - (\nu-1)\id = \id + \nu\Delta,\\
W_{12} &= \sqrt{\nu(\nu-1)}\big((\id+\Delta)Z - Z\big) = \sqrt{\nu(\nu-1)}\Delta Z,\\
W_{21} &= \sqrt{\nu(\nu-1)}\big(-Z(\id+\Delta) + Z\big) = -\sqrt{\nu(\nu-1)}Z\Delta,\\
W_{22} &= -(\nu-1)Z(\id+\Delta)Z + \nu \id = \id - (\nu-1)Z\Delta Z.
\end{align}
Here we used $Z^\top=Z$ and $Z^2=\id$ repeatedly.

\paragraph{First moments.}
Including the displacement from $G_{\mathbf r, S}=D_{\mathbf r}U_S$,
the output mean vector is 
\begin{equation}
    \mathbf{m} \bigl(U_{S_\nu}^\dagger G_{\mathbf r, S}U_{\tilde{S}^{-1}}\ket{\nu}^{\otimes m}\bigr)
    = S_{\nu}^{-1}\,(\mathbf r,\mathbf 0)
    = \big(\sqrt{\nu}\,\mathbf r, -\sqrt{\nu-1}Z\mathbf r\big)    
\end{equation}

\paragraph{Second moments.}
Since $|\nu\rangle^{\otimes m}$ is obtained from vacuum by $U_{S_\nu}$,
we can write the output covariance as 
\begin{equation}
    V\bigl(U_{S_\nu}^\dagger G_{\mathbf r, S}U_{\tilde{S}^{-1}}\ket{\nu}^{\otimes m}\bigr) = WW^\top=\begin{pmatrix} A & C\\[2pt] C^\top & B\end{pmatrix}.
\end{equation} 
Using the block form above and $Z^\top=Z$, $Z^2=\id$, we obtain
\begin{align}
A & = W_{11}W_{11}^\top + W_{12}W_{12}^\top \nonumber \\
& = (\id+\nu\Delta)(\id+\nu\Delta)^\top + \nu(\nu-1)\Delta\Delta^\top
\nonumber\\
&= \id + \nu(\Delta+\Delta^\top) + \nu(2\nu-1)\Delta\Delta^\top,
\\[4pt]
B &= W_{21}W_{21}^\top + W_{22}W_{22}^\top \nonumber\\
&= \nu(\nu-1)\,Z\Delta\Delta^\top Z + \id - (\nu-1)Z(\Delta+\Delta^\top)Z + (\nu-1)^2Z\Delta\Delta^\top Z
\nonumber\\
&= \id - (\nu-1)Z(\Delta+\Delta^\top)Z + (\nu-1)(2\nu-1)Z\Delta\Delta^\top Z, \nonumber \\
&= \id - (\nu-1)(\Delta+\Delta^\top) + (\nu-1)(2\nu-1)\Delta\Delta^\top.
\\[4pt]
C &= W_{11}W_{21}^\top + W_{12}W_{22}^\top \nonumber\\
&= (\id+\nu\Delta)\big(-\sqrt{\nu(\nu-1)}\Delta^\top Z\big) + \sqrt{\nu(\nu-1)}\Delta Z\big(\id-(\nu-1)Z\Delta Z\big)^\top
\nonumber\\
&= \sqrt{\nu(\nu-1)}\Big[\Delta Z - \Delta^\top Z - (2\nu-1)\Delta\Delta^\top Z\Big]
\nonumber\\
&= \Big[-(2\nu-1)\sqrt{\nu(\nu-1)}\Delta\Delta^\top - \sqrt{\nu(\nu-1)}\Delta^\top + \sqrt{\nu(\nu-1)}\Delta\Big]Z.
\end{align}

\end{document}